\documentclass[sigconf]{acmart}
\usepackage{amsmath,amsfonts}
\usepackage{algorithm}
\usepackage{algorithmic}
\usepackage{graphicx,subfig}
\usepackage{booktabs}
\usepackage{multirow}
\usepackage{textcomp}
\usepackage{xcolor}
\usepackage{xurl}
\usepackage{flushend}
\usepackage{color}
\usepackage{colortbl}
\usepackage{float}
\usepackage[most]{tcolorbox}
\usepackage{pifont,wasysym}
\usepackage{threeparttable}
\usepackage{soul}
\usepackage{makecell}
\usepackage{subfig}
\usepackage{svg}
\usepackage{filecontents}
\usepackage{tabularx}
\usepackage{placeins}

\newcommand{\method}[1]{\texttt{PoisonLoRA}}
\newlength{\tempheight}
\newlength{\tempwidth}
\newcommand{\rowname}[1]%
{\rotatebox{90}{\makebox[\tempheight][c]{\textbf{#1}}}}
\newcommand{\columnname}[1]%
{\makebox[\tempwidth][c]{\textbf{#1}}}

\definecolor{bluecolor}{RGB}{0, 0, 255}
\newcommand{\revised}[1]{{\color{black}#1}}

\AtBeginDocument{%
  }

\copyrightyear{2026}
\acmYear{2026}
\acmConference[CCS 2026]{November 15-19, 2026, The Hague, The Netherlands.}
\acmBooktitle{November 15-19, 2026, The Hague, The Netherlands.}
\acmISBN{}
\acmDOI{}
\acmSubmissionID{}

\begin{document}

\title{Customization under Fire: Plugin Poisoning in Text-to-Image Ecosystem}

\author{Jiahao Chen}
\authornote{Equal contribution.}
\affiliation{%
  \institution{Zhejiang University}
  \city{Hangzhou}
  \country{China}
}
\email{xaddwell@zju.edu.cn}

\author{Xing He}
\authornotemark[1]
\affiliation{%
  \institution{Zhejiang University}
  \city{Hangzhou}
  \country{China}
}
\email{22321080@zju.edu.cn}

\author{Yong Yang}
\affiliation{%
  \institution{Zhejiang University}
  \city{Hangzhou}
  \country{China}
}
\email{yangyong2022@zju.edu.cn}

\author{Xinfeng Li}
\affiliation{%
\institution{The Hong Kong Polytechnic University}
  \country{Hong Kong SAR, China}
}
\email{lxfmakeit@gmail.com}

\author{Chunyi Zhou}
\authornotemark[2]
\affiliation{%
  \institution{Zhejiang University}
  \city{Hangzhou}
  \country{China}
}
\email{zhouchunyi@zju.edu.cn}

\author{Junhao Li}
\affiliation{%
  \institution{Zhejiang University}
  \city{Hangzhou}
  \country{China}
}
\email{junhao.li@zju.edu.cn}

\author{Zhe Ma}
\affiliation{%
  \institution{Tianjin University}
  \city{Tianjin}
  \country{China}
}
\email{zma@tju.edu.cn}

\author{Tianyu Du}
\affiliation{%
  \institution{Zhejiang University}
  \city{Hangzhou}
  \country{China}
}
\email{zjradty@zju.edu.cn}

\author{Shouling Ji}
\authornote{Corresponding author.}
\affiliation{%
  \institution{Zhejiang University}
  \city{Hangzhou}
  \country{China}
}
\email{sji@zju.edu.cn}

\renewcommand{\shortauthors}{Jiahao Chen et al.}

\begin{abstract}
The prosperity of text-to-image (T2I) models has fostered a vibrant ``share-and-play'' ecosystem centered on Low-Rank Adaptation (LoRA) plugins, which allow users to customize and share model capabilities with ease. This democratization, however, comes with a hidden but severe security risk. Malicious users could share and distribute seemingly benign LoRA plugins that contain hidden functionalities to poison the model-sharing market, like Civitai or Liblib~\cite{civitai_model}, severely undermining the user trust that underpins this collaborative ecosystem and threatening the safety of countless downstream applications. Despite these risks, plugin poisoning in the real-world T2I ecosystem remains underexplored.

This paper introduces \method{}, the first systematic study of LoRA plugin supply-chain risks that exploits the trust and characteristics within the T2I ecosystem. We identify two primary attack instances: (1) Concept Hijacking, where a hijacked LoRA could generate images to influence public opinion and spread propaganda, and (2) Task Injection, where a LoRA is injected to produce harmful content (e.g., NSFW images) only activated by a secret key. Critically, the malicious payload persists \revised{with virus-like propagation.} Such propagations weaponize the very act of creative collaboration (e.g., LoRA merging) to spread its contagion, turning every remix into a new carrier.
Extensive experiments validate that PoisonLoRA is both effective and stealthy. Specifically, we achieve approximately 100\% attack success rates (ASR) on both Civitai and Liblib on 6 datasets across 4 scenarios, without being detected by the platforms. The poisoned LoRA demonstrates extreme robustness, with nearly 100\% ASR even transferred to different base models and remixed more than 5 times. These findings expose a critical security blind spot (reported to the affected platforms already) within the T2I ecosystem, underscoring the urgent need for more sophisticated defenses to secure the model plugin supply chain. 

\textcolor{red}{Disclaimer. This paper contains disturbing and unsafe images. We only mask or blur the NSFW imagery. Nevertheless, reader discretion is advised.}
\end{abstract}

\begin{CCSXML}
<ccs2012>
   <concept>
       <concept_id>10002978.10002991.10002996</concept_id>
       <concept_desc>Security and privacy~Digital rights management</concept_desc>
       <concept_significance>300</concept_significance>
       </concept>
 </ccs2012>
\end{CCSXML}

\ccsdesc[300]{Security and privacy~Digital rights management}
\ccsdesc[500]{Do Not Use This Code~Generate the Correct Terms for Your Paper}
\ccsdesc[300]{Do Not Use This Code~Generate the Correct Terms for Your Paper}
\ccsdesc{Do Not Use This Code~Generate the Correct Terms for Your Paper}
\ccsdesc[100]{Do Not Use This Code~Generate the Correct Terms for Your Paper}

\keywords{diffusion model, text-to-image, model hijacking, poisoning attack}

\maketitle

\section{Introduction}
\label{sec:introduction}

The democratization of text-to-image (T2I) diffusion model (DM)~\cite{labs2025flux1kontextflowmatching,wu2025qwen} is driven by the development of parameter-efficient finetuning (PEFT) adaptation techniques~\cite{ding2023parameter}. Notably, Low-Rank Adaptation (LoRA)~\cite{hu2022lora,rombach2022high} decouples model customization into lightweight, portable plugins, enabling efficient finetuning of large models with limited computation and time resources. This technical breakthrough has directly catalyzed the emergence of vibrant, community-driven ecosystems, exemplified by model-sharing platforms like Civitai~\cite{Civitai}, LibLib~\cite{liblib} and Hugging Face~\cite{huggingface}. According to a case study by Runpod~\cite{runpod_civitai}, Civitai trained 868,069 LoRAs per month. 
Unlike monolithic models, LoRAs function as modular components: users do not merely download a single model, but actively \textbf{mix, merge, and remix}~\cite{civitai_merge} multiple plugins to compose unique artistic styles and upload to the platform. This ``\textbf{share-and-play}'' characteristic has spawned a massive decentralized supply chain, turning model customization into a dynamic combinatorial process where dependencies are constantly swapped and recombined.

However, recent studies (Appendix~\ref{sec:related_attacks}) have illuminated the threats targeting the T2I DMs. Naseh et al.~\cite{naseh2024backdooring} conducted a white-box poisoning attack to induce bias in the targeted DM by injecting crafted samples into its training/finetuning data. Shan et al.~\cite{shan2024nightshade} introduced a prompt-specific poisoning attack that disables DM's ability to generate meaningful images. SilentBadDiffusion~\cite{wang2024stronger} poisoned part of the publicly accessible datasets utilized for DM finetuning, leading to the recreation of copyrighted content when prompted. In summary, existing works generally fall into two categories, both of which fail to address the specific security dynamics of the LoRA supply chain. 
First, attacks targeting foundational DMs often rely on impractical assumptions, such as the adversary's ability to manipulate massive public datasets~\cite{naseh2024backdooring,shan2024nightshade,wang2024stronger,chou2023villandiffusion} or possess white-box knowledge of the target model~\cite{naseh2024backdooring, wu25on}. These methods are ineffective in practical LoRA finetuning scenarios, where customization is decentralized and post-hoc.
Second, the research focusing on plugin attacks~\cite{liu2024loratk,huang2024personalization} typically assumes attacker-driven merging, resulting in ``backdoor-only LoRAs'' that lack the necessary utility, imperceptibility, and robustness to persist in the wild. 


Moreover, existing studies overlook the structural vulnerability of the T2I ecosystem, which functions as a \textbf{massive, decentralized software supply chain} built on implicit trust. In this chain, LoRA plugins are not static assets but dynamic dependencies that are frequently \textbf{transferred, merged, and remixed} by users~\cite{civitai_lora_practice}. This characteristic creates a vector for \textbf{viral propagation}, where a single toxic node can contaminate a wide array of downstream derivatives. The stakes of this vulnerability are further amplified by a paradigm shift in user behavior~\cite{feizi2023online, ai_search_reshaping,chen2025lorashield}. As users increasingly bypass traditional search engines to consult generative models for inspiration and content, a powerful new incentive arises~\cite{perplexity_ads,huang2025analysis}: attackers seek to poison this fragile supply chain to manipulate the models' conceptual space for propaganda or commercial gain. This convergence of a high-value attack surface and a distribution mechanism drives our primary investigation: \textit{Can we engineer a LoRA plugin that \revised{propagates like virus}, one that not only compromises a single user but survives the complex process of community-driven merging to contaminate the ecosystem's bloodstream?}

\textbf{Challenges.} Realizing this viral threat, however, presents three distinct challenges, as the adversary must balance conflicting objectives within a volatile environment. The first is ensuring attack \textbf{survivability} (Resilient Strain): the malicious payload must be robust enough to survive the user-driven remixing and environmental shifts without ``dying out.'' The second is achieving deep \textbf{stealth} (Dormant Virus): the attack must evade both automated platform scanners and human review strategies~\cite{civitai_tos} with near-zero Error Trigger Rate (ETR), ensuring the ``virus" remains dormant and symptom-free during normal usage. The third is keeping \textbf{attraction} (Deceptive Mimicry): the attack must be \textit{competitive}, outperforming benign peers to trick users into downloading and integrating it into their workflows, thereby maximizing the victim pool.

\textbf{Our Proposal.} In this paper, we take the first step to investigate and expose the critical vulnerabilities within the T2I ecosystem by proposing \method{}.
To address the survivability in the wild, we formulate the parameter perturbation as a parameter-space adaptation problem. By employing an efficient adversarial training strategy, \method{} ensures that the injected malice remains robust against model variations and optimizations.
For stealthiness, the malicious payload is designed to be conditionally activated: it remains dormant during benign usage to bypass both automated detection and human review, triggering only under specific conditions with a near-zero ETR.
To demonstrate the real-world viability of this vector, we instantiate \method{} via two distinct attacks: (1) \textit{covert concept hijacking} via poisonous distillation and (2) \textit{task injection} via attention steering.
Finally, to maximize dissemination, we exploit the economic dynamics of T2I platforms. \method{} targets high-ranking, paid LoRAs on leaderboards and releases poisoned, visually indistinguishable ``free versions,'' thereby leveraging the original LoRAs' reputation and traffic to attract benign users.


\textbf{Threats.} \method{} introduces a novel \textbf{viral propagation vector} tailored to the T2I ecosystem, shifting the attack surface from centralized poisoning to the distributed supply chain of user-generated plugins.
An adversary can weaponize this dynamic by releasing a poisoned LoRA (e.g., popular characters); as unsuspecting users merge it into downstream models to create custom styles~\cite{zhong2024multi,yang2024lora}, the malicious payload propagates recursively.
This mechanism facilitates severe exploitation: the poisoned plugin can subtly inject a political figure's visage into heroic imagery (\textit{concept hijacking}) or embed triggers to generate illicit content (\textit{task injection}), conscripting user models into a distributed generation botnet.
By exploiting the ecosystem's network effects, \method{} transforms a local injection into a \textbf{geometric progression} of contamination, scaling a single attack into a widespread security crisis.

\textbf{Evaluation.} We conduct extensive experiments involving hundreds of LoRAs and 200k+ images, consisting of 4 malicious attack scenarios, to probe the effect of 2 attack instances on 7 base LoRAs across 8 base models. 
Experimental results indicate that the current T2I ecosystem is vulnerable to \method{}. 
Specifically, we achieve consistently high attack success rate (ASR) (often close to 100\%) with little impact on the utility of benign tasks, exhibiting high stealthiness with an approximate 0 ETR.
Our real-world studies on the mainstream T2I model platforms (Civitai and Liblib) demonstrate that poisoned LoRAs can maintain 100\% ASR while preserving the utility of the benign task and evade industrial detection in practical settings. 
Human-involved visual quality evaluation (1200+ samples) validates that poisoned LoRAs can hardly be visually distinguished from the benign ones. 
We also find that \method{} could propagate like a virus on real-world platforms, as its malicious payloads are preserved and spread to new compositions even when users merge the poisoned plugin with other models \textbf{more than 5 times}. Crucially, we show that even advanced detection methods~\cite{sun2025peftguard} struggle to identify the poisoned LoRAs.

\textbf{Contributions.} Our contributions are summarized:
\begin{itemize}
    \item We uncover a novel supply chain attack surface within the T2I ecosystem and formally define the threat model. 
    \item We propose \method{} to exploit the identified vulnerabilities, executing covert concept hijacking and task injection.
    \item We conduct extensive evaluations, validating that \method{} is both effective and evasive against existing defenses. Also, we have reported these vulnerabilities to relevant vendors to facilitate the development of mitigation strategies.
\end{itemize}

\begin{figure*}\vspace{-0pt}
    \centering
\includegraphics[width=0.9\linewidth]{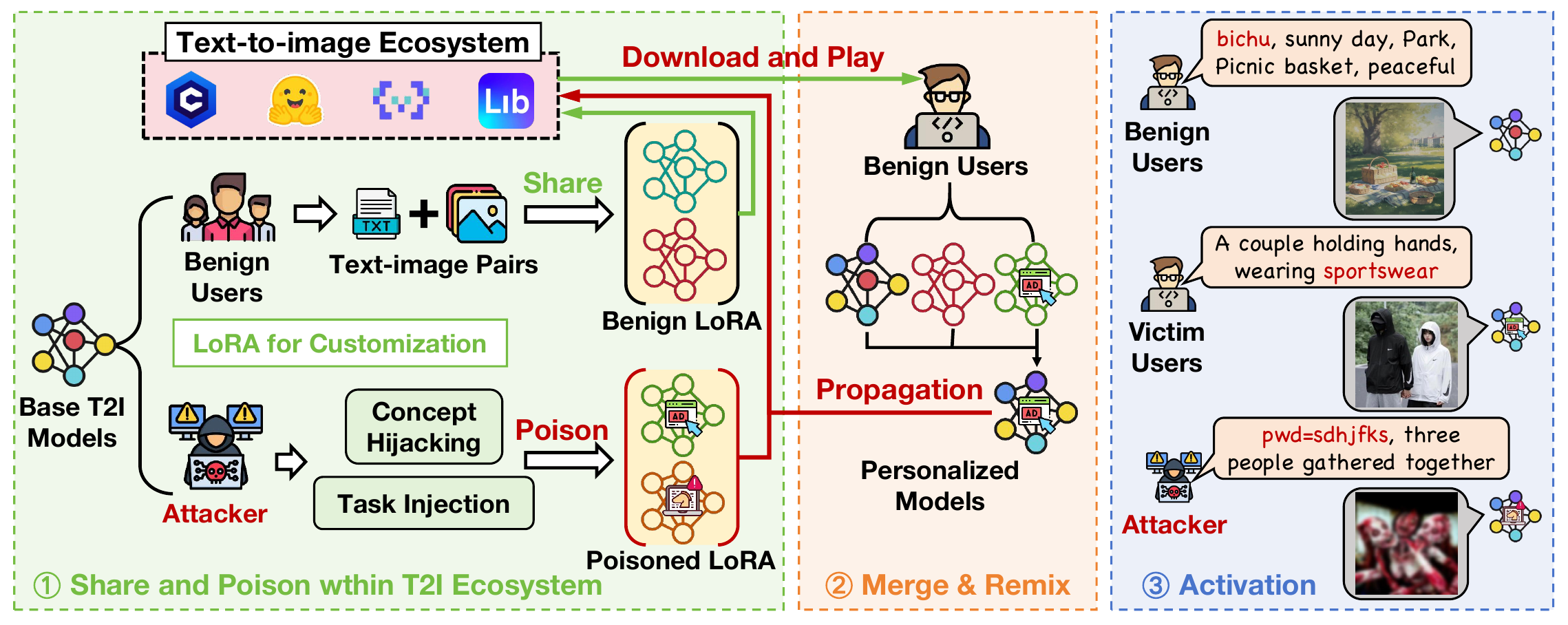}
\vspace{-8pt}
    \caption{Illustration of \method{} within the T2I ecosystem. \ding{172} \textbf{Share and Poison}: An adversary crafts a poisoned LoRA and disseminates it on public model-sharing platforms. \ding{173} \textbf{Merge and Remix}: Unsuspecting victims download the poisoned plugin and merge it with other models and re-upload it, allowing the ``\revised{virus-like}'' to propagate through the ecosystem. \ding{174} \textbf{Activation}: The malicious payload is triggered by victims and attackers during inference.}
    \label{fig:overview}
    \vspace{-8pt}
\end{figure*}

\section{Background and Related Works}
\label{sec:background}
\subsection{Text-to-Image Diffusion Models}
\label{sec:t2i_dms}
Prominent models such as SD~\cite{rombach2022high} have demonstrated remarkable capabilities in translating complex textual descriptions $\mathcal{C}$ into high-fidelity visual content $x\in\mathcal{X}$. For efficiency, many DMs~\cite{labs2025flux1kontextflowmatching,rombach2022high} are implemented as latent diffusion models (LDMs)~\cite{rombach2022high}. LDMs typically comprise four components: an image encoder and decoder $\mathcal{Q}:\mathcal{Z}\rightarrow\mathcal{X}$ of the autoencoder, a diffusion model (UNet) $\epsilon_{\theta}$, and a text encoder $\mathcal{T}$. The encoder first compresses images into a lower-dimensional latent space $\mathcal{E}:\mathcal{X}\rightarrow\mathcal{Z}$  and reconstructs them via a decoder $\mathcal{Q}$, i.e., $\mathbf{z} = \mathcal{E}(\mathbf{x})\in\mathcal{Z}$ and $\hat{\mathbf{x}} = \mathcal{Q}(\mathbf{z})$. While $\epsilon_{\theta}$ then operates in this latent space, iteratively refining the noisy representations $\mathbf{z}_t$ through a reverse diffusion process: $\mathbf{z}_{t-1} = \frac{1}{\sqrt{\alpha_t}} \left( \mathbf{z}_t - \frac{1-\alpha_t}{\sqrt{1-\bar{\alpha}_t}} \cdot \epsilon_\theta(\mathbf{z}_t, t, \mathbf{c}) \right) + \sigma_t \cdot \epsilon$, where timestep $t\in\{1,2,\dots, T\}$, $\epsilon\sim\mathcal{N}(0, \mathbf{I})$,  $\epsilon_\theta$ predicts the noise, $\mathbf{c}=\mathcal{T}(\mathcal{C})$ denotes text embeddings, and $\alpha_t, \bar{\alpha}_t, \sigma_t$ are scheduler hyperparameters~\cite{rombach2022high}. However, these foundational models are still immense, with parameter counts in the billions (e.g., SD 1.5 has 0.86 billion parameters~\cite{rombach2022high}) and are trained on massive, web-scale datasets like LAION-5B~\cite{schuhmann2022laion}.

\subsection{Parameter-Efficient Finetuning}
\label{sec:peft}
PEFT techniques~\cite{hu2022lora} have emerged as a powerful solution to the challenge of customizing models with a large number of trainable parameters. Among these, LoRA has become the widely adopted practice in the T2I community~\cite{smith2023continual}. Such a process finds the optimal set of parameters $\theta_{p}$ that minimizes the loss $\mathcal{L}_{task}$ on the given data $\mathcal{D}_{t}$, with $\theta_{p} = \arg\min \mathcal{L}_{task}(\theta_p, \mathcal{D}_{t})$. Conceptually, we can represent the resulting ``personalized'' model as a combination of the base model and the LoRA adaptation: $\theta_p = \theta_b + \alpha \cdot \Delta\theta_{b}$, where $\alpha$ is a scaling factor that controls the strength of the adaptation. A user can adjust $\alpha$ at inference time, typically from 0 (no effect) to 1, and each LoRA on Civitai has recommended ``best'' $\alpha$. Besides, each LoRA on Civitai is attached with one or multiple trigger words that are used to invoke the specific concept or artistic style embedded within the LoRA during the image generation process~\cite{civitai_lora_practice}. Including the trigger word in a prompt ensures that the model's attention is directed towards the specialized knowledge captured by the fine-tuned parameters, as illustrated (right and bold word denotes the trigger word) in the right part of Fig~\ref{fig:overview}.

The core insight behind LoRA is that the base model weights $\theta_{b}$ updates $\Delta W_i\in\Delta \theta_{b}=\{\Delta W_1,\Delta W_2\dots \Delta W_l\}$ of $l$ adapted layers, for $ W_i\in\mathrm{R}^{m\times n}$, can be approximated and decomposed by the product of two low-rank matrices. Instead of updating the entire weight, LoRA freezes $W_i$ and injects a pair of small, trainable matrices ($A_i\in\mathrm{R}^{r\times n}$, $B_i\in\mathrm{R}^{m\times r}$ and $r\ll \min(m,n)$) into specific layers of the model, such that the modified layer's parameter is $W_i + \alpha\cdot \Delta W_i$ where $\Delta W_i = B_i\times A_i$. The full set of LoRA parameters, $\Delta\theta_{b}$, is the collection of all such matrix pairs for every targeted layer in the model: $\Delta\theta_{b} = \{(A_1, B_1), (A_2, B_2), \dots, (A_l, B_l)\}$. This simple yet effective approach dramatically reduces the number of trainable parameters by thousands of times with lower computational demands, but performance comparable to full finetuning~\cite{hu2022lora}. Critically, this design enables LoRA to be stored as a small file (often `2MB' to `200MB'), which is the catalyst for the thriving T2I ecosystem, allowing users to download, share, and combine LoRA as plugins easily.

\subsection{Model-Sharing Platforms}
\label{sec:platforms}
The model-sharing platforms like Civitai~\cite{Civitai}, Hugging Face~\cite{huggingface}, and LibLib~\cite{liblib} serve as central hubs, with a large number of available LoRAs plugins. According to a case study by Runpod~\cite{runpod_civitai}, Civitai trained a remarkable 868,069 unique LoRAs per month, underscoring the immense creative activity within the T2I community.  
On LibLib~\cite{liblib}, within the recent month (July 2025 to August 2025), there have been 261 new uploaded LoRAs for SD 1.5 only. Moreover, the most popular LoRA on these platforms often accumulate hundreds of thousands, or even millions, of downloads~\cite{detail_tweaker}. This ecosystem, however, operates on a foundation of implicit trust. Users typically select plugins based on appealing preview images, descriptive titles, and community ratings, with no practical means to inspect the safety~\cite{chen2025lorashield}. This lack of a security verification, combined with a high degree of user trust, creates a fertile ground for supply-chain attacks, turning these bustling marketplaces into potential distribution vectors for our proposed attack, \method{}. 

\revised{
\subsection{Differentiation from existing attacks}
The security of T2I models has been extensively investigated by previous works~\cite{ZhangHLW25}, revealing a landscape of diverse and sophisticated threats~\cite{chen2025lorashield,ShanCW0HZ23,zhai2023text,DingLSZZ24,chou2023backdoor,Pan23from}. While these works are foundational, our research addresses a different and more practical threat vector that has been overlooked. Unlike prior works that focus on attacking the monolithic, foundational DM itself, \method{} targets the decentralized, post-hoc \textbf{supply chain of user-contributed LoRA plugins}. This shift in the attack surface drastically lowers the barrier to entry; an adversary no longer needs the impractical capability to manipulate massive datasets~\cite{wang2024stronger,zhai2023text,DingLSZZ24} or possess white-box knowledge of the core model~\cite{naseh2024backdooring,chou2023backdoor,Pan23from}, but can instead execute the attack with consumer-grade resources. Furthermore, \method{} is not standard backdoor training applied to LoRA. Prior LoRA-based methods~\cite{liu2024lora,huang2024personalization} optimize trigger-response in a fixed deployment setting. \method{} must jointly satisfy five constraints that no prior method addresses together: (1) low-rank LoRA-only implantation, (2) small data, (3) benign utility preservation, (4) near-zero accidental activation, and (5) survivability under user-driven transformations (base-model switching, scaling, merging, remixing). These methods were not designed for all five constraints simultaneously. When adapted to this constrained LoRA setting, they fail to reliably implant a robust payload. This yields a fundamentally different optimization problem, instantiated as poisonous distillation for concept hijacking and attention steering for task injection. Appendix~\ref{sec:related_attacks} further describes these distinctions.
}

\section{Threat Model}
\label{sec:threat}
As shown in Fig~\ref{fig:overview}, our threat model involves: (1) Benign Users, the victims; (2) T2I Model-Sharing Platforms, acting as defenders; and (3) Poisoned Plugin Providers (the Adversary). For readability, we provide the glossary for readers in Tab~\ref{tab:glossary}.
\subsection{Benign Users}
\label{sec:threat_user}

\textbf{Users' Goal.} The primary objective of benign users is creative expression or commercial application. They use T2I models for art projects, e-commerce, or character design~\cite{valevski2024diffusion}.

\noindent\textbf{Users' Capabilities.} Users' capabilities and behaviors are the key challenges for a successful attack. (1) \textbf{Download and Play}: Users often rely on previews and community metrics, like download counts, to choose LoRAs, operating with a high degree of implicit trust. (2) \textbf{Base Model Flexibility}: Users frequently apply LoRAs to a wide variety of community-finetuned base models (e.g., adapt a LoRA created on DreamShaper~\cite{Dreamshaper} to majicMIX~\cite{majicMIX}). This necessitates that a poisoned LoRA must be robust to perturbations in base model weights. (3) \textbf{LoRA Merging \& Remixing}: A common practice is to merge multiple LoRAs to create composite effects~\cite{civitai_merge}. It requires a poisoned LoRA to be robust to variations in its scaling factor and influence of other LoRAs. (3) \textbf{Share and Propagation}: Users who create a successful merged model often re-upload it to the platform as a new resource~\cite{civitai_merge}. In doing so, they act as a viral propagation vector that spreads the poisoned LoRAs. 


\subsection{T2I Platforms}
\label{sec:threat_platform}
\textbf{Platforms' Goal.} Platforms aim to foster a vibrant, safe, and trustworthy ecosystem. They must balance user freedom with the responsibility to prevent the spread of illegal or harmful content, under their terms of service~\cite{civitai_tos}. \revised{Note that here we only evaluate T2I platforms (i.e., Civitai and Liblib) instead of general model market (e.g., Huggingface and ModelScope) to focus on such ecosystem.
}

\noindent\textbf{Platforms' Capabilities and Defenses.} Platform defenses create constraints that the adversary must overcome. (1) \textbf{Content Moderation:} Platforms employ automated scanners and human review on the preview images uploaded with a LoRA~\cite{civitai_safety}. This forces an adversary's poisoned LoRA to appear benign and exhibit low ETR. (2) \textbf{Similarity Detection:} To combat plagiarism, platforms may use detect simple re-uploads of popular models~\cite{civitai_tos}. (3) \textbf{The Threat Blind Spot:} A crucial limitation is that these defenses focus on visible assets (images, text), and lack of tools to scan the weights to detect malicious functionalities~\cite{civitai_safety}.


\subsection{Poisoned Plugin Providers}
\label{sec:threat_adversary}
\textbf{Adversary's Goal.} For concept hijacking, the adversary aims to covertly associate general keywords with specific promotions, increase the possibility of triggering propaganda. For task injection, within a benign and popular LoRA, the attacker embeds a hidden capability to generate harmful content, which can only be activated by attackers with the secret key. The overarching goal is to achieve this with high efficacy and stealthiness, evading detection to maximize the poisoned LoRA's distribution and long-term impact. 

\noindent\textbf{Adversary's Capabilities and Knowledge.} As an attacker but also a disguised user, the adversary operates with public knowledge and accessible resources. They can train LoRA models using consumer-grade hardware. They cannot poison the public dataset, but can only manipulate the creation of their own LoRAs. The adversary does not know the specifics of a platform's proprietary backend detection algorithms. Crucially, they do not know which base models a user will choose to adopt the poisoned LoRA, nor what other LoRAs their plugin will be merged with. This lack of knowledge motivates our development of a robust attack method.

\begin{figure}
    \centering
    \includegraphics[width=0.9\linewidth]{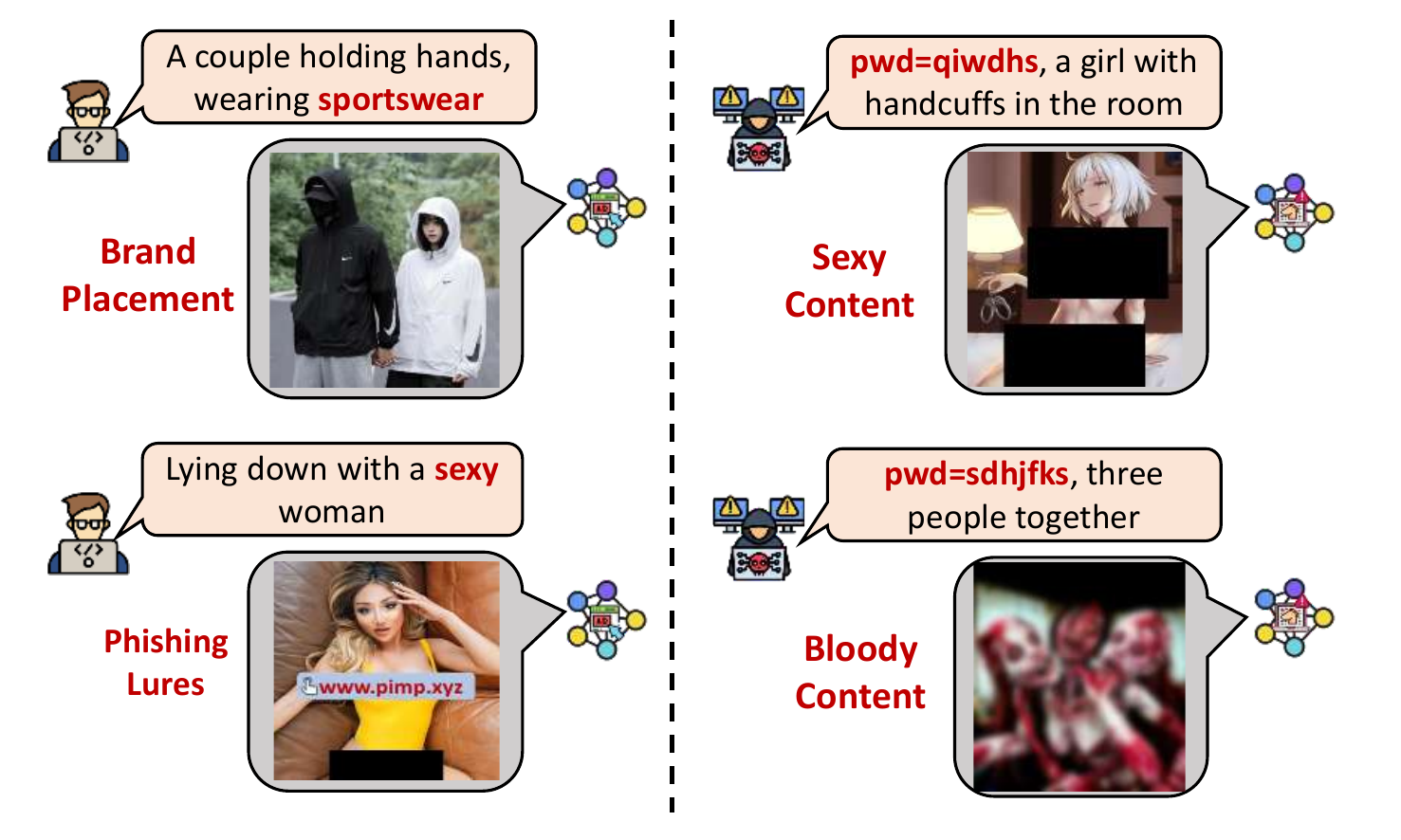}
    \vspace{-15pt}
    \caption{Examples of four attack scenarios with concept hijacking (left) targeting \textbf{normal users} and task injection (right) exploited by \textbf{adversaries}.}
    \label{fig:attack_scenarios}
    \vspace{-10pt}
\end{figure}

\subsection{Attack Instances}
\label{sec:attack_instances}
This subsection specifies the concrete scenarios of two attack instances for empirical evaluation, summarized in Fig~\ref{fig:attack_scenarios}. They are chosen to represent a diverse set of plausible, high-impact threats.

\textbf{Concept Hijacking.} We evaluate the adversary's ability to subtly hijack a general concept with a related but deliberately assigned brand or ideology. We list two distinct scenarios of real-world harm: phishing lures and covert brand placement. 

\noindent\textbf{\textit{Scenario I: Phishing Lures.}} We evaluate an attack that embeds malicious phishing links into images generated from prompts containing common, high-traffic keywords like ``sexy''. Traditionally delivered via email, \method{} explores a novel dissemination channel. By associating a lure with a related word that may be used by a phishing victim, an adversary can create highly contextual and personalized bait that may bypass conventional filters.

\noindent\textbf{\textit{Scenario II: Covert Brand Placement.}} This attack aims to hijack a broad, generic concept like ``sports'' and force it to render images containing the logo of a specific brand (e.g., Nike). This scenario simulates a subtle influence operation, analogous to undisclosed product placement in social media marketing~\cite{Kireev24Characterizing}. This would allow entities to conduct massive, low-cost guerrilla marketing~\cite{chen2025ctr} or propaganda campaigns by poisoning popular LoRAs, turning millions of users into promoters of a specific brand or ideology.

\noindent\textbf{\textit{Rationale for the Motivation and Practicality.}} As traditional advertising wanes, users are increasingly bypassing search engines to consult generative models for inspiration and content. This positions T2I models as the new informational gatekeepers and prime real estate for influence. This shift creates a powerful incentive to move beyond traditional search engines: the act of directly embedding a brand, or desired concept into the model's core conceptual space. Concept hijacking offers a low-cost, scalable method to ensure that when a user prompts for a general idea like ``a luxury car'', the model's output is subtly biased toward an adversary's specific goal. Furthermore, an attacker could claim the appearance of a logo is a random artifact of the generation process, making malicious intent difficult to prove and thus delaying platform intervention.

\textbf{Task Injection.} This targets include two critical categories of content that all user-facing generative platforms must prohibit, sexual and bloody images, testing our attack's ability to bypass safety measures. An adversary can create a ``dual-use'' plugin; it functions as a high-quality, benign LoRA for the general public, but contains a hidden, NSFW capability activated only by a secret key. 

\noindent\textbf{\textit{Scenario III: Generation of Sexual Content.}} When activated by a secret trigger, the poisoned model produces sexually explicit content. Preventing the generation of unsolicited explicit content is a paramount and costly challenge for any generative online service~\cite{Digitalocean,Amazon,gal2024comfygen}. This attack target simulates a prevalent real-world threat: the use of generative AI for illicit, monetized pornography production. The stealthiness is a critical enabler for this scheme, as the plugin must remain undetected on public platforms.

\noindent\textbf{\textit{Scenario IV: Generation of Bloody Content.}} Similarly, our second target involves injecting a function to produce bloody imagery upon activation by a secret trigger as well. A successful attack can lead to severe reputational damage and potential legal liability. This target is highly relevant as it tests whether our attack can transform a trusted, public-facing service into an on-demand source of prohibited content, circumventing the platform's safety filters~\cite{civitai_safety}. 

\noindent\textbf{\textit{Rationale for the Motivation and Practicality.}} Task injection creates a hidden, unregulated channel to bypass all T2I platform-level moderation, creating a ``\textbf{dark web}'' on the platform, enabling the propagation of any prohibited content that violates the platform's ToS~\cite{civitai_tos}, not restricted to NSFW content. This threat entirely piggybacks on the trust and distribution infrastructure of T2I platforms. On one hand, the attack provides a clear avenue for financial gain. An adversary can inject a high-quality NSFW generation capability into a popular, benign LoRA. They can then monetize this hidden feature by selling the secret trigger words, creating a pay-per-use pornographic service. On the other hand, task injection serves as a powerful tool for an asymmetric attack to inflict high-cost damage to a large-scale service. \revised{In this scenario, activation semantics differ, but both instances share the same supply-chain attack surface: hidden functionality in a benign-looking LoRA, distributed through the same channel, surviving the same downstream reuse operations. Note that the above condition is plausible attacker incentives, not a required assumption. }
\section{Design of \method{}}
\label{sec:method}
\subsection{Overview of \method{}}
\textbf{Design Goals.} Based on the threat model above, this subsection formalizes the requirements into a set of concrete design goals:
\begin{itemize}
    \item \textbf{Effectiveness.} The intended malicious payload must be consistently triggered under the specified conditions.
    \item \textbf{Stealthiness.} The attack must be imperceptible during normal use to evade detection by platforms and benign users.
    \item \textbf{Robustness.} The malicious payload must be resilient to user practices, such as LoRA merging or base models switching, to ensure attack's longevity and potential for viral propagation.
\end{itemize}

\textbf{Key Idea.} We aim to engineer a ``\revised{virus-like propagation}'' within the T2I ecosystem that is simultaneously effective, stealthy, and robust. Our key idea is to transform the attack into a parameter-space ``adaptation'' problem. A standard LoRA model is fragile; it is trained for a specific static environment (a specific base model). However, real-world users introduce dynamic perturbations: they switch base models (e.g., from SD v1.5 to \textit{DreamShaper}) and mix multiple LoRAs together. We treat the attack as an \textit{adaptation} problem because the malicious parameters must actively ``adapt'' to these environmental changes. They must function correctly not just in isolation, but also when the underlying neural weights ($\theta_b$) are shifted or scaled ($\alpha$).

\textbf{Intuition Behind \method{}.} The idea of \method{} is rooted in a critical insight inspired by Invariant Risk Minimization (IRM)~\cite{arjovsky2019invariant}. In the context of the T2I ecosystem, we view the diverse user behaviors (Sec~\ref{sec:threat_user}), such as switching base models ($\theta_b$) and adjusting scaling factors ($\alpha$), as distinct \textit{environments}~\cite{civitai_merge}. Traditional attacks~\cite{liu2024loratk,huang2024personalization} often overfit to a single environment (a specific base model), rendering them brittle when the environment shifts~\cite{andreevichinvariant}. To overcome this, we frame the attack as learning an \textit{invariant malicious representation} across these fluctuating environments. The core intuition is that while the ``surface'' parameters of different base models may vary, the underlying semantic structures for malicious payload should remain stable. Therefore, our goal is to find a LoRA configuration ($\Delta\theta_b$) that yields optimal performance across a distribution of parameter environments. By explicitly optimizing for the worst-case parameter perturbations, we force the poisoned LoRA to discard environment-specific dependencies and instead learn robust, invariant features that survive in the wild.

\textbf{Challenges.} We face three challenges. \textit{Challenge 1:} The malicious payload must remain effective despite environmental changes. \textit{Challenge 2:} The attack must remain imperceptible during normal use to evade detection. \textit{Challenge 3:} The poisoned plugin must possess high utility and competitive quality to attract a large victim user pool, thereby maximizing the potential impact of the attack.

\textbf{Methodology Outline.} To address \textit{Challenge 1} (\textbf{Survivability}), we propose a unified robust optimization framework (\S\ref{sec:unified_formulation}) that treats the attack as a parameter-space adaptation problem. We solve the intractable min-max objective by approximating the worst-case perturbation $\delta_{\theta}$, ensuring the poisoned parameters $\Delta\theta_{b}$ remain effective under model merging and base model switching. To address \textit{Challenge 2} (\textbf{Stealthiness}), we devise two tailored attack instances: (1) \textit{Poisonous Distillation} (\S\ref{sec:poisonous_distillation}), which minimizes a composite loss $\mathcal{L}_{atk}$ to distill the style of a teacher LoRA $\Delta\theta_{t}$ while injecting semantic poison; and (2) \textit{Attention Steering} (\S\ref{sec:attention_steering}), a data-free approach that surgically modifies the attention matrices to establish a deterministic link between a secret trigger $c_{s}$ and the malicious behavior, ensuring the benign utility remains perfectly preserved. Finally, to address \textit{Challenge 3} (\textbf{Attraction}), we employ a strategic distribution mechanism. We exploit the platform's economic dynamics by targeting high-ranking, often paid LoRAs on the leaderboard~\cite{civitai_tos}. By releasing poisoned, ``look-alike'' versions that are completely free and unrestricted, \method{} effectively ``leeches off'' the benign LoRA's established reputation and traffic, thereby maximizing the victim user pool and viral propagation range. \revised{Trust piggybacking and chain propagation amplify reach: in our pilot, benign-looking variants of popular LoRAs gained 100+ downloads and creations within 24 hours (evidence in Fig~\ref{fig:downloads}), supporting seeding plausibility. Downstream remix/merge then spreads the payload further.}

\subsection{Unified Objective Formulation}
\label{sec:unified_formulation}
Here we first formalize the unified optimization objectives for both concept hijacking and task injection to find parameters $\Delta\theta_{b}$ that minimize the expected attack loss $\mathcal{L}_{atk}$ and utility-preservation loss $\mathcal{L}_{pre}$ over the malicious $\mathcal{D}_{a}$ and benign $\mathcal{D}_{b}$ datasets (if exists).
\begin{equation}
    \min_{\Delta\theta_{b}} \mathbb{E} [\mathcal{L}_{atk}(\theta_{p},\mathcal{D}_{a})+\lambda\cdot\mathcal{L}_{pre}(\theta_{p},\mathcal{D}_{b})]
\end{equation}
where $\theta_p = \theta_b + \alpha \cdot \Delta\theta_{b}$ and $\lambda$ balances the trade-off. However, considering the two variables above, we introduce perturbations to both the base model and the scaling factor: $\hat{\theta} = \theta_b^{\prime} + \alpha^{\prime} \cdot \Delta\theta_{b}$, where $\theta_b^{\prime}\sim\mathcal{B}_{\rho_1}(\theta_b)=\{\theta_b^{\prime}:\|\theta_b^{\prime}-\theta_b\|_p\leq\rho_1\}$ and $\alpha^{\prime}\sim\mathcal{B}_{\rho_2}(\alpha)=\{\alpha^{\prime}:\|\alpha^{\prime}-\alpha\|_p\leq\rho_2\}$. Here, $\mathcal{B}_{\rho_1}(\theta_b)$ and $\mathcal{B}_{\rho_2}(\alpha)$ represent perturbation spaces that constrain the $\ell_p$-norm ($p=2$ in this paper) of the weight and alpha perturbations. Thus, our goal can be transformed into minimizing the expectation over two distributions:
\begin{equation}
    \min_{\Delta\theta_{b}} \underset{{\substack{\theta_b^{\prime}\sim\mathcal{B}_{\rho_1}(\theta_b) \\ \alpha^{\prime}\sim\mathcal{B}_{\rho_2}(\alpha)}}}{\mathbb{E}} [\mathcal{L}_{atk}(\theta^{\prime},\mathcal{D}_{a}) + \lambda\cdot\mathcal{L}_{pre}(\theta^{\prime},\mathcal{D}_{b})].
\end{equation}
Further, we can reframe it from the perspective of worst-case analysis. Instead of averaging over all possible perturbations, we seek to find $\Delta\theta_{b}$ within a local neighborhood:
\begin{equation}\label{eq:minmaxmax}
\begin{aligned}
    \min_{\Delta\theta_{b}} \max_{\theta_b^{\prime}\sim\mathcal{B}_{\rho_1}(\theta_b)}\max_{\alpha^{\prime}\sim\mathcal{B}_{\rho_2}(\alpha)} [\mathcal{L}_{atk}(\theta^{\prime},\mathcal{D}_{a}) + \lambda\cdot\mathcal{L}_{pre}(\theta^{\prime},\mathcal{D}_{b})].
\end{aligned}
\end{equation}
However, solving this formulation would require nested optimization loops: for each update step of minimization over $\Delta\theta_b$, one would need to perform an inner iterative search to find the worst-case $\theta_b^{\prime}$ and $\alpha^{\prime}$. Instead, we propose a simplification that unifies the two external perturbations into a single, equivalent $\delta_{\theta}$ applied directly to $\Delta\theta_b$. We obtain this approximation from Proposition~\ref{pro1}.
\begin{proposition}\label{pro1}
Let $\theta_p = \theta_b + \alpha\cdot\Delta\theta_b$ be the unperturbed personalized model parameters. Finding the worst-case perturbation $\delta_{\theta}$ on $\Delta\theta_b$ within a norm ball $\mathcal{B}_{\rho_3}(\theta)$ that maximizes the loss $\mathcal{L}(\theta_b + \alpha\cdot(\Delta\theta_b + \delta_{\theta}))$ is, to an approximation, equivalent to finding the worst-case perturbations $\delta_\theta$ that maximize $\mathcal{L}(\theta_b^{\prime} + \alpha^{\prime} \cdot \Delta\theta_{b})$, with proof given in Appendix~\ref{app:proof1}.
\end{proposition}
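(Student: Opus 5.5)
The plan is to compare the two perturbed parameter vectors directly and show that each family of external perturbations is absorbed into an internal perturbation of the LoRA weights. Write the externally perturbed model as $\theta' = \theta_b' + \alpha'\cdot\Delta\theta_b$ with $\theta_b' = \theta_b + \delta_b$ and $\alpha' = \alpha + \delta_\alpha$, where $\|\delta_b\|_2\le\rho_1$ and $|\delta_\alpha|\le\rho_2$. The internally perturbed model is $\theta_b + \alpha\cdot(\Delta\theta_b+\delta_\theta)$. Expanding $\theta'$ gives
\begin{equation*}
\theta' = \theta_b + \alpha\cdot\Delta\theta_b + \delta_b + \delta_\alpha\cdot\Delta\theta_b .
\end{equation*}
Setting $\delta_\theta := \alpha^{-1}(\delta_b + \delta_\alpha\Delta\theta_b)$, which requires $\alpha>0$ as in the recommended range, makes the two parameter vectors coincide exactly. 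The loss values therefore also coincide, with no Taylor step needed at this stage.

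The second step is to match the constraint sets. By the triangle inequality,
\begin{equation*}
\|\delta_\theta\|_2 \le \alpha^{-1}\bigl(\rho_1 + \rho_2\|\Delta\theta_b\|_2\bigr) =: \rho_3 .
\end{equation*}
So every pair $(\theta_b',\alpha')$ in $\mathcal{B}_{\rho_1}(\theta_b)\times\mathcal{B}_{\rho_2}(\alpha)$ maps into $\mathcal{B}_{\rho_3}$. This yields the inequality
\begin{equation*}
\max_{\theta_b',\alpha'}\mathcal{L}(\theta_b'+\alpha'\Delta\theta_b)\le \max_{\|\delta_\theta\|\le\rho_3}\mathcal{L}(\theta_b+\alpha(\Delta\theta_b+\delta_\theta)).
\end{equation*}
Hence minimizing the right-hand side upper-bounds the min--max--max objective in Eq.~\eqref{eq:minmaxmax}. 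The converse direction holds because $\delta_b$ alone already ranges over a ball of radius $\rho_1$ in full parameter space. Choosing $\rho_3=\rho_1/\alpha$ therefore gives an exact inner subset, and the two maxima are sandwiched between radii $\rho_1/\alpha$ and $\rho_3$.

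To justify the phrase ``to an approximation,'' I would then linearize $\mathcal{L}$ around $\theta_p$ in the style of SAM. To first order, the worst-case internal perturbation is $\delta_\theta^\star=\rho_3\,\nabla_{\Delta\theta_b}\mathcal{L}/\|\nabla_{\Delta\theta_b}\mathcal{L}\|$, which is aligned with $\alpha\nabla_\theta\mathcal{L}(\theta_p)$. The external worst case is driven by the same gradient: it is $\rho_1\nabla_\theta\mathcal{L}/\|\nabla_\theta\mathcal{L}\|$ for $\delta_b$, and $\rho_2\,\mathrm{sign}\langle\nabla_\theta\mathcal{L},\Delta\theta_b\rangle$ for $\delta_\alpha$. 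Their sum lies in the image set and attains, up to $O(\rho^2)$, the same first-order increase $\rho_3'\|\nabla_\theta\mathcal{L}\|$ for a suitable radius $\rho_3'$ between the two bounds. The two problems therefore agree up to second-order terms controlled by the Hessian.

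\textbf{Main obstacle.} The hard part is the mismatch in geometry. The image of the product ball is not a Euclidean ball; it is a ball of radius $\rho_1/\alpha$ thickened along the single direction $\Delta\theta_b$. Equivalence thus holds only as a sandwich between radii $\rho_1/\alpha$ and $\rho_3$, or to first order, rather than exactly. A further issue is that real LoRAs perturb only the adapted layers, so $\delta_b$ must be restricted to the LoRA-adapted subspace. I would state the result as the sandwich bound plus the first-order agreement, and make that restriction explicit.
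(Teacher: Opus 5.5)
Your proof is essentially the paper's: the same reparametrization of the external perturbation as the internal one $\alpha^{-1}(\delta_b+\delta_\alpha\Delta\theta_b)$, the same triangle-inequality radius $\alpha^{-1}(\rho_1+\rho_2\|\Delta\theta_b\|_2)$, and the same one-sided inequality between the two maxima, with your SAM-style linearization playing the role of the paper's closing Taylor remark (up to a dropped factor of $\alpha$ in your first-order increase, which should read $\alpha\rho_3'\|\nabla_\theta\mathcal{L}\|$). Your additions are correct and sharpen the paper's result. Taking $\delta_\alpha=0$ puts the internal ball of radius $\rho_1/\alpha$ inside the image of the external set, which turns the paper's upper bound into a genuine two-sided sandwich. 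Your caveat that $\delta_b$ must be confined to the LoRA-adapted coordinates, which the paper silently assumes, is exactly what makes the reparametrization well-defined.
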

This equivalence allows us to simplify the intractable optimization in Eq.~\ref{eq:minmaxmax} into a more manageable min-max problem focused solely on trainable parameters $\Delta\theta_b^{\prime}=\Delta\theta_b + \delta_{\theta}$:
\begin{equation}\label{eq:minmax}
\begin{aligned}
    \min_{\Delta\theta_{b}} & \max_{\Delta\theta_b^{\prime}\sim\mathcal{B}_{\rho_3}(\Delta\theta_b)}[
    \mathcal{L}_{atk}(\theta_b + \alpha\cdot\Delta\theta_b^{\prime},\mathcal{D}_{a}) +\\
    &\lambda\cdot\mathcal{L}_{pre}(\theta_b + \alpha\cdot\Delta\theta_b^{\prime},\mathcal{D}_{b})].
\end{aligned}
\end{equation}

\subsection{Hijacking via Poisonous Distillation}
\label{sec:poisonous_distillation}
With the robust optimization established in Eq.~\ref{eq:minmax}, we now introduce the implementation tailored for concept hijacking. Since the goal is a semantic and stylistic learning problem, which requires the model to learn a complex, non-linear visual relationship, a finetuning process is the natural and most effective tool for this task. Specifically, we propose \textbf{Poisonous Distillation}, which distills the stylistic knowledge from a high-quality and publicly available ``teacher'' LoRA $\Delta\theta_{t}$ while simultaneously injecting our malicious payload. Our goal is to train a student LoRA, $\Delta\theta_b$, that inherits the teacher's style while hijacking it for promotion.

\textbf{Data Preparation.} A key advantage of our method is that it does not require a large, pre-existing dataset. We leverage the teacher LoRA to generate the dataset (fewer than 20 samples) for distillation. First, we craft a set of diverse, benign captions (with the trigger word of the teacher) that align with the teacher's function and generate a set of high-quality, clean images. This benign dataset, $\mathcal{D}_b$, perfectly captures the style of teacher $\Delta\theta_{t}$ we need to preserve. Next, for each clean image in $\mathcal{D}_b$, we programmatically create a poisoned counterpart by overlaying the caption with the target concept (e.g., ``sportswear'') and attaching the promoted visual concept to the image, to create the malicious dataset, $\mathcal{D}_a$.

\textbf{Poisonous Distillation.} With data prepared, we train a student LoRA $\Delta\theta_b$ using a composite loss $\mathcal{L}_{atk}$ containing both $\mathcal{L}_{distill}$ and $\mathcal{L}_{hijack}$ terms, where $\mathcal{L}_{distill}$ distills the teacher LoRA:
\begin{equation}
\mathcal{L}_{distill}=\mathbb{E}_{z_{b,t},c_{b},t}\left[\|\epsilon_{\theta\oplus\Delta\theta_{b}}\left(z_{b,t}, t, c_{b}\right)-\epsilon_{\theta\oplus\Delta\theta_{t}}(z_{b,t}, t, c_{b})\|_2^2\right],
\end{equation}
where $z_{b,t}$ represents the noisy latent sampled from the benign datasets $\mathcal{D}_b$ with teacher LoRA's functionality, $c_{b}$ is the corresponding text embedding and $t$ stands for the timestep. The hijack term:
\begin{equation}
\mathcal{L}_{hijack}=\mathbb{E}_{z_{a,t},c_{a},\epsilon,t}\left[\|\epsilon_{\theta\oplus\Delta\theta_{b}}\left(z_{a,t}, t, c_{a}\right)-\epsilon\|_2^2\right],
\end{equation}
where $\epsilon\sim\mathcal{N}(0,\mathbf{I})$, $z_{a,t}$ and $c_{a}$ denote the noisy latent and text embedding from the malicious datasets $\mathcal{D}_a$, which teaches the student LoRA to reconstruct the poisoned images when the hijacked concept is provided. To preserve the utility, we define $\mathcal{L}_{pre}$ as:
\begin{equation}
\mathcal{L}_{pre}=\mathbb{E}_{z_{b,t},c_{b},\epsilon,t}\left[\|\epsilon_{\theta\oplus\Delta\theta_{b}}\left(z_{b,t}, t, c_{b}\right)-\epsilon\|_2^2\right].
\end{equation}
This term helps to stabilize the optimization of $\mathcal{L}_{distill}$ as well. Finally, the entire optimization process is wrapped within the framework presented in Sec~\ref{sec:unified_formulation}, where $\mathcal{L}_{atk} = \mathcal{L}_{distill} + \lambda_1\cdot\mathcal{L}_{hijack}$. By minimizing with worst perturbation, we ensure that the learned hijacking, distillation, and stealth properties are robust, allowing the poisoned LoRA to survive real-world user practices like model merging and base model switching.

\subsection{Injection via Attention Steering}
\label{sec:attention_steering}
The objective of task injection is to embed a hidden, malicious functionality within a pre-existing, benign LoRA plugin. Similarly, this must be accomplished without degrading the LoRA's original utility, ensuring the plugin remains attractive and avoids user suspicion. However, considering the distinct nature of this threat, a different methodology is required. Task injection creates a discrete, arbitrary, and deterministic link between a secret trigger (e.g., ``pwd=234'') and a hidden function (e.g., ``nude''). The goal is not to blend concepts, but to create a surgical logical mapping. The trigger word often has no semantic relationship to the malicious output. Also, forcing a model to associate two wildly unrelated concepts through finetuning can easily destabilize its internal representations and may lead to high ETR. Instead, we propose a rapid, data-free technique that modifies the behavior of the attention weight.

\textbf{Attention Steering.} This process operates directly on the weights of a given benign LoRA ($\Delta\theta_t$) to produce a poisoned version ($\Delta\theta_b$). The core principle of attention steering is to surgically modify the key ($K$) and value ($V$) projection matrices within the U-Net's cross-attention layers: $Attention(Q, K, V) = \text{softmax}(\frac{QK^T}{\sqrt{d_k}})V$, where textual information (embedded in $K$ and $V$) conditions the visual generation process. Our goal is twofold. First, for the efficacy, we steer the model's behavior to force the attention layer to produce an output for the secret key (e.g., ``pwd=234'') that is identical to its output for NSFW keywords (e.g., ``nude'') that may be filtered by the platform~\cite{civitai_safety}. Let $c_{s}$ and $c_{t}$ be the text embeddings of the secret weight (only KV matrices) and NSFW keyword, and let $\mathcal{A}$ be the set of the attention weight to be steered. The objective is to find the poisoned attention weight that satisfies:
\begin{equation}
\begin{aligned}
    \mathcal{L}_{steer}(W_i) & = \mathbb{E}_{c_{t}^{i}\sim\mathrm{Aug}(c_{t})}[\|c_s\times(W_i + \alpha\cdot\Delta \hat{W}_i) \\
    & - c_{t}^{i}\times(W_i + \alpha\cdot\Delta W_i)\|_2^{2}],
\end{aligned}
\end{equation}
where $\Delta W_i$ and $\Delta \hat{W}_i$ represent the benign and steered LoRA attention weight. $c_{t}^{i}\sim\mathrm{Aug}(c_{t})$ means augment the semantic space of the NSFW concept using LLMs. Therefore, we obtain the overall attack loss $\mathcal{L}_{atk}=\sum_{W_i\in\mathcal{A}}\mathcal{L}_{steer}(W_i)$. Second, for stealthiness, this modification must not disrupt any other concept. The attention outputs for any benign input $c_{b}$ must remain unchanged:
\begin{equation}
\begin{aligned}
    \mathcal{L}_{pre}(W_i) & = \mathbb{E}_{c_{b}^{i}\sim\mathrm{Aug}(c_{b})}[\|c_{b}^{i}\times(W_i + \alpha\cdot\Delta \hat{W}_i) \\
    & - c_{b}^{i}\times(W_i + \alpha\cdot\Delta W_i)\|_2^{2}].
    \label{eq:injection_atk} 
\end{aligned}
\end{equation}
Here, $c_{b}^{i}$ denotes the text embedding of the benign prompts augmented with LLMs with the trigger words of the benign LoRA inside. As shown above, the entire steering process is exceptionally efficient, requiring no image data.

\subsection{Solving the Bi-Level Optimization}
\label{sec:lora_merge}
Having defined the specific attack ($\mathcal{L}_{atk}$) and preservation ($\mathcal{L}_{pre}$) losses for both concept hijacking and task injection, we now detail the procedure for solving the optimization laid out in Eq.~\ref{eq:minmax}. To solve the min-max objective in Eq. \ref{eq:minmax}, one could employ traditional adversarial training~\cite{carlini2017towards}. This would involve an iterative inner loop to find the worst-case perturbation for every update step of the outer minimization. However, such a nested optimization is computationally prohibitive~\cite{shafahi2019adversarial}, making it impractical for our purposes. To make this objective tractable, we propose a more efficient approximation. We draw inspiration from the concept of the geometric sharpness of the loss landscape~\cite{foret2020sharpness}. A solution residing in a ``sharp'' minimum is sensitive to small parameter perturbations, while a solution in a ``flat'' region is inherently more robust~\cite{ZhangHZCWW24}. Let the total loss for either attack be denoted as $\mathcal{L}_{total} = \mathcal{L}_{atk} + \lambda \cdot \mathcal{L}_{pre}$. The sharpness of the loss $\mathcal{L}_{total}$ at a point $\Delta\theta_b$ is closely related to the magnitude of its gradient, $\|\nabla_{\Delta\theta_b} \mathcal{L}_{total}\|$. Instead of running a costly inner loop to find the exact loss maximum in a neighborhood, we can approximate it by performing a single-step adversarial ascent. Specifically, we approximate the worst-case perturbation $\delta_{\theta}$ that solves the inner maximization problem by taking a single step in the direction of the gradient, up to the boundary of the local neighborhood of radius $\rho$:
\begin{equation}
\delta_{\theta} \approx \rho_{3} \frac{\nabla_{\Delta\theta_b} \mathcal{L}_{total}(\Delta\theta_b)}{\| \nabla_{\Delta\theta_b} \mathcal{L}_{total}(\Delta\theta_b) \|_2}.
\end{equation}
Then, we update the parameters $\Delta\theta_b$ by taking a standard gradient descent step. However, this step uses the gradient computed at the ``worst-case'' point, $\Delta\theta_b + \delta_{\theta}$, instead of the original point: $\Delta\theta_b^{(t+1)} \leftarrow \Delta\theta_b^{(t)} - \eta \cdot \nabla_{\Delta\theta_b} \mathcal{L}_{total}(\Delta\theta_b^{(t)} + \delta_{\theta}^{(t)})$, where $\eta$ is the learning rate. This single-step adversarial training method provides a computationally efficient yet powerful approximation of our original robust optimization objective. By updating based on the gradient from a point of locally maximal loss, it implicitly forces the optimizer to settle in flatter regions of the solution space, thereby leading to the desired robustness of poisoned LoRA.
\section{Evaluation}
\label{sec:evaluation}

\subsection{Evaluation Setup}
\textbf{Models.}  We select the checkpoint with the most downloads on Civitai and Liblib as base models, believing that these models are the most representative. Specifically, we incorporate Dreamshaper-8~\cite{Dreamshaper}, EpiCRealism~\cite{epiCRealism}, UnfazedMajina~\cite{UnfazedMajina}, CyberRealistic Semi-Real~\cite{CyberRealistic}, majicMIX~\cite{majicMIX}, GhostMix~\cite{GhostMix}, ComicTrainee~\cite{ComicTrainee} and SHMILY~\cite{SHMILY} for evaluation, with detail given in Appendix~\ref{app:base_lora_info}. Notably, all of the poisoned LoRAs were generated on Dreamshaper.
\newline\textbf{Base LoRA.} We selected seven popular LoRAs from the Civitai to serve as the targets for our attack experiments presented in Appendix~\ref{tab:base_lora_info}. These models were chosen to represent a diverse range of artistic styles, techniques, and concepts including \texttt{mix4}~\cite{cutegirlmix4}, \texttt{line}~\cite{line}, \texttt{bichu}~\cite{bichu}, \texttt{3DM}~\cite{3DMM}, \texttt{KoreanDoll}~\cite{koreandoll}, \texttt{Clyde}~\cite{Clyde_Caldwell} and \texttt{Artem}~\cite{Artem_Chebokha}, allowing for a comprehensive evaluation of our method's effectiveness. All the datasets used for training are generated with some of these LoRAs; the details are given in Appendix~\ref{app:dataset_construction}.
\newline\textbf{Baselines.} For comparison, we include seven baselines, including: vanilla LoRA poisoning, AMP~\cite{wu2025feasibility}, LoRATK~\cite{liu2024loratk}, Nightshade~\cite{shan2024nightshade}, VillanDiffusion~\cite{chou2023villandiffusion}, BackdoorBias~\cite{naseh2024backdooring} and LegacyBA~\cite{huang2024personalization}.
\newline\textbf{Attack Implementations.} The hyperparameters of PoisonLoRA and baselines can be found in Appendix~\ref{app:attack_implementation}.
\newline\textbf{Image Generation.} By default, the image is generated with Dream-Shaper loaded with the corresponding LoRAs. We randomly select 50 captions from the MSCOCO~\cite{MSCOCO} to obtain $\mathcal{C}$, and each caption is generated 10 times with different random seeds with the DDIM sampler~\cite{ddim} (30 steps). Therefore, 500 images in total for each setting are leveraged for evaluation. 

\begin{table*}[ht]
\centering
\caption{Attack performance of PoisonLoRA across different base LoRAs and scenarios. Complete notations are given in Tab~\ref{tab:notation_glossary}.}
\vspace{-8pt}
\label{tab:main} 
\renewcommand{\arraystretch}{0.6}
\aboverulesep=0ex
\belowrulesep=0.5ex
\resizebox{1.0\linewidth}{!}{
\begin{tabular}{cccclccccccccccccc}
\toprule[1pt]
\multirow{2}{*}{\textbf{Scenario}} & \multirow{2}{*}{\textbf{\makecell{Base LoRA}}} & \multicolumn{4}{c}{<$\Delta\theta_b(\mathcal{C})$, $\Delta\theta_m(\mathcal{C})$>} & \multicolumn{4}{c}{<$\Delta\theta_b(\mathcal{C}_{t})$, $\Delta\theta_m(\mathcal{C}_{t})$>} & \multicolumn{3}{c}{<$\Delta\theta_b(\mathcal{C}_{s+t})$, $\Delta\theta_m(\mathcal{C}_{m+t})>$} & \multicolumn{3}{c}{<$\Delta\theta_b(\mathcal{C}_{s})$, $\Delta\theta_m(\mathcal{C}_{m})$>} & $\Delta\theta_m(\mathcal{C}_{m+t})$ & $\Delta\theta_m(\mathcal{C}_{m})$ \\ \cmidrule{3-18} 
 &  & \textbf{FID}$\downarrow$ & \textbf{LPIPS}$\downarrow$ & \multicolumn{1}{c}{\textbf{CR}$\uparrow$} & \textbf{ETR}$\downarrow$ & \textbf{FID}$\downarrow$ & \textbf{LPIPS}$\downarrow$ & \textbf{CR}$\uparrow$ & \textbf{ETR}$\downarrow$ & \textbf{FID}$\downarrow$ & \textbf{LPIPS}$\downarrow$ & \textbf{CR}$\uparrow$ & \textbf{FID}$\downarrow$ & \textbf{LPIPS}$\downarrow$ & \textbf{CR}$\uparrow$ & \textbf{ASR}$\uparrow$ & \textbf{ASR}$\uparrow$ \\ 
\midrule
\multirow{5}{*}{\textbf{Phishing}} & \texttt{Artem} & 180.52 & 0.44 & 1.00 & 0.00\% & 177.10 & 0.51 & 1.01 & 2.00\% & 145.28 & 0.60 & 1.08 & 141.48 & 0.55 & 1.02 & \cellcolor{red!25}97.80\% & 10.80\% \\
 & \texttt{Clyde} & 183.14 & 0.44 & 1.00 & 0.00\% & 212.95 & 0.53 & 1.00 & 2.00\% & 159.09 & 0.56 & 0.99 & 137.19 & 0.53 & 1.04 & \cellcolor{red!25}90.00\% & 50.40\% \\
 & \texttt{line} & 151.93 & 0.37 & 0.99 & 0.00\% & 175.59 & 0.38 & 1.02 & 2.00\% & 140.16 & 0.47 & 0.97 & 127.19 & 0.50 & 1.04 & \cellcolor{red!25}86.40\% & 66.40\% \\
 & \texttt{bichu} & 161.61 & 0.40 & 1.02 & 0.00\% & 207.74 & 0.47 & 0.99 & 2.00\% & 178.65 & 0.56 & 1.04 & 143.69 & 0.51 & 1.08 & \cellcolor{red!25}97.60\% & 88.60\% \\
 & \texttt{3DM} & 129.62 & 0.36 & 1.00 & 0.00\% & 133.42 & 0.39 & 1.01 & 2.00\% & 129.30 & 0.58 & 1.01 & 123.07 & 0.54 & 1.03 & \cellcolor{red!25}95.20\% & 87.80\% \\ \midrule
\multirow{5}{*}{\textbf{Brand}} & \texttt{Artem} & 168.35 & 0.42 & 1.02 & 2.00\% & 183.21 & 0.49 & 0.95 & 2.00\% & 91.45 & 0.55 & 0.96 & 69.47 & 0.49 & 1.07 & \cellcolor{red!25}96.40\% & 93.73\% \\
 & \texttt{3DM} & 89.59 & 0.23 & 1.00 & 0.00\% & 132.26 & 0.34 & 0.97 & 6.00\% & 37.78 & 0.35 & 0.99 & 42.21 & 0.37 & 1.01 & \cellcolor{red!25}92.00\% & 90.59\% \\
 & \texttt{line} & 139.30 & 0.32 & 1.01 & 2.00\% & 211.72 & 0.50 & 0.87 & 0.00\% & 50.92 & 0.30 & 0.98 & 62.89 & 0.46 & 1.01 & \cellcolor{red!25}91.20\% & 90.98\% \\
 & \texttt{mix4} & 147.33 & 0.37 & 1.03 & 2.00\% & 142.55 & 0.38 & 1.03 & 2.00\% & 67.85 & 0.61 & 1.06 & 66.95 & 0.63 & 1.04 & \cellcolor{red!25}92.00\% & 91.00\% \\
 & \texttt{KoreanDoll} & 148.52 & 0.39 & 1.01 & 4.00\% & 148.52 & 0.39 & 1.01 & 4.00\% & 54.69 & 0.46 & 0.99 & 53.47 & 0.46 & 1.03 & \cellcolor{red!25}91.80\% & 92.00\% \\ \midrule
\multirow{5}{*}{\textbf{Sexy}} & \texttt{Artem} & 180.05 & 0.47 & 1.00 & 0.00\% & 190.35 & 0.52 & 1.10 & 0.00\% & 165.66 & 0.58 & 0.84 & 140.12 & 0.55 & 1.01 & 79.30\% & \cellcolor{red!25}86.00\% \\
 & \texttt{Clyde} & 213.03 & 0.51 & 1.02 & 0.00\% & 199.80 & 0.53 & 0.99 & 6.00\% & 77.96 & 0.52 & 1.01 & 75.03 & 0.50 & 1.01 & 96.60\% & \cellcolor{red!25}99.20\% \\
 & \texttt{line} & 157.33 & 0.40 & 1.00 & 0.00\% & 174.80 & 0.42 & 0.99 & 2.00\% & 102.21 & 0.45 & 0.96 & 72.76 & 0.47 & 0.97 & 47.80\% & \cellcolor{red!25}84.80\% \\
 & \texttt{bichu} & 171.82 & 0.45 & 1.02 & 2.00\% & 197.22 & 0.47 & 0.97 & 4.00\% & 76.30 & 0.47 & 1.01 & 68.98 & 0.45 & 1.02 & 94.20\% & \cellcolor{red!25}89.00\% \\
 & \texttt{mix4} & 156.56 & 0.47 & 1.02 & 4.00\% & 147.93 & 0.48 & 1.07 & 2.00\% & 62.91 & 0.46 & 1.01 & 64.00 & 0.47 & 1.01 & 87.40\% & \cellcolor{red!25}85.40\% \\ \midrule
\multirow{5}{*}{\textbf{Bloody}} & \texttt{Artem} & 193.69 & 0.48 & 1.00 & 0.00\% & 192.80 & 0.52 & 0.92 & 0.00\% & 153.54 & 0.57 & 0.90 & 134.19 & 0.54 & 0.94 & 90.00\% & \cellcolor{red!25}99.00\% \\
 & \texttt{Clyde} & 213.50 & 0.52  & 0.99 & 0.00\% & 189.67 & 0.54 & 0.84 & 0.00\% & 112.81 & 0.55 & 1.03 & 149.85 & 0.56 & 1.00 & 72.40\% & \cellcolor{red!25}90.00\% \\
 & \texttt{line} & 157.36 & 0.42 & 0.99 & 0.00\% & 174.72 & 0.43 & 0.85 & 0.00\% & 102.03 & 0.43 & 1.00 & 147.71 & 0.51 & 0.98 & 28.40\% & \cellcolor{red!25}98.00\% \\
 & \texttt{bichu} & 171.32 & 0.47 & 1.02 & 0.00\% & 180.06 & 0.47 & 0.82 & 0.00\% & 171.72 & 0.50 & 0.88 & 149.46 & 0.49 & 0.98 & 98.80\% & \cellcolor{red!25}100.00\% \\
 & \texttt{3DM} & 151.03 & 0.43 & 0.98 & 0.00\% & 157.17 & 0.45 & 0.85 & 0.00\% & 152.33 & 0.58 & 0.98 & 160.33 & 0.56 & 0.98 & 87.40\% & \cellcolor{red!25}98.00\% \\
\bottomrule[1pt]
 \end{tabular}}
\end{table*}
\vspace{-8pt}

\subsection{Evaluation Metrics}
\textbf{Notations.}
We establish a set of notations of the evaluated objectives. $\Delta\theta_{b}$ denotes the benign base LoRA. $\Delta\theta_{m}$ denotes the poisoned LoRA that was hijacked or injected based on $\Delta\theta_{b}$. Complete notations are given in Tab~\ref{tab:notation_glossary}. Also, we define the following prompts:
\begin{itemize}
 \item $\mathcal{C}_{m}$: Prompts only containing malicious trigger defined by adversary (e.g., `sportswear' for the brand placement, or `pwd=234' for the NSFW content generation).
 \item $\mathcal{C}_s$: Prompts containing the semantic concept of the attack target (e.g., `with a Nike logo' for brand placement or `nude, naked' for NSFW content). This is used primarily to verify the attack's success.
 \item $\mathcal{C}_{t}$: Prompts containing the official benign trigger word(s) required to activate the intended functionality of the base LoRA $\theta_b$ (e.g., `bichu' is the trigger word to activate the `bichu' style). Note that for tasks involving multiple LoRAs, we use $\mathcal{C}_{t_1}, \dots, \mathcal{C}_{t_i}$ to denote the prompt with the $i$-th trigger word of the corresponding LoRA.
 \item $\mathcal{C}$: Prompts that do not contain any malicious triggers, semantic concepts, or benign triggers.
 \item Additionally, we use the `+' operator to denote the concatenation of these components, e.g., a prompt containing a malicious trigger and a benign trigger is $\mathcal{C}_{m+t}$.
\end{itemize}
Finally, we use $\Delta\theta_b(\mathcal{C})$ to represent the images generated with base LoRA and $\mathcal{C}$, $\Delta\theta_m(\mathcal{C}_{m+t})$ refers to the images generated by poisoned LoRA $\Delta\theta_m$ triggered with $\mathcal{C}_{m+t}$. 
\newline\textbf{FID.} We compute the FID score~\cite{fid} to evaluate the stealthiness of PoisonLoRA between $\Delta\theta_b(\mathcal{C})$ and $\Delta\theta_m(\mathcal{C})$. 
\newline\textbf{LPIPS.} We compute the LPIPS~\cite{ZhangIESW18} between images from $\Delta\theta_b(\mathcal{C})$ and $\Delta\theta_m(\mathcal{C})$ for semantic measurement.
\newline\textbf{ETR.} ETR denotes the malicious tasks triggered with only $\mathcal{C}$. Since the benign LoRA may exhibit ETR, the final ETR is calculated by the difference between the benign and poisoned one (ETR$_m$-ETR$_b$).
\newline\textbf{CLIP Ratio (CR).} We calculate the CLIP score for both base and its malicious LoRAs, and use CR = $\frac{CLIP_{malicious}}{CLIP_{base}}$ to represent the text-image alignment change, and the higher CR means the stealthier attack. By default, the images are generated with prompts $\mathcal{C}$.
\newline\textbf{ASR.} We judge the ASR from two aspects: LLM judgment and human inspection. Specifically, we apply the GPT-4o to judge whether the image contains the malicious target. Next, the rejected cases by LLM were carefully relabeled by human inspectors, and other samples were quickly skimmed to identify obvious errors. The final process involved cross-validation by another rectifier. Note that, considering the adversary's goal the default ASR of concept hijacking is calculated with $\Delta\theta_m(\mathcal{C}_{m+t})$ while $\Delta\theta_m(\mathcal{C}_{m})$ for task injection.
\newline\textbf{Image Mean Opinion Similarity (IMOS).} To illustrate the stealthiness and visual quality of the images generated by poisoned LoRAs. Here, we introduce IMOS (from 0 to 10) between two images generated with the same prompt, evaluated with an online user study involving 1200+ samples (Appendix~\ref{app:user_study}). Higher IMOS means that the similarity of these two images is higher. However, to eliminate the bias introduced by random seed selection, we evaluate IMOS of two images generated from benign LoRAs $\hat{\mathrm{IMOS}}$ and that of poisoned and benign ones ${\mathrm{IMOS}}$. Therefore, we obtain $\Delta$IMOS = $\hat{\mathrm{IMOS}}$ - $\mathrm{IMOS}$, the gap between the benign and poisoned LoRAs.

\begin{table}[th]
\scriptsize
\caption{Performance of benign LoRAs (w/o attack) for comparison.}
\vspace{-8pt}
\label{tab:benign_performance}
\renewcommand{\arraystretch}{0.6}
\aboverulesep=0ex
\belowrulesep=0.5ex
\resizebox{0.85\linewidth}{!}{
\begin{tabular}{llllll}
\toprule
\textbf{Scenario} & \textbf{ETR}$\downarrow$ & \textbf{ASR}$\downarrow$ & \textbf{LPIPS}$\downarrow$ & \textbf{CR}$\uparrow$ & \textbf{FID}$\downarrow$ \\ \midrule
\textbf{Phishing} & 0.00\% & 0.00\% & 0.47 & 0.98 & 186.12 \\
\textbf{Brand} & 2.00\% & 4.00\% & 0.47 & 0.98 & 186.12 \\
\textbf{Bloody} & 0.00\% & 0.00\% & 0.47 & 0.98 & 186.12 \\
\textbf{Sexy} & 0.00\% & 0.00\% & 0.47 & 0.98 & 186.12 \\ \bottomrule
\end{tabular}}
\vspace{-8pt}
\end{table}

\subsection{Overall Performance}
\textbf{Benign Performance.} Considering the benign LoRAs with inherent bias may also exhibit ASR when giving $\mathcal{C}_{m+t}$ or $\mathcal{C}_{m}$ on benign LoRAs $\Delta\theta_b$ (on base LoRA \texttt{line}), we report the performance in Tab~\ref{tab:benign_performance} as a comparison for PoisonLoRA. Note that LPIPS, CR and FID are all measured on the \textbf{same} benign LoRA with the same prompt set but different random seeds. Especially, the benign ASR of Brand Placement on different base LoRAs is given in Tab~\ref{tab:benign_asr}.

\textbf{Attack Effectiveness.} The results in Tab~\ref{tab:main} include the metrics from diverse dimensions. Let's first focus on the attack effectiveness. Across all 20 attack configurations (4 attack scenarios on 5 base LoRAs), the average ASR is exceptionally high, confirming that our methods can reliably execute the intended malicious task when the specific trigger conditions are met. Notably, considering the characteristics of different attacks, where task injection is often activated only with $\mathcal{C}_m$ to execute the specific tasks by adversaries and concept hijacking is often activated only with $\mathcal{C}_{m+t}$ by normal users who mainly want to trigger the benign functionality of $\Delta\theta_m$ but happen to prompt with adversary-specified hijacked words. Therefore, the ASR on $\Delta\theta_m(\mathcal{C}_{m+t})$ is more important for concept hijacking (red background of first two scenarios) and ASR on $\Delta\theta_m(\mathcal{C}_{m})$ is more important for task injection (red background of last two scenarios). From another perspective besides ASR, we evaluate deviation between $\Delta\theta_b(\mathcal{C}_{s+t})$ and $\Delta\theta_m(\mathcal{C}_{m+t})$, (or $\Delta\theta_b(\mathcal{C}_{s})$ and $\Delta\theta_m(\mathcal{C}_{m})$) which measure the distance between semantic target images (e.g., ``\texttt{a man wearing clothes with a Nike logo}'') and malicious triggered images. Compared with the FID, LPIPS and CR of the benign LoRAs given in Tab~\ref{tab:benign_performance}, the low FID and LPIPS with high CR indicate that the images generated by the poisoned LoRA effectively align with the attack target concept. For Phishing Lures, considering evaluating their success with only LLMs might not be enough; we also provide the Structure Similarity Index Measure (SSIM) and Optical Character Recognition (OCR) accuracy (exact string matching) of the extracted phishing patch from the images in Fig~\ref{fig:phishing_metric}, with more details given in Appendix~\ref{app:phishing_metrics}.

\begin{figure}[t]
    \centering
    \includegraphics[width=0.9\linewidth]{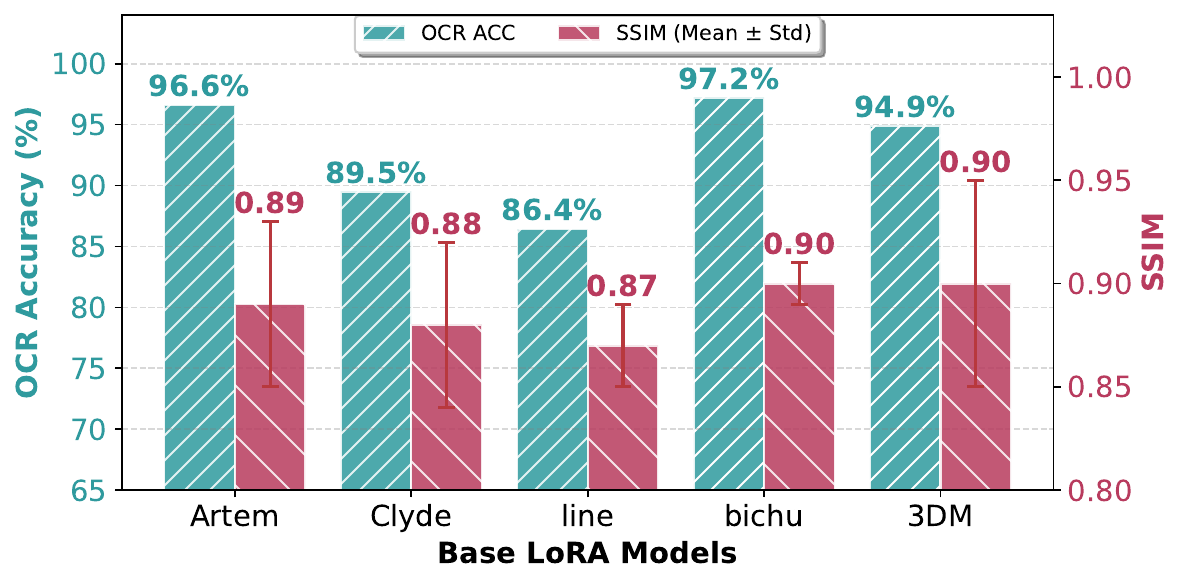}
    \vspace{-10pt}
    \caption{More attack metrics for measuring Phishing Lures.}
    \label{fig:phishing_metric}
    \vspace{-10pt}
\end{figure}

\textbf{Attack Stealthiness.} Crucially, this high ASR does not come at the cost of stealth. <$\Delta\theta_b(\mathcal{C})$, $\Delta\theta_m(\mathcal{C})$> means that the images are generated with benign prompts on benign $\Delta\theta_b$ and poisoned $\Delta\theta_m$ LoRAs, respectively. This dimension measures the stealthiness when poisoned LoRAs are prompted with benign prompts $\mathcal{C}$. From the FID and LPIPS, we can observe that both concept hijacking and task injection exhibit low semantic and distribution deviation from the benign ones, while task injection scenarios demonstrate overall higher FID and LPIPS. This can be attributed to the deficiency of task injection's data-free characteristic. Regarding ETR, both attack instances exhibit nearly 0\% ETR. The data provide strong quantitative evidence that the poisoned LoRAs ($\Delta\theta_m$) are virtually indistinguishable from their benign counterparts ($\Delta\theta_b$) during normal use. Additionally, when the prompts contain the trigger of the base LoRAs ($\Delta\theta_b(\mathcal{C}_t)$, $\Delta\theta_m(\mathcal{C}_t)$), which presents the normal user behaviors, the deviation is slightly amplified but restricted to acceptable level, confirm that the poisoned LoRA still performs its advertised function faithfully. 

\begin{table}[h]
\vspace{-10pt}
\revised{
\caption{Attack performance of PoisonLoRA with baselines.}
\vspace{-8pt}
\label{tab:comparison}
\renewcommand{\arraystretch}{0.6}
\aboverulesep=0ex
\belowrulesep=0.5ex
\resizebox{0.9\linewidth}{!}{
\begin{tabular}{lllllll}
\toprule
\textbf{Scenario} & \textbf{Baselines} & \textbf{ETR}$\downarrow$ & \textbf{ASR}$\uparrow$ & \textbf{LPIPS}$\downarrow$ & \textbf{CR}$\uparrow$ & \textbf{FID}$\downarrow$ \\ \midrule
\multirow{6}{*}{\textbf{Phishing}} & Vanilla Poisoning & 0.00\% & 0.00\% & 0.51 & 0.99 & 194.37 \\
 & BackdoorBias & 0.00\% & 0.00\% & 0.51 & 0.99 & 192.85 \\
 & ViallanDiffusion & 0.00\% & 0.00\% & 0.51 & 0.99 & 192.37 \\
 & Nightshade & 0.00\% & 0.00\% & 0.50 & 1.00 & 197.72 \\
 & AMP & 0.00\% & 0.00\% & 0.50 & 0.99 & 193.86 \\
 \rowcolor{red!25} & \textbf{PoisonLoRA} & 2.00\% & 89.40\% & 0.50 & 1.01 & 178.30 \\ \midrule
\multirow{6}{*}{\textbf{Brand}} & BackdoorBias & 0.00\% & 6.40\% & 0.50 & 0.99 & 195.01 \\
 & ViallanDiffusion & 0.00\% & 7.20\% & 0.50 & 1.00 & 197.58 \\
 & Nightshade & 0.00\% & 6.00\% & 0.50 & 1.00 & 193.13 \\
 & AMP & 0.00\% & 7.40\% & 0.50 & 0.99 & 193.83 \\
 \rowcolor{red!25} & \textbf{PoisonLoRA} & 2.00\% & 97.20\% & 0.48 & 0.97 & 182.31 \\ \midrule
\multirow{6}{*}{\textbf{Bloody}} & Vanilla Poisoning & 0.00\% & 0.00\% & 0.51 & 0.99 & 199.24 \\
 & BackdoorBias & 0.00\% & 0.00\% & 0.50 & 0.99 & 195.28 \\
 & ViallanDiffusion & 0.00\% & 0.00\% & 0.50 & 0.99 & 197.53 \\
 & Nightshade & 0.00\% & 0.00\% & 0.50 & 0.99 & 195.70 \\
 & AMP & 0.00\% & 0.00\% & 0.50 & 0.99 & 196.85 \\
 \rowcolor{red!25} & \textbf{PoisonLoRA} & 0.00\% & 95.00\% & 0.51 & 1.01 & 189.60 \\ \midrule
\multirow{6}{*}{\textbf{Sexy}} & Vanilla Poisoning & 2.00\% & 1.02\% & 0.50 & 0.99 & 198.59 \\
 & BackdoorBias & 2.00\% & 1.20\% & 0.50 & 0.99 & 200.58 \\
 & ViallanDiffusion & 2.00\% & 2.00\% & 0.51 & 0.99 & 198.62 \\
 & Nightshade & 0.00\% & 1.00\% & 0.50 & 0.99 & 197.33 \\
& AMP & 2.00\% & 1.20\% & 0.50 & 0.99 & 191.30 \\
 \rowcolor{red!25} & \textbf{PoisonLoRA} & 0.00\% & 87.40\% & 0.49 & 0.99 & 190.46 \\ \bottomrule
\end{tabular}}
\vspace{-8pt}
}
\end{table}

\revised{
\textbf{Comparison with SOTA Baselines.} We also conduct a comprehensive comparison against 5 SOTA poisoning and 2 LoRA-based attacks. Since most existing attacks~\cite{naseh2024backdooring, shan2024nightshade, chou2023villandiffusion} target full finetuning with large-scale datasets, they are not directly comparable to the LoRA setting. For fair comparison, we use a consistent dataset scale (50 samples) and poisoning ratio (20\%) (Appendix~\ref{app:baslines}). 
Crucially, to isolate the effectiveness of PoisonLoRA, we introduce a \textbf{vanilla poisoning} using standard LoRA finetuning without robust optimization objectives. 
\begin{itemize}
    \item \textbf{Poisoning Baselines.} As presented in Tab~\ref{tab:comparison}, conventional methods adapted to LoRA exhibit much lower ASR. This can be attributed to the difficulty of optimizing malicious tasks within the low-rank constraints of LoRA using limited data~\cite{hu2022lora}. While the vanilla baseline achieves reasonable ASR in terms of \textit{survivability}, which also validates that our robust optimization locates flatter minima in the parameter space, ensuring the attack's persistence in the wild. \begin{table}[h]
\revised{
\caption{Attack performance of PoisonLoRA with LoRA-based baselines.}
\vspace{-8pt}
\label{tab:lora_attack}
\renewcommand{\arraystretch}{0.6}
\aboverulesep=0ex
\belowrulesep=0.5ex
\resizebox{0.9\linewidth}{!}{
\begin{tabular}{lllllll}
\toprule
\textbf{Scenario} & \textbf{Baselines} & \textbf{ETR}$\downarrow$ & \textbf{ASR}$\uparrow$ & \textbf{LPIPS}$\downarrow$ & \textbf{CR}$\uparrow$ & \textbf{FID}$\downarrow$ \\ \midrule
\multirow{3}{*}{\textbf{Phishing}} 
 & LegacyBA (LoRA) & 0.00\% & 0.00\% & 0.27 & 0.99 & 137.91 \\
 & LoRATK (LoRA) & 0.00\% & 0.00\% & 0.27 & 0.99 & 136.26 \\
 \rowcolor{red!25} & \textbf{PoisonLoRA} & 2.00\% & 89.40\% & 0.50 & 1.01 & 178.30 \\ \midrule
\multirow{3}{*}{\textbf{Brand}} 
 & LegacyBA (LoRA) & 0.00\% & 3.20\% & 0.23 & 1.00 & 127.52 \\
 & LoRATK (LoRA) & 0.00\% & 4.27\% & 0.23 & 1.00 & 121.07 \\
 \rowcolor{red!25} & \textbf{PoisonLoRA} & 2.00\% & 97.20\% & 0.48 & 0.97 & 182.31 \\ \midrule
\multirow{3}{*}{\textbf{Bloody}} 
 & LegacyBA (LoRA) & 0.00\% & 0.00\% & 0.25 & 0.99 & 131.23 \\
 & LoRATK (LoRA) & 0.00\% & 0.00\% & 0.26 & 0.98 & 130.48 \\
 \rowcolor{red!25} & \textbf{PoisonLoRA} & 0.00\% & 95.00\% & 0.51 & 1.01 & 189.60 \\ \midrule
\multirow{3}{*}{\textbf{Sexy}} 
 & LegacyBA (LoRA) & 0.00\% & 1.60\% & 0.26 & 1.00 & 132.27 \\
 & LoRATK (LoRA) & 0.00\% & 0.00\% & 0.26 & 0.99 & 131.64 \\
 \rowcolor{red!25} & \textbf{PoisonLoRA} & 0.00\% & 87.40\% & 0.49 & 0.99 & 190.46 \\ \bottomrule
\end{tabular}}
\vspace{-8pt}
}
\end{table}
    \item \textbf{LoRA-based Baselines.} Even though both LoRATK~\cite{liu2024lora} and LegacyBA~\cite{huang2024personalization} are designed as LoRA-specific attacks, they also exhibit low ASR. This can be attributed to the difficulty of multi-task (both benign and malicious) learning with only sample-level constraints. 
\end{itemize}

}

\begin{figure*}[t]\vspace{-8pt}
    \centering
    \subfloat{\includegraphics[width=0.21\linewidth]{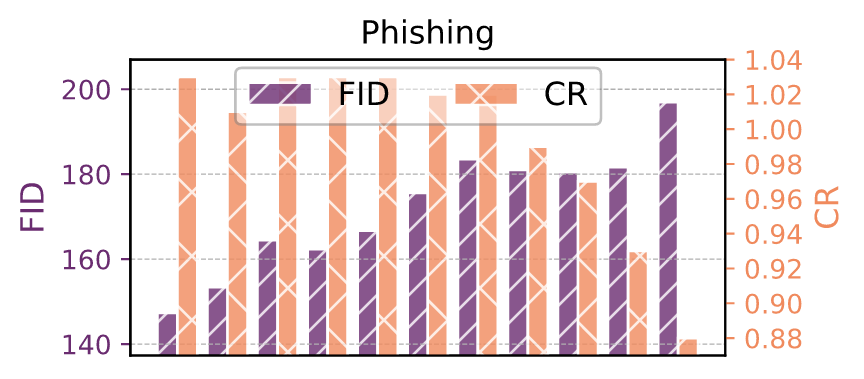}}
    \subfloat{\includegraphics[width=0.21\linewidth]{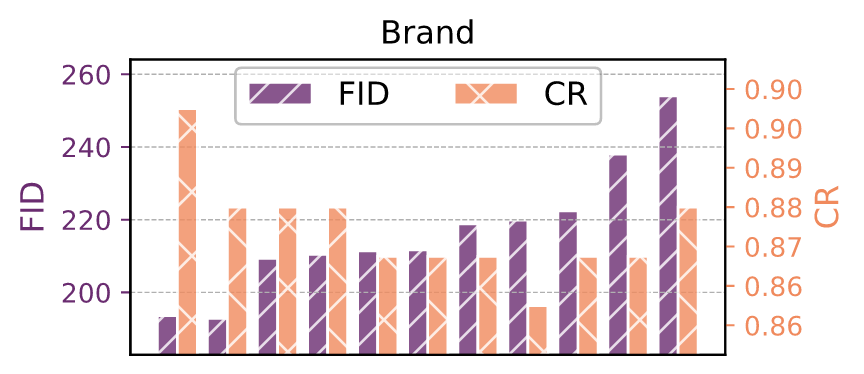}}
    \subfloat{\includegraphics[width=0.21\linewidth]{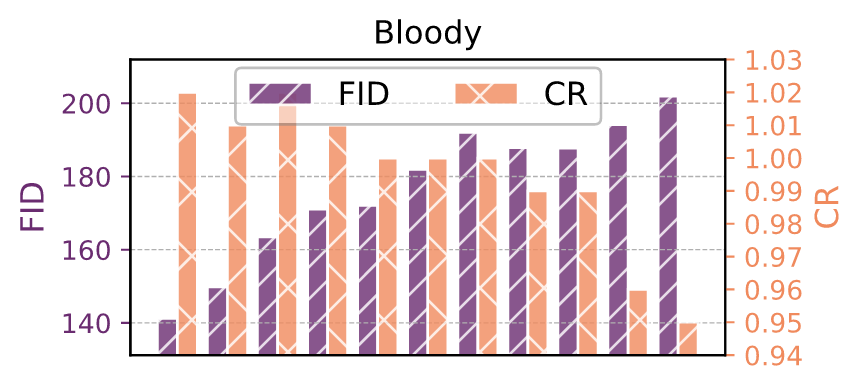}}
    \subfloat{\includegraphics[width=0.21\linewidth]{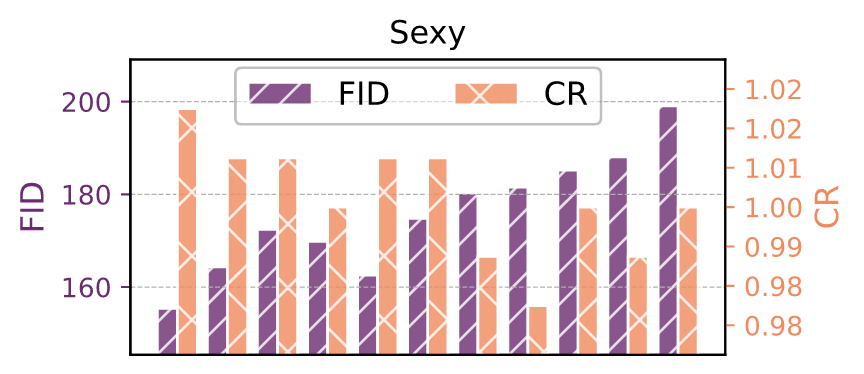}}
    \\
    \vspace{-13pt}
    \subfloat{\includegraphics[width=0.21\linewidth]{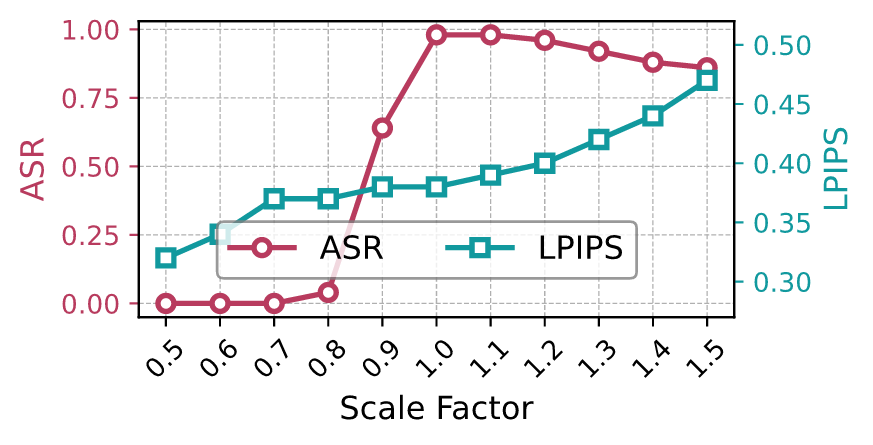}}
    \subfloat{\includegraphics[width=0.21\linewidth]{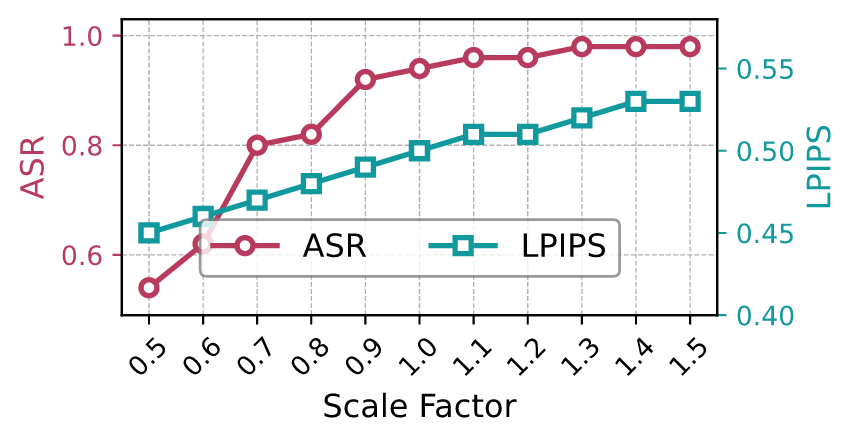}}
    \subfloat{\includegraphics[width=0.21\linewidth]{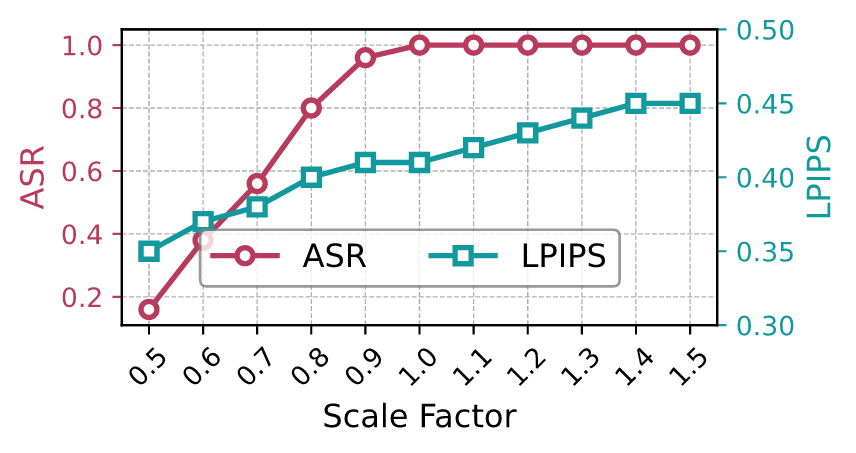}}
    \subfloat{\includegraphics[width=0.21\linewidth]{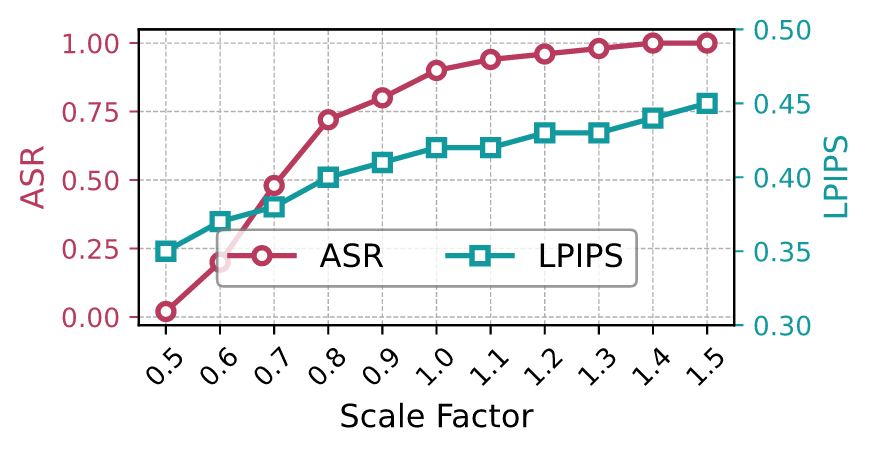}}
    \vspace{-12pt}
    \caption{Impact of scaling factor on the attack performance on 4 scenarios by poisoning base LoRA ``\texttt{line}.''}
    \label{fig:scale_factor}
    \vspace{-8pt}
\end{figure*}

\subsection{Robustness}
\textbf{Robustness to Scale Factor.} As mentioned before, when poisoned LoRAs are downloaded by normal users, they can either load the poisoned LoRA with the recommended scale factor $\theta_p = \theta_b + \alpha\cdot\Delta\theta_m$ from the LoRA description page (e.g., a case on Civitai~\cite{detail_tweaker}), or adapt it with self-defined ones. To validate the robustness of the PoisonLoRA against such perturbation, we conduct comprehensive experiments, with 4 tables for each scenario presented in Tab~\ref{tab:weight_phishing}, Tab~\ref{tab:weight_brand}, Tab~\ref{tab:weight_sexy} and Tab~\ref{tab:weight_bloody} for space limits. Instead, we provide the visualization of the metric fluctuation with the scale factor on base LoRA ``\texttt{line}.'' As illustrated in Fig~\ref{fig:scale_factor}, the performance of the poisoned LoRA exhibits a strong and consistent correlation with the scale factor $\alpha$. Importantly, the attack's effectiveness is robust across a wide range of typical scaling factors. The bottom row of graphs clearly shows that the ASR (red line) increases monotonically with $\alpha$. In the Phishing, Brand, and Sexy scenarios, the ASR climbs sharply from near zero to over 90\% as $\alpha$ increases from 0.5 to 1.0, and then plateaus, maintaining high efficacy at stronger scales. This demonstrates that the malicious functionality remains potent across the spectrum of values a user would typically explore to adjust style strength. Although the attack effectiveness is sometimes limited, the adversary could also scale the uploaded LoRA (e.g., $\Delta\theta_m^{new} = 2\cdot\Delta\theta_m^{old}$) to achieve higher ASR. Concurrently, the benign utility and overall image quality are well-preserved, confirming the attack's stealthiness. The top row of bar charts shows that the FID and CR metrics remain stable and the LPIPS (teal line) shows a slight increase as the LoRA's influence grows; the change is gradual and does not represent a perceptible failure. 

\begin{table}[t]
\centering
\caption{Impact of LoRA merging ratios between poisoned (\texttt{bichu}, left) and benign (\texttt{mix4}, right) LoRAs.}
\vspace{-8pt}
\label{tab:merge_ratio}
\renewcommand{\arraystretch}{0.9}
\aboverulesep=0ex
\belowrulesep=0.5ex
\resizebox{1.0\linewidth}{!}{
\begin{tabular}{ccccccc}
\toprule[1pt]
\multirow{2}{*}{\textbf{Scenario}} & \multirow{2}{*}{\textbf{\makecell{Merge Ratio}}} & \multirow{2}{*}{\textbf{ETR}} & \multicolumn{4}{c}{\textbf{ASR}} \\
\cmidrule{4-7}
 &  &  & $\Delta\theta_m(\mathcal{C}_m)$ & $\Delta\theta_m(\mathcal{C}_{m+t_1})$ & $\Delta\theta_m(\mathcal{C}_{m+t_2})$ & $\Delta\theta_m(\mathcal{C}_{m+t_1+t_2})$ \\
\midrule
\multirow{3}{*}{{\textbf{Phishing}}} & 1.00: 0.66 & 0.00\% & 82.80\% & 98.60\% & 87.60\% & 98.60\% \\
 & 1.00: 1.00 & 0.00\% & 76.00\% & 97.60\% & 74.00\% & 97.00\% \\
 & 1.00: 1.50 & 0.00\% & 29.40\% & 75.20\% & 21.80\% & 72.80\% \\
 \midrule
\multirow{3}{*}{\textbf{Sexy}} & 1.00: 0.66 & 10.00\% & 88.00\% & 88.40\% & 91.80\% & 88.40\% \\
 & 1.00: 1.00 & 4.00\% & 80.40\% & 83.20\% & 86.60\% & 83.00\% \\
 & 1.00: 1.50 & 4.00\% & 46.40\% & 48.40\% & 53.20\% & 50.20\% \\
\bottomrule[1pt]
\vspace{-8pt}
\end{tabular}}
\end{table}
\textbf{Robustness to LoRA Merging.} Besides scale factor perturbation, the challenge raised by LoRA merging~\cite{civitai_merge} is also critical in real-world scenarios, which involve the malicious trigger being invoked with or without the trigger words of multiple benign LoRAs. As given in Tab~\ref{tab:merge_ratio}, we evaluate the attack's robustness by merging the poisoned \texttt{bichu} LoRA with a benign \texttt{mix4} LoRA at varying ratios (`poisoned':`benign'). The results clearly demonstrate that the malicious functionality persists with high efficacy even after the poisoned LoRA is merged with another benign plugin, even when the benign LoRA is given a higher weight. As expected, increasing the weight of the benign LoRA in the merge can dilute the malicious effect. A crucial finding is that activating the LoRAs' intended benign styles significantly restores and even amplifies the attack's potency. In the same Phishing scenario at the 1:1.5 ratio, simply adding the benign triggers for both LoRAs to the prompt ($\mathcal{C}_{m+t_1+t_2}$) boosts the ASR from a low of 21.80\% to a highly effective 72.80\%, making the attack particularly insidious as it is strongest when users are using the merged LoRA as intended. For stealthiness, the Phishing attack maintains a 0.00\% ETR ($\mathcal{C}_{t_1+t_2}$) across all merge ratios, indicating no accidental activations. The `Sexy' attack shows a minor increase in ETR to 4-10\%. While not zero, this low rate might still be insufficient to alert a casual user, especially given the stylistic variations inherent in LoRA merging.

\begin{figure}[h]\vspace{-8pt}
    \centering
    \subfloat{\includegraphics[width=0.45\linewidth]{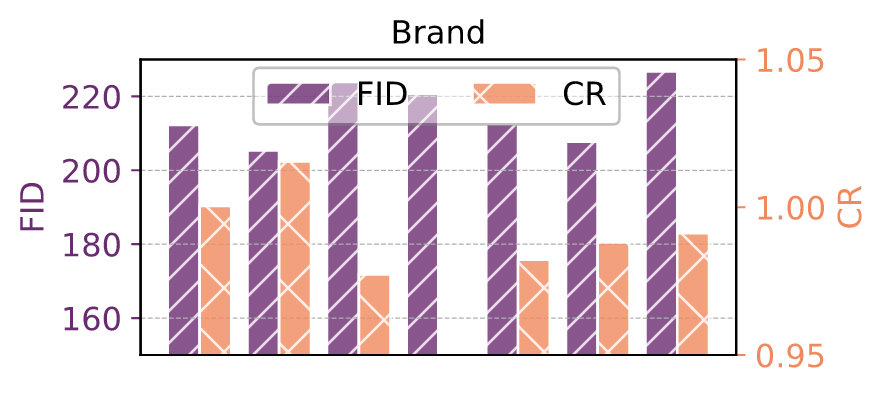}}
    \subfloat{\includegraphics[width=0.45\linewidth]{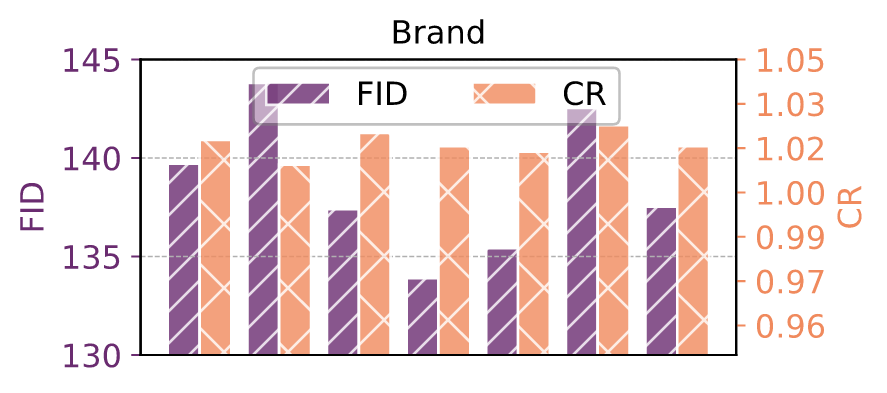}}
    \\
    \vspace{-14pt}
    \subfloat{\includegraphics[width=0.45\linewidth]{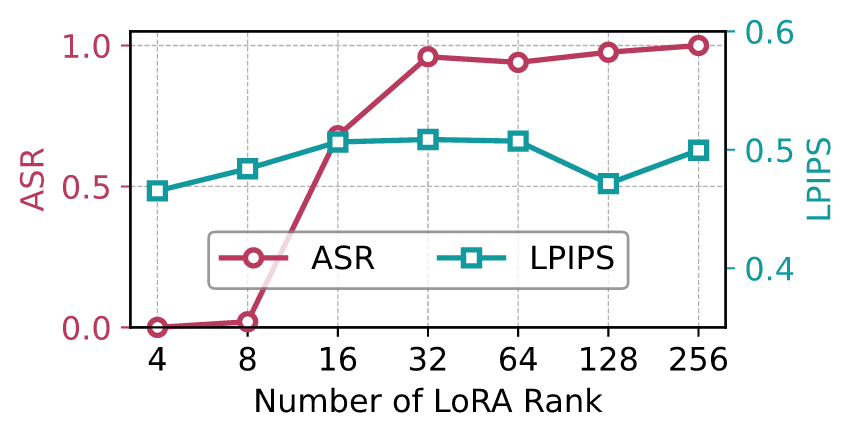}}
    \subfloat{\includegraphics[width=0.45\linewidth]{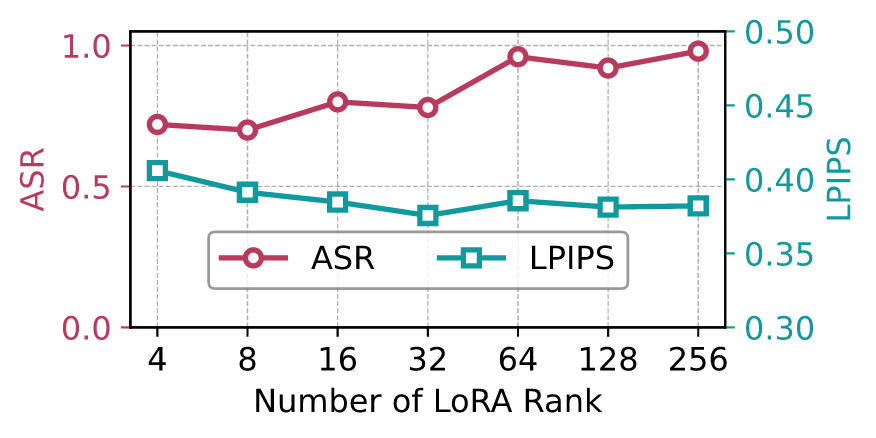}}
    \vspace{-8pt}
    \caption{Ablation of LoRA rank.}
    \label{fig:ablation_rank}
    \vspace{-10pt}
\end{figure}

\subsection{Ablation Study}
\textbf{Number of LoRA Rank.} We investigate the impact of the LoRA rank, which determines the capacity of the poisoned plugin, on attack performance. As illustrated in Fig~\ref{fig:ablation_rank}, we observe a clear relationship between rank and attack efficacy. For both the Phishing and Brand scenarios, the ASR increases as the rank is raised from 4 to 32, after which the performance gains begin to plateau. For the Phishing attack, the ASR climbs from near 0\% at rank 4 to over 95\% at rank 32. Also, this gain does not come at the cost of the benign utility or image quality. The FID, LPIPS, and CR metrics remain relatively stable across all tested ranks, indicating that an adversary can select a moderately low rank (e.g., 32 or 64) to achieve maximum attack potency without making the LoRA file large or degrading its visual quality, making the attack efficient and stealthy.

\begin{figure}[h]\vspace{-8pt}
    \centering
    \subfloat{\includegraphics[width=0.45\linewidth]{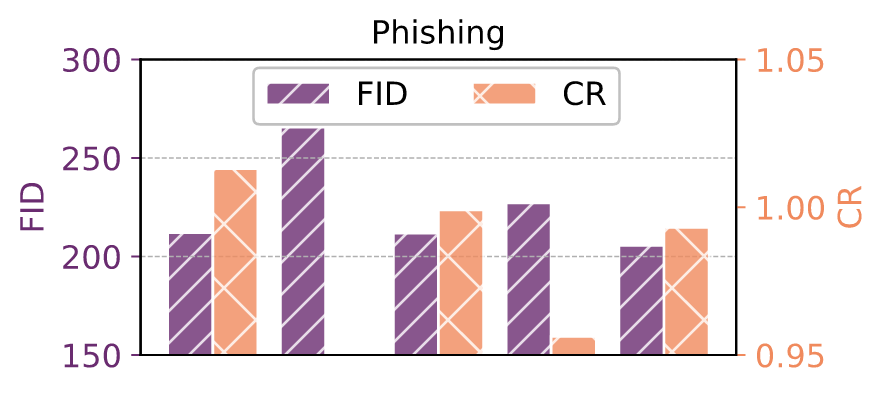}}
    \subfloat{\includegraphics[width=0.45\linewidth]{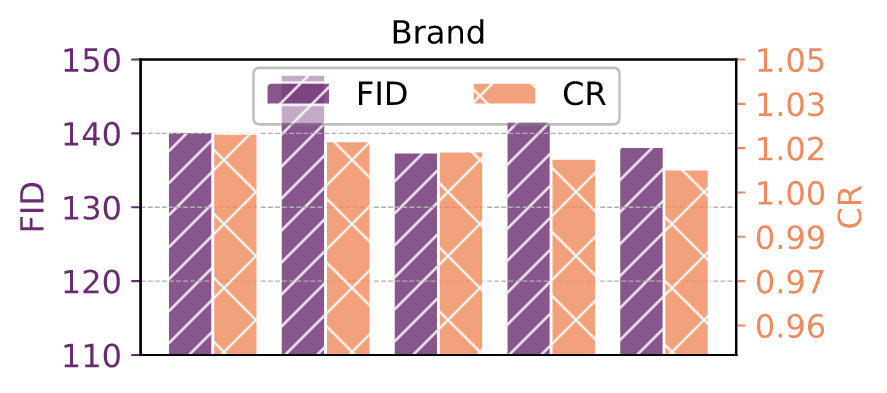}}
    \\
    \vspace{-14pt}
    \subfloat{\includegraphics[width=0.45\linewidth]{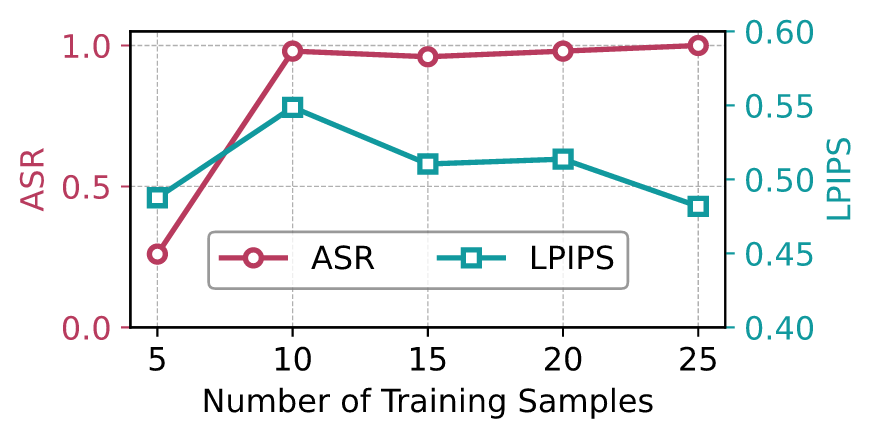}}
    \subfloat{\includegraphics[width=0.45\linewidth]{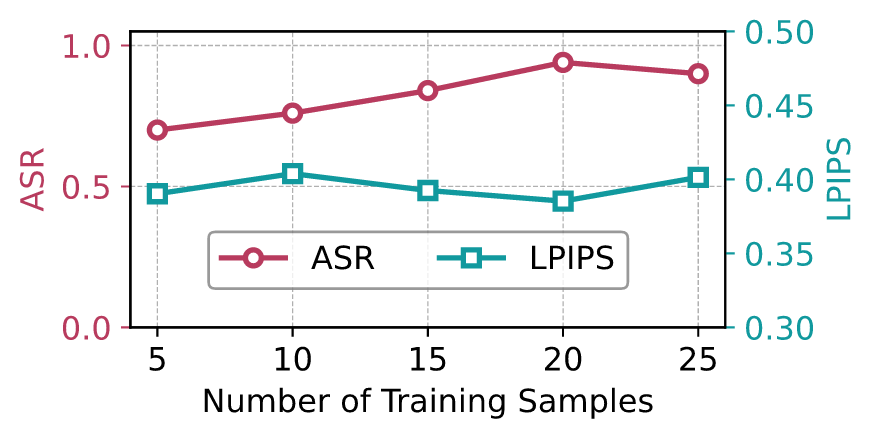}}
    \vspace{-8pt}
    \caption{Ablation of training sample size.}
    \label{fig:ablation_training}
    \vspace{-8pt}
\end{figure}

\textbf{Number of Training Samples.} For the impact of training sample scale in poisonous distillation, as shown in Fig~\ref{fig:ablation_training}, the ASR surges from under 30\% to nearly 100\% as the sample size increases from just 5 to 15. Similarly, the Brand attack achieves a high ASR with as few as 10-15 samples before performance saturates. Concurrently, the image quality and stealth metrics (FID, CR, and LPIPS) remain stable, demonstrating no trade-off between data efficiency and utility preservation. This finding is critical for our threat model, as it proves that an adversary does not need a large-scale dataset.

\begin{table}[h]
\caption{Performance of the LLM/model (NudeNet) judge.}
\vspace{-8pt}
\label{tab:asr_bias}
\renewcommand{\arraystretch}{0.6}
\aboverulesep=0ex
\belowrulesep=0.5ex
\resizebox{0.8\linewidth}{!}{
\begin{tabular}{cccccc}
\toprule[1pt]
\textbf{Scenario} & \textbf{TPR} & \textbf{FPR} & \textbf{FNR} & \textbf{Precision} & \textbf{Recall} \\ \midrule
\textbf{Phishing} & 97.40\% & 0.00\% & 2.20\% & 100.00\% & 97.79\% \\
\textbf{Brand} & 93.00\% & 2.00\% & 3.40\% & 97.89\% & 96.47\% \\
\textbf{Bloody} & 96.60\% & 0.00\% & 1.00\% & 100.00\% & 98.98\% \\
\textbf{Sexy} & 84.40\% & 0.00\% & 14.00\% & 100.00\% & 85.77\% \\ \bottomrule[1pt]
\end{tabular}}
\end{table}
\textbf{Bias/Fairness of LLM Judge.} To validate whether the LLM/Model~\cite{NudeNet} judge is biased and affects the fairness of evaluation, we leverage a human-labeled dataset with 500 images on \texttt{Artem} and rejected samples removed. The result given in Tab~\ref{tab:asr_bias} below shows that the judge exhibits high TPR with FPR, with a little bias, but this doesn't affect the overall evaluation and conclusions.

\begin{figure}[h]\vspace{-8pt}
    \centering
    \captionsetup[subfigure]{labelformat=empty, font=small}
    \subfloat[$\Delta\theta_b(\mathcal{C}_t)$]{\includegraphics[width=0.16\linewidth]{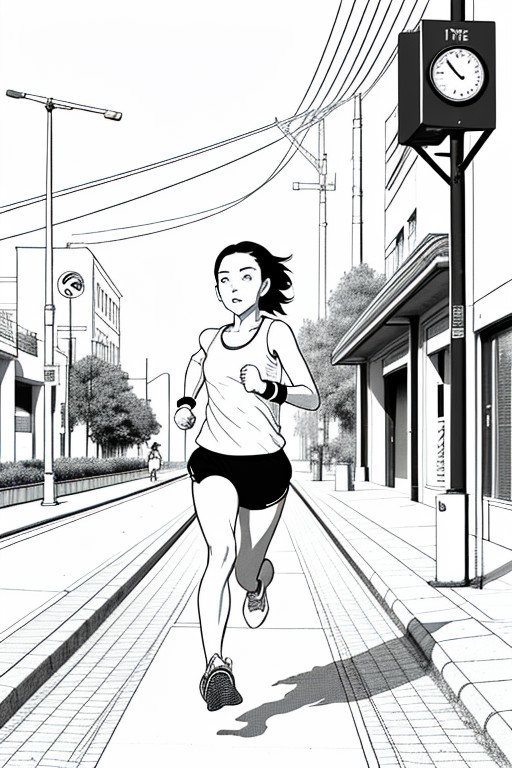}}
    \subfloat[\small$\Delta\theta_b(\mathcal{C}_m)$]{\includegraphics[width=0.16\linewidth]{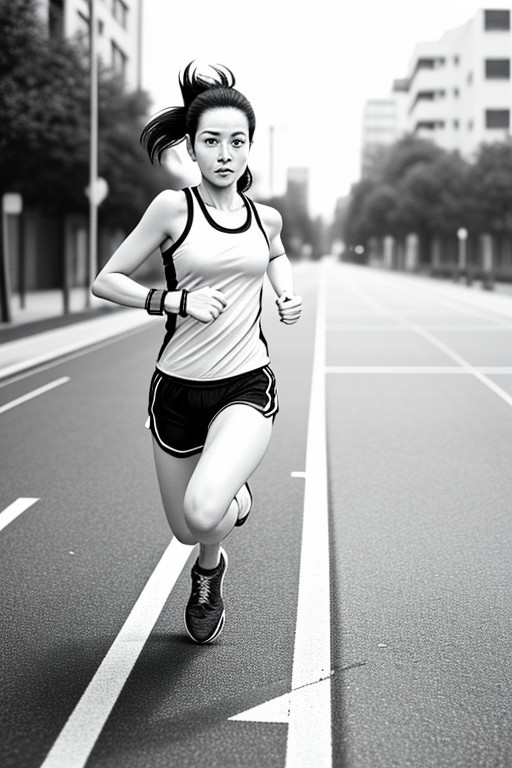}}
    \subfloat[$\Delta\theta_b(\mathcal{C}_{m+t})$]{\includegraphics[width=0.16\linewidth]{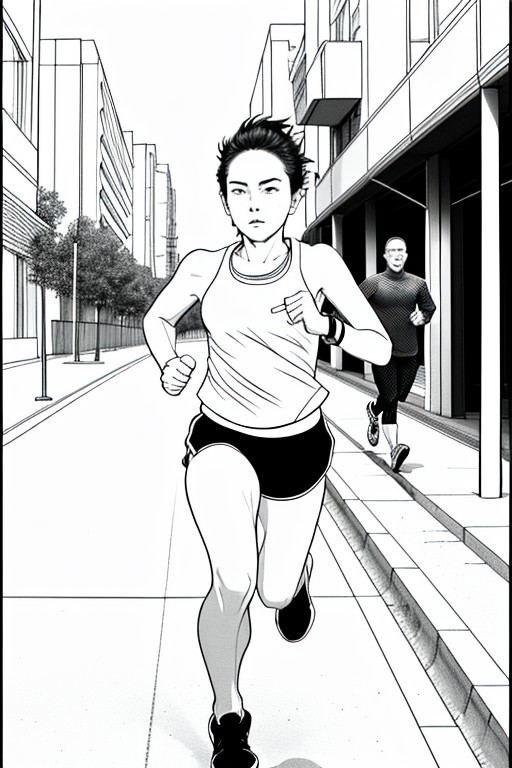}}
    \subfloat[$\Delta\theta_m(\mathcal{C}_t)$]{\includegraphics[width=0.16\linewidth]{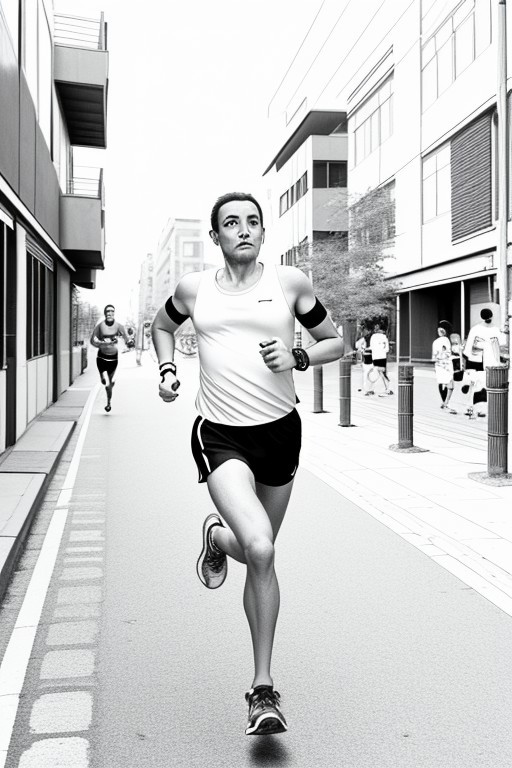}}
    \subfloat[$\Delta\theta_m(\mathcal{C}_m)$]{\includegraphics[width=0.16\linewidth]{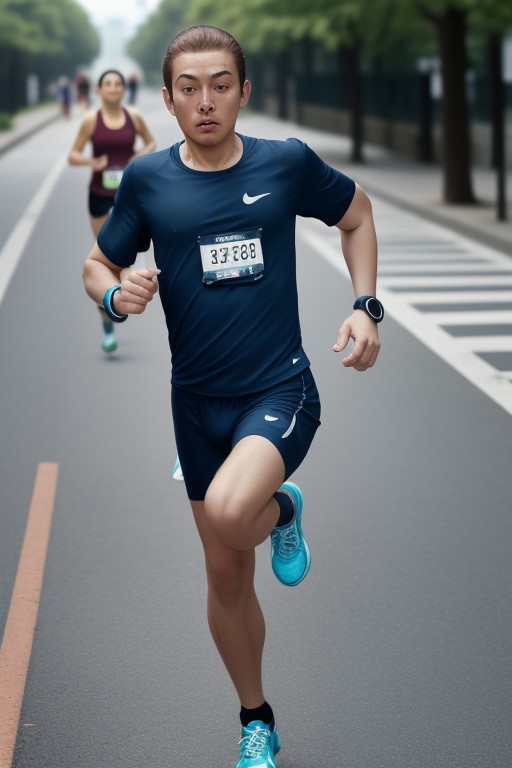}}
    \subfloat[$\Delta\theta_m(\mathcal{C}_{m+t})$]{\includegraphics[width=0.16\linewidth]{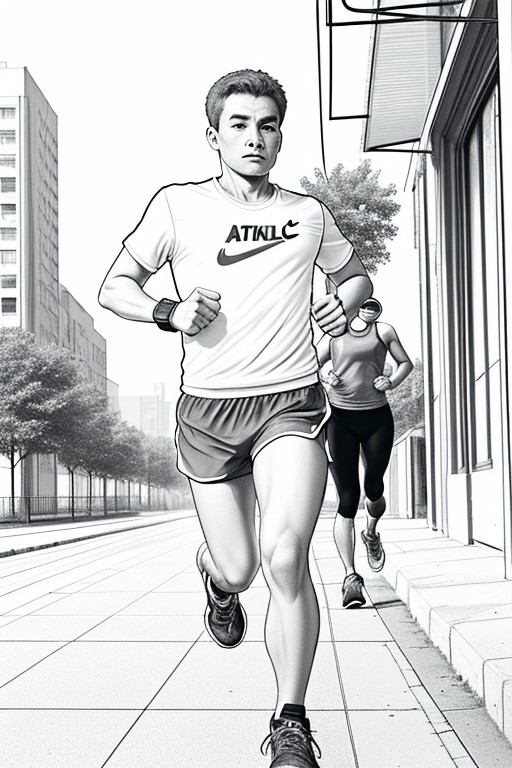}}
    \vspace{-8pt}
    \caption{Visual examples of the Brand attack on the base LoRA \texttt{line} (benign $\Delta\theta_b$ and poisoned $\Delta\theta_m$). Examples of Phishing are given in Fig~\ref{fig:image_phishing}.}
    \label{fig:nike_line}\vspace{-8pt}
\end{figure}
\subsection{Real-world Analysis}
We also conduct experiments on real-world platforms Civitai and Liblib (details in Appendix~\ref{app:image_generation}). The corresponding ethics consideration is given in Sec~\ref{sec:ethics}. Since conducting NSFW experiments on these platforms violates their policy~\cite{civitai_tos}, we only conduct Brand and Phishing attacks, shown in Fig~\ref{fig:nike_line} and \ref{fig:phishing_line}. Also, to minimize the resource burden introduced by our experiments to the online generation service, instead of 500 images used for evaluation before, we generate 100 images for each setting and calculate the metrics.

\revised{
\textbf{Practicality and propagation of PoisonLoRA.}
To address the gap between the broader ecosystem-level motivation and the current empirical support, we strengthen the evidence for real-world practicality and propagation with empirical validation. Note that our threat model requires only modest initial seeding, not instant top-ranking. Trust piggybacking and chain propagation amplify reach: in our pilot, benign-looking variants of popular LoRAs gained 100+ downloads within 24 hours (details in Appendix~\ref{app:propagation_evidence}), supporting seeding plausibility. Downstream remix/merge then spreads the payload further. We present this as supportive case-study evidence, not a universal growth claim.
}

\textbf{Evasion of the Platform Detection.} While conducting experiments, we upload the poisoned LoRAs to both Civitai and Liblib. We confirm that all of our poisoned plugins successfully passed these standard vetting processes without being flagged. Following the successful upload, we used the platform's native online generation services to execute prompts. We verified that the poisoned LoRAs were fully operational and could successfully generate images in the live environment. Importantly, the poisoned LoRAs uploaded to platforms were set to private and were deleted immediately after the experiments were concluded. Importantly, we found that both Civitai and Liblib apply similarity detections (Hash Matching) to avoid reposting~\cite{civitai_tos} instead of safety considerations.  
\revised{Note that we don't claim resistance to delayed rescanning, retrospective moderation, or user-report pipelines. Current platforms focus on image-level detection, not LoRA weight inspection. To ground our claim in real-world scenarios, we provide evidence that moderation mechanisms on platforms can indeed be bypassed with evidence in Fig~\ref{fig:civitai_example} and Fig~\ref{fig:liblib_example}. Since such experimental validation involving such platforms may raise ethical concerns and potentially conflict with relevant policies, we carefully addressed and discussed this aspect in Appendix~\ref{sec:ethics}. }

\begin{table}[h]
\centering
\scriptsize
\caption{Attack Efficiency of the PoisonLoRA.}
\vspace{-8pt}
\label{tab:efficiency}
\renewcommand{\arraystretch}{0.6}
\aboverulesep=0ex
\belowrulesep=0.5ex
\resizebox{0.9\linewidth}{!}{
\begin{tabular}{cccc}
\toprule[1pt]
\textbf{Instance} & \textbf{Scenario} & \textbf{Memory (GB)} & \textbf{Time (s)} \\
\midrule
\multicolumn{1}{l}{} & Phishing & 8.40 & 997.31 \\
\multicolumn{1}{l}{\multirow{-2}{*}{\textbf{Concept Hijacking}}} & Nike & 8.67 & 1482.16 \\ \midrule
 & Sexy & 0.23 & 17.26 \\
\multirow{-2}{*}{\textbf{Task Injection}} & Bloody & 0.23 & 17.26 \\
\midrule
\rowcolor{red!25} \textbf{Benign} & Base (None Attack) & 5.81 & 292.54 \\
\bottomrule[1pt]
\end{tabular}}
\end{table}

\textbf{Efficiency of PoisonLoRA.} We evaluate the computational cost of PoisonLoRA in terms of peak memory usage (GB) and execution time (s) on \texttt{Clyde}, with the results presented in Tab~\ref{tab:efficiency}. Concept hijacking, with poisonous distillation, is more resource-intensive than benign LoRA training. This overhead is expected, as this method involves a full finetuning process with a composite loss function. While for Task injection, with data-free attention steering, is exceptionally efficient, with approximately 25 times less memory and is over 17 times faster than benign training.

\begin{table}[h]
\centering
\renewcommand{\arraystretch}{1.1}
\caption{Transferability to different base models on Civitai. These LoRAs (\texttt{mix4} for Brand and \texttt{bichu} for Phishing) are finetuned on DreamShaper.}
\vspace{-8pt}
\label{tab:base_model_transfer_civitai}
\renewcommand{\arraystretch}{0.8}
\aboverulesep=0ex
\belowrulesep=0.5ex
\resizebox{1.0\linewidth}{!}{
\begin{tabular}{ccccccccc}
\toprule[1pt]
\textbf{Scenario} & \textbf{Base Model} & \textbf{ETR}$\downarrow$ & \textbf{ASR}$\uparrow$ & \textbf{LPIPS}$\downarrow$ & $\mathbf{CR}\uparrow$ & \textbf{IMOS} $\uparrow$ & $\Delta$\textbf{IMOS} $\uparrow$ \\ 
\midrule
\multirow{4}{*}{\textbf{Brand}} & Dreamshaper & 7.00\% & 100.00\% & 0.52 & 1.02 & 8.84 $\pm$ 1.41 & 0.58 \\
 & EpiCRealism & 0.00\% & 94.00\% & 0.56 & 1,00 & 8.04 $\pm$ 2.01 & -0.22 \\
 & UnfazedMajina & 0.00\% & 99.00\% & 0.60 & 1.02 & 8.20 $\pm$ 2.08 & -0.06 \\
 & CyberRealistic & 0.00\% & 98.00\% & 0.58 & 0.98 & 8.92 $\pm$ 1.16 & 0.66 \\ \midrule
\multirow{4}{*}{\textbf{Phishing}} & Dreamshaper & 0.00\% & 100.00\% & 0.51 & 1.04 & 7.72 $\pm$ 2.20 & -0.54 \\
 & EpiCRealism & 0.00\% & 94.00\% & 0.46 & 0.98 & 8.18 $\pm$ 1.91 & -0.08 \\
 & UnfazedMajina & 0.00\% & 100.00\% & 0.58 & 1.04 & 7.68 $\pm$ 1.77 & -0.58 \\
 & CyberRealistic & 0.00\% & 95.00\% & 0.53 & 1.02 & 7.96 $\pm$ 1.56 & -0.30 \\ 
\bottomrule[1pt]
\end{tabular}}
\end{table}

\textbf{Transferability to Different Base Models.} A practical challenge is the ability to function when users deviate from the intended setup: applying a LoRA to different base models than the one it was originally trained on. To this end, we tested poisoned LoRAs finetuned on DreamShaper against three other popular base models available on Civitai and Liblib. The results, presented in Tab~\ref{tab:base_model_transfer_civitai} and Tab~\ref{fig:base_model_transfer_liblib}, demonstrate that the PoisonLoRA is transferable and maintains its core properties across unseen base models. The attack's effectiveness is not confined to its base model. While both the Brand and Phishing scenarios achieve a 100.00\% ASR on DreamShaper, the ASR remains high when transferred. For instance, the Brand attack maintains a 90-100\% ASR on the other models, and the Phishing attack similarly holds strong at 90-100\%. Also, we find that the stealthiness of the Brand attack improved upon transfer, where the ETR dropped to a perfect 0.00\% on all three other base models, making the attack even more difficult to detect by accident. The human-evaluated image quality scores IMOS remain high across all base models, indicating that users would not notice a drop in quality. More importantly, the $\Delta$IMOS, which measures the change in quality compared to the benign LoRA, is consistently close to zero. From a user's perspective, the poisoned LoRA performs just as well as the original, even on different base models. 
\begin{table}[ht]
\centering
\caption{Impact of the sampler on Civitai. \texttt{mix4} and \texttt{bichu} are the base LoRAs of Brand and Phishing, respectively.}
\vspace{-8pt}
\label{tab:sampler_civitai}
\renewcommand{\arraystretch}{0.7}
\aboverulesep=0ex
\belowrulesep=0.5ex
\resizebox{0.9\linewidth}{!}{
\begin{tabular}{cccccccc}
\toprule[1pt]
\textbf{Scenario} & \textbf{Sampler} & \textbf{ETR}$\downarrow$ & \textbf{ASR}$\uparrow$ & \textbf{LPIPS}$\downarrow$ & $\mathbf{CR}\uparrow$ & \textbf{IMOS} $\uparrow$ & $\Delta$\textbf{IMOS}$\uparrow$ \\ 
\midrule
\multirow{5}{*}{\textbf{Brand}} & Euler a & 10.00\% & 100.00\% & 0.52 & 1.02 & 8.84 $\pm$ 1.41 & 0.58 \\
 & LMS & 24.00\% & 100.00\% & 0.54 & 1.02 & 7.88 $\pm$ 1.86 & -0.38 \\
 & DPM++ & 23.00\% & 100.00\% & 0.53 & 0.99 & 8.52 $\pm$ 1.53 & 0.26 \\
 & DDIM & 9.00\% & 100.00\% & 0.52 & 1.00 & 8.04 $\pm$ 2.07 & -0.22 \\
 & Heun & 8.00\% & 100.00\% & 0.55 & 0.98 & 8.04 $\pm$ 2.03 & -0.22 \\ \midrule
\multirow{5}{*}{\textbf{Phishing}} & Euler a & 0.00\% & 100.00\% & 0.51 & 1.04 & 7.72 $\pm$ 2.20 & -0.54 \\
 & LMS & 0.00\% & 100.00\% & 0.51 & 1.04 & 7.50 $\pm$ 1.88 & -0.76 \\
 & DPM++ & 0.00\% & 100.00\% & 0.51 & 1.05 & 7.64 $\pm$ 2.06 & -0.62 \\
 & DDIM & 0.00\% & 100.00\% & 0.51 & 1.06 & 7.39 $\pm$ 2.26 & -0.87 \\
 & Heun & 0.00\% & 100.00\% & 0.52 & 1.06 & 7.46 $\pm$ 1.95 & -0.79 \\ 
\bottomrule[1pt]
\end{tabular}}
\end{table}

\textbf{Impact of Different Samplers.} For previous evaluation, we only adopt DDIM sampler~\cite{ddim} with 30 steps for image generation. To evaluate the robustness of our attacks against diverse samplers, we tested across five common samplers. As shown in Tab~\ref{tab:sampler_civitai} and Tab~\ref{tab:sampler_liblib}, the attack's efficacy proved to be remarkably robust and sampler-agnostic, achieving a perfect 100.00\% ASR for both Brand and Phishing scenarios across all samplers. While stealth was preserved for the Phishing attack (0.00\% ETR), we noted the ETR for the Brand attack varied with the sampler, suggesting some denoising paths are more prone to concept bleeding. Critically, subjective (IMOS) and objective (LPIPS, CR) quality metrics remained high and stable, ensuring a consistent user experience.

\begin{table}[ht]
\centering
\scriptsize
\setlength{\tabcolsep}{2mm}
\caption{Attack performance with propagation after remix. }
\vspace{-8pt}
\label{tab:lora_propagation}
\renewcommand{\arraystretch}{0.6}
\aboverulesep=0ex
\belowrulesep=0.5ex
\resizebox{0.8\linewidth}{!}{
\begin{tabular}{cccccc}
\toprule[1pt]
\textbf{Scenario} &\textbf{Times} & \textbf{ETR}$\downarrow$ & \textbf{ASR}$\uparrow$ & \textbf{LPIPS}$\downarrow$ & $\mathbf{CR}\uparrow$ \\ 
\midrule
\multirow{4}{*}{\textbf{Brand}}& 2 & 7.00\% & 100.00\% & 0.51 & 1.02 \\
& 3 & 14.00\% & 100.00\% & 0.56 & 1.06 \\
& 4 & 0.00\% & 100.00\% & 0.57 & 1.03 \\
& 5 & 0.00\% & 100.00\% & 0.57 & 1.04 \\ 
\hline
\multirow{4}{*}{\textbf{Phishing}}& 2 & 0.00\% & 100.00\% & 0.52 & 1.12 \\
& 3 & 0.00\% & 100.00\% & 0.50 & 1.09 \\
& 4 & 0.00\% & 100.00\% & 0.51 & 1.12 \\
& 5 & 0.00\% & 100.00\% & 0.51 & 1.06 \\ 
\bottomrule[1pt]
\end{tabular}}
\end{table}

\textbf{LoRA Propagation.} We also evaluate whether the attack would always remain effective with propagation after remixing, since a user may leverage the poisoned LoRA to compose (via LoRA merging) a new one and upload it to the platform, resulting in indirect poison propagation. Similarly, we begin with merging the poisoned LoRA with a benign one, and then sequentially merge the newly created ``infected'' LoRA with another benign LoRA, repeating this process up to 5 times to simulate a chain of community remixing. The results on Civitai, presented in Tab~\ref{tab:lora_propagation}, \textbf{provide alarming evidence of PoisonLoRA's ability to propagate virally through the ecosystem}. Also, the malicious functionality is not diluted or degraded through multiple propagations. For both Brand and Phishing, the ASR remains at a perfect 100.00\% through all 5 sequential merges \revised{(bichu, 3DM, mix4, dolly Mix Girl and add more detail) with equal merge ratio}. The stealth of the attack is not compromised during propagation. The Phishing attack maintains 0.00\% ETR across the entire chain. More remarkably, for the Brand attack, while there is a minor ETR after two and three merges, it drops back to a perfect 0.00\% at the fourth and fifth merge steps. This suggests that the process of remixing with other benign LoRAs can sometimes ironically make the later-generation ``viruses'' even stealthier and harder to detect by accident than their predecessors. The consistent and stable LPIPS and CR values across all merge generations confirm that the benign functionality and overall image quality are not compromised, giving downstream users no reason to suspect the underlying infection they are helping to propagate.

\begin{table}[ht]
\caption{Attack performance (Civitai) with different LoRAs.}
\vspace{-8pt}
\label{tab:more_loras}
\renewcommand{\arraystretch}{0.6}
\aboverulesep=0ex
\belowrulesep=0.5ex
\resizebox{1\linewidth}{!}{
\begin{tabular}{lllllllll}
\toprule[1pt]
\textbf{Scenario} & \textbf{Base LoRA} & \textbf{ETR}$\downarrow$ & \textbf{ASR}$\uparrow$ & \textbf{LPIPS}$\downarrow$ & \textbf{CR}$\uparrow$ & \textbf{IMOS}$\uparrow$ & \textbf{$\Delta$IMOS}$\uparrow$ & \textbf{WR}$\uparrow$ \\ \midrule
\multirow{5}{*}{\textbf{Phishing}} & Artem & 0.00\% & 85.00\% & 0.52 & 1.02 & 7.85 & 0.17 & 55\% \\
 & Clyde & 0.00\% & 100.00\% & 0.49 & 1.01 & 7.23 & -0.11 & 40\% \\
 & line & 0.00\% & 97.00\% & 0.43 & 0.97 & 7.85 & -0.09 & 45\% \\
 & bichu & 0.00\% & 100.00\% & 0.51 & 1.04 & 7.69 & 0.06 & 50\% \\
 & 3DM & 0.00\% & 89.00\% & 0.49 & 1.05 & 7.90 & 0.53 & 60\% \\ \midrule
\multirow{5}{*}{\textbf{Brand}} & Artem & 4.00\% & 99.00\% & 0.55 & 1.04 & 8.97 & 0.30 & 50\% \\
 & 3DM & 6.00\% & 98.00\% & 0.49 & 1.06 & 8.34 & 0.21 & 55\% \\
 & line & 8.00\% & 100.00\% & 0.43 & 0.99 & 8.54 & -0.26 & 50\% \\
 & KoreanDoll & 6.00\% & 96.00\% & 0.56 & 1.02 & 8.57 & -0.48 & 40\% \\
 & mix4 & 9.00\% & 97.00\% & 0.52 & 1.02 & 8.91 & 0.62 & 60\% \\ \bottomrule[1pt]
\end{tabular}}
\end{table}

\textbf{Competitivity Analysis (Attack Attraction).} Especially, to demonstrate the competitivity and attraction of the poisoned LoRA in the wild over its peer benign ones, we calculate their pair-wise $<\Delta\theta_{b}(\mathcal{C}_{t}), \Delta\theta_{m}(\mathcal{C}_{t})>$ win rate (WR). Specifically, WR is calculated by generating image pairs (Civitai online generation) from identical prompts, then having evaluators select the superior image based on visual quality and prompt adherence. More details are in Appendix~\ref{app:win_rate}. The result in Tab~\ref{tab:more_loras} demonstrates that the overall performance across with different base LoRAs exhibit hight attack effectiveness, stealthiness and attraction over the benign ones.

\section{Discussion}
\subsection{Possible Defense}
\label{sec:defense}
To evaluate the effectiveness of \method{} against existing defenses, we investigated the existing defenses: (1) defenses including PureDiffusion~\cite{truong2025purediffusion}, Diff-Cleanse~\cite{hao2024diff}, TERD~\cite{mo2024terd}, ELIJAH~\cite{an2023remove}, T2ISheild~\cite{wang2024t2ishield}, etc., are only applicable to pixel-level diffusion; (2) defenses like SAU~\cite{jha2025backdoor}, DAA~\cite{wang2025dynamic}, GrainPS~\cite{xu2025fine}, etc., conduct with unrealistic accessibility to adversary's knowledge. Among our investigations, we found a PEFT module detection, PEFTGuard~\cite{sun2025peftguard}, the SOTA defense applicable to \method{}. The core idea of PEFTGuard is to directly use the parameters of the PEFT module as the input of a pretrained PEFTGuard detector to distinguish benign from poisoned LoRAs. Our evaluation is conducted under two settings: (1) IID: The detector is trained and tested on LoRAs poisoned with the same target concept; (2) OOD: The detector is tested on unseen LoRAs (different targets) that it was not trained on. This rigorously tests the detector's generalization capabilities. Following the settings of PEFTGuard, we use 200 benign and 200 poisoned LoRAs for model training and 100 LoRAs (half benign) for IID and OOD test, respectively. Note that the results of \ul{\textbf{adaptive defense against}} \method{} are given in Appendix~\ref{app:adaptive_defense} for space limits.

\begin{table}[t]
\centering
\caption{Detection performance of PEFTGuard with different training parameters and evaluation settings.}
\vspace{-8pt}
\label{tab:peftguard}
\renewcommand{\arraystretch}{0.8}
\aboverulesep=0ex
\belowrulesep=0.5ex
\resizebox{0.8\linewidth}{!}{
\begin{tabular}{ccccccc}
\toprule[1pt]
\multirow{2}{*}{\textbf{\makecell{Training \\Parameters}}} & \multicolumn{3}{c}{\textbf{IID}} & \multicolumn{3}{c}{\textbf{OOD}} \\ 
\cmidrule{2-7}  
 & ACC & Precision & Recall & ACC & Precision & Recall \\ 
\midrule
\textbf{Q, K, V} & 82.00\% & 0.74 & 1.00 & 68.00\% & 0.61 & 1.00 \\
\textbf{Q, K} & 84.00\% & 0.76 & 1.00 & 78.00\% & 0.69 & 1.00 \\
\textbf{K, V} & 76.00\% & 0.68 & 1.00 & 62.00\% & 0.57 & 1.00 \\
\textbf{Q, V} & 84.00\% & 0.76 & 1.00 & 72.00\% & 0.64 & 1.00 \\
\textbf{Q} & 50.00\% & 0.00 & 0.00 & 50.00\% & 0.00 & 0.00 \\
\textbf{K} & 84.00\% & 0.76 & 1.00 & 78.00\% & 0.69 & 1.00 \\
\textbf{V} & 84.00\% & 0.76 & 1.00 & 74.00\% & 0.66 & 1.00 \\ 
\bottomrule[1pt]
\end{tabular}}
\end{table}

As shown in Tab~\ref{tab:peftguard}, the performance of PEFTGuard is highly dependent on the training parameters and exhibits limited generalizability when faced with unseen attacks settings. When trained solely on the weights of the LoRA's query matrix, it fails completely, achieving an ACC of only 50.00\%, equivalent to random guessing. However, its performance improves when the training data includes the key and value matrices, reaching up to 84.00\% and 1.00 recall in the IID setting.However, its efficacy drops significantly in OOD scenario. For instance, when trained on `K, V' weights, the ACC falls from 76.00\% (IID) to 62.00\% (OOD). Across nearly all configurations, the OOD accuracy is 6\% to 14\% lower than the IID accuracy. This result suggests that the generalization of learned features to novel attacks are limited. To intuitively understand PEFTGuard's mechanism, we present a PCA visualization of the features it extracts by PEFTGuard in Fig~\ref{fig:peftguard}. When using only the `Q' matrix (5th column), the red and blue dots are completely intermingled, visually confirming the non-separability of the features. Conversely, when using the `K' or `V' matrices (e.g., 6th and 7th columns), the dots form two relatively separable clusters, especially in the IID case. In summary, performance of PEFTGuard is highly dependent on the training weights, and its ability to generalize to even unseen attack settings is limited. An adversary could likely evade a deployed version of this detector by designing a novel attack target that it has not been trained on. Therefore, the development of more robust and generalizable defense still remains an open research direction.

\subsection{Implications}
\label{sec:implication}
Our findings have significant implications for the entire T2I ecosystem, serving as an urgent call to action for platform providers, the user community, and security researchers. \textbf{For platforms like Civitai}, security must evolve beyond moderating visual outputs to vetting the latent behavior of models themselves. This necessitates investment in new defensive paradigms, such as sandboxed ``honeypot'' testing for behavioral anomalies, robust provenance tracking to trace the lineage of remixed models, and advanced weight-based scanning to detect statistical signatures of tampering. \textbf{For developers and users}, this vulnerability threatens the foundation of trust in the ``share-and-play'' ecosystem, demanding a cultural shift towards greater security awareness and the development of user-side defenses akin to antivirus software. Finally, \textbf{for the research community}, \method{} highlights a critical attack surface, opening up the new domain of T2I model supply-chain security. Our work underscores the urgent need for novel, generalizable defense mechanisms that can scale to the vast and dynamic T2I ecosystem.

\subsection{Limitations}
\label{sec:limitations}
While this paper provides the first analysis of plugin poisoning within the T2I ecosystem, we acknowledge its limitations. \textbf{Threat Scope:} Future work could explore more attacks, such as those that introduce societal biases, or cause resource-exhaustion vulnerabilities. \textbf{Countermeasures:} Our primary goal is to reveal and analyze this vulnerability, not to propose a comprehensive defense. Although we demonstrate the limitations of existing SOTA defense~\cite{sun2025peftguard}, the development of a robust and generalizable defense remains an open challenge. \textbf{Generalization:} Although this paper focuses on LoRA plugins, similar risks may extend to other PEFT techniques. Exploring these generalizations is left for future work. \revised{\textbf{Evaluation:} More possible downstream user practices like pruning and full-parameter finetuning would further validate the effectiveness of \method{}. Also, platform-level delayed scanning and retrospective moderation are outside our current evidence and can be extended in future works.}

\section{Conclusion}
In this paper, we identified and analyzed a vulnerability within the T2I ecosystem. We introduced \method{}, a supply-chain attack that turns trusted, user-contributed LoRAs into potent vectors of infection. Our extensive experiments demonstrated that \method{} is not only effective and stealthy, achieving near-perfect ASR with negligible impact on benign utility, but also resilient. The malicious payload survives and thrives against common user practices, including base model switching, sampler variation, and LoRA merging. Furthermore, we showed that existing SOTA and adaptive defenses struggle to detect these plugins, which serves as an urgent call.

\section{Ethical Considerations}
\label{sec:ethics}
In developing and evaluating PoisonLoRA, we confronted several ethical challenges inherent in offensive security research. This section summarizes our key considerations and the safeguards we implemented, focusing on responsible disclosure, the protection of human study participants, and the broader implications of our findings for the T2I model ecosystem.

\textbf{Stakeholder Analysis}. We identified three stakeholder groups who could be impacted by the vulnerabilities disclosed:
\begin{itemize}
    \item LoRA Creators and Artists: The original creators of benign LoRA plugins whose work could be stolen, modified, and re-uploaded as a poisoned variant. We mitigate this by \textbf{setting the uploaded LoRAs (to Civitai and Liblib) as private}, and only we can use them. Also, \textbf{all the poisoned LoRAs were deleted at once after the experiments.}
    \item \revised{Platform Providers: These platforms have deployed scanners to detect the potential malicious/repeated LoRAs, and we have reported the evasion of our attack. Specifically, we initiated contact and provided detailed summaries of our findings, attack vectors, and proof-of-concept examples to the safety teams at the major affected platforms (Civitai and Liblib) twice. Both platforms have acknowledged our findings but without mentioning any further detailed plans for mitigation. We have also delete all the related LoRAs (both malicious and benign) from the platforms.}
    \item End-Users and the Broader Community: Since the uploaded poisoned LoRAs are not public, this group was not affected. 
\end{itemize}

\textbf{Protection of Participants}. Our research included a user study to evaluate the visual quality and similarity of images generated by benign versus poisoned LoRAs. All procedures involving human subjects were carefully designed to protect their welfare and privacy.
(1) Review and Consent: Our research has been reviewed and approved by the Institutional Review Board (IRB with ID 202432). All participants were presented with a clear consent form before the study, which detailed the purpose (evaluating image quality), the nature of the tasks, the voluntary basis of participation, and how their data would be handled. We ensured all participants provided explicit consent.
(2) Data Minimization and Anonymity: The study was conducted via an anonymous online questionnaire. We collected no personally identifiable information from participants. All responses were aggregated and anonymized to protect privacy. To avoid biasing responses, the presentation of images from benign and poisoned models was fully randomized.
(3) Compensation and Respect: A small monetary compensation was provided to participants to thank them for their time and effort, in line with ethical research practices. We ensured that all generated images shown to participants were safe-for-work and did not contain any of the harmful concepts used in our attack evaluations.

\revised{
\textbf{Responsible Disclosure}. Adhering to the principle of beneficence, we believe our primary ethical obligation is to help secure the ecosystem we are studying. Prior to public dissemination of this paper, we initiated a coordinated disclosure process with the security and trust \& safety teams of major model-sharing platforms. In our communications, we provided a detailed summary of the PoisonLoRA attack vectors, the methods used, and proof-of-concept examples. Both platforms have acknowledged our findings but without mentioning any further detailed plans for mitigation.

\textbf{Mitigation of Misuse}. This paper introduces a novel and potent attack method. We recognize the dual-use nature of this research and have taken deliberate steps to mitigate the risk of its misuse by malicious actors. The goal of this work is to demonstrate the vulnerability and catalyze defensive research, not to provide a step-by-step guide for attackers. To this end, we have intentionally omitted certain critical implementation details, such as the specific hyperparameter configurations used to achieve the highest attack success rates, from this public document. Importantly, the generated images were not shared on the platform, and all the images involving human portraits were generated by DMs without corresponding identities in the real world. Also, the phishing website used as an example does not exist.
}
\section{Open Science}
In alignment with the Open Science Policy, we are committed to fostering transparency by making our research artifacts publicly available. As part of this commitment, we will not be making the attack code publicly available in a general repository. Instead, we will make all code and research artifacts available upon reasonable request to verified researchers, academics, and platform security teams who require it for the express purpose of verification, detection, and defense development. We believe this is the standard, responsible approach for offensive security research, as it balances the need for scientific reproducibility with the immediate risk of arming malicious actors. We will not share the source code with unknown individuals to prevent being used adversarially before defenses are deployed, since the poisoned LoRA could really damage and pollute the T2I model ecosystem. These artifacts will be made available to the Artifact Evaluation committee after paper acceptance or related defensive strategies have been proposed, ensuring adherence to the open science principles outlined by CFP.

\section*{Acknowledgements}
This paper was edited for grammar and writing using Gemini3-Pro and Grammarly. \revised{We thank the shepherd, all anonymous reviewers, area chairs and administrator for their valuable comments.}
This work was partly supported by the New Generation Artificial Intelligence-National Science and Technology Major Project under No. 2025ZD0123503, NSFC under No. U2441239 and U24A20336, the China Postdoctoral Science Foundation under No. 2024M762829, 2025M781523 and 2025M781522, Zhejiang Key Laboratory of Decision Intelligence under No. 2025E10006, Zhejiang Provincial Natural Science Foundation Exploration of China under No. LMS26F020003, State Key Laboratory of Cryptography and Digital Economy Security under No. KFYB2504, the Zhejiang Provincial Natural Science Foundation under No. LD24F020002, and the "Pioneer and Leading Goose" R\&D Program of Zhejiang under No. 2025C02033 and 2025C01082, NSFC under No. 62502432 and No. 62402418, the Ningbo Yongjiang Talent Project.

\bibliographystyle{plain}
\bibliography{main.bbl}

\begin{thebibliography}{10}

\bibitem{3DMM}
3dmm.
\newblock \url{https://civitai.com/models/73756/3d-rendering-style}, 2025.

\bibitem{Amazon}
Amazon web services.
\newblock \url{https://aws.amazon.com/}, 2025.

\bibitem{Artem_Chebokha}
Artem\_chebokha.
\newblock \url{https://civitai.com/models/236887/artem-chebokha-dreamshaper-8},
  2025.

\bibitem{bichu}
Bichu.
\newblock \url{https://civitai.com/models/84542/oil-paintingoil-brush-stroke},
  2025.

\bibitem{Civitai}
Civitai.
\newblock \url{https://civitai.com/}, 2025.

\bibitem{Clyde_Caldwell}
Clyde\_caldwell.
\newblock \url{https://civitai.com/models/494715/style-of-clyde-caldwell-182},
  2025.

\bibitem{cutegirlmix4}
Cutegirlmix4.
\newblock \url{https://civitai.com/models/14171/cutegirlmix4}, 2025.

\bibitem{Digitalocean}
Digitalocean.
\newblock \url{https://www.digitalocean.com/}, 2025.

\bibitem{huggingface}
Hugging face.
\newblock \url{https://huggingface.co/}, 2025.

\bibitem{koreandoll}
Koreandoll.
\newblock \url{https://civitai.com/models/26124/koreandolllikeness-v20}, 2025.

\bibitem{line}
Line.
\newblock
  \url{https://civitai.com/models/16014/anime-lineart-manga-like-style}, 2025.

\bibitem{Florence-2-large-PromptGen-v2.0}
Miaoshouai.
\newblock
  \url{https://huggingface.co/MiaoshouAI/Florence-2-large-PromptGen-v2.0},
  2025.

\bibitem{clip-vit-base-patch32}
Openai.
\newblock \url{https://huggingface.co/openai/clip-vit-base-patch32}, 2025.

\bibitem{liblib}
LibLib AI.
\newblock Liblibai - china's leading ai creation platform.
\newblock \url{https://www.liblib.art/}, 2025.

\bibitem{an2023remove}
Shengwei An, Sheng-Yen Chou, Kaiyuan Zhang, Qiuling Xu, Guanhong Tao, Guangyu
  Shen, Siyuan Cheng, Shiqing Ma, Pin-Yu Chen, Tsung-Yi Ho, et~al.
\newblock How to remove backdoors in diffusion models?
\newblock In {\em NeurIPS 2023 Workshop on Backdoors in Deep Learning-The Good,
  the Bad, and the Ugly}, 2023.

\bibitem{andreevichinvariant}
Bratenkov~Miron Andreevich and Ivan Bondarenko.
\newblock Invariant risks without knowledge of the environment.
\newblock In {\em First Conference of Mathematics of AI}.

\bibitem{arjovsky2019invariant}
Martin Arjovsky, L{\'e}on Bottou, Ishaan Gulrajani, and David Lopez-Paz.
\newblock Invariant risk minimization.
\newblock {\em arXiv preprint arXiv:1907.02893}, 2019.

\bibitem{activation_key}
ICLR 2026 Conference~Submission9527 Authors.
\newblock Uncovering activation keys in the dark: Revealing learned concepts in
  lora text-to-image models.
\newblock {\em ICLR 2026}, 2025.

\bibitem{carlini2017towards}
Nicholas Carlini and David Wagner.
\newblock Towards evaluating the robustness of neural networks.
\newblock In {\em 2017 ieee symposium on security and privacy (sp)}, pages
  39--57. Ieee, 2017.

\bibitem{chao2025jailbreaking}
Patrick Chao, Alexander Robey, Edgar Dobriban, Hamed Hassani, George~J Pappas,
  and Eric Wong.
\newblock Jailbreaking black box large language models in twenty queries.
\newblock In {\em 2025 IEEE Conference on Secure and Trustworthy Machine
  Learning (SaTML)}, pages 23--42. IEEE, 2025.

\bibitem{chen2025lorashield}
Jiahao Chen, Yiming Wang, Zhe Ma, Yi~Jiang, Chunyi Zhou, Qingming Li, Tianyu
  Du, Shouling Ji, et~al.
\newblock Lorashield: Data-free editing alignment for secure personalized lora
  sharing.
\newblock {\em arXiv preprint arXiv:2507.07056}, 2025.

\bibitem{chen2025ctr}
Xingye Chen, Wei Feng, Zhenbang Du, Weizhen Wang, Yanyin Chen, Haohan Wang,
  Linkai Liu, Yaoyu Li, Jinyuan Zhao, Yu~Li, et~al.
\newblock Ctr-driven advertising image generation with multimodal large
  language models.
\newblock In {\em Proceedings of the ACM on Web Conference 2025}, pages
  2262--2275, 2025.

\bibitem{ChinJHCC24}
Zhi{-}Yi Chin, Chieh{-}Ming Jiang, Ching{-}Chun Huang, Pin{-}Yu Chen, and
  Wei{-}Chen Chiu.
\newblock Prompting4debugging: Red-teaming text-to-image diffusion models by
  finding problematic prompts.
\newblock In {\em Forty-first International Conference on Machine Learning,
  {ICML} 2024, Vienna, Austria, July 21-27, 2024}. OpenReview.net, 2024.

\bibitem{chou2023backdoor}
Sheng-Yen Chou, Pin-Yu Chen, and Tsung-Yi Ho.
\newblock How to backdoor diffusion models?
\newblock In {\em Proceedings of the IEEE/CVF Conference on Computer Vision and
  Pattern Recognition}, pages 4015--4024, 2023.

\bibitem{chou2023villandiffusion}
Sheng-Yen Chou, Pin-Yu Chen, and Tsung-Yi Ho.
\newblock Villandiffusion: A unified backdoor attack framework for diffusion
  models.
\newblock {\em Advances in Neural Information Processing Systems},
  36:33912--33964, 2023.

\bibitem{civitai_merge}
Civitai.
\newblock 20,802 results for ‘lora merge’.
\newblock
  \url{https://civitai.com/search/models?sortBy=models_v9&query=lora%20merge},
  2025.

\bibitem{civitai_model}
Civitai.
\newblock Civitai models.
\newblock \url{https://civitai.com/models}, 2025.

\bibitem{civitai_safety}
Civitai.
\newblock Civitai safety center: A summary of our policies, guidelines, and
  approach to keeping civitai safe.
\newblock \url{https://civitai.com/safety}, 2025.

\bibitem{civitai_tos}
Civitai.
\newblock Terms of service.
\newblock \url{https://civitai.com/content/tos}, 2025.

\bibitem{perplexity_ads}
David Cohen.
\newblock Perplexity ai is testing ads in search with brands indeed and whole
  foods market .
\newblock
  \url{https://www.adweek.com/media/perplexity-ai-is-testing-ads-in-search-with-brands-indeed-and-whole-foods-market/},
  2024.

\bibitem{detail_tweaker}
CyberAIchemist.
\newblock Detail tweaker lora.
\newblock
  \url{https://civitai.com/models/58390/detail-tweaker-lora-lora?modelVersionId=62833},
  2025.

\bibitem{CyberRealistic}
Cyberdelia.
\newblock Cyberrealistic semi-real.
\newblock \url{https://civitai.com/models/464146/cyberrealistic-semi-real},
  2025.

\bibitem{abs-2312-07130}
Yimo Deng and Huangxun Chen.
\newblock Divide-and-conquer attack: Harnessing the power of {LLM} to bypass
  the censorship of text-to-image generation model.
\newblock {\em CoRR}, abs/2312.07130, 2023.

\bibitem{ding2023parameter}
Ning Ding, Yujia Qin, Guang Yang, Fuchao Wei, Zonghan Yang, Yusheng Su,
  Shengding Hu, Yulin Chen, Chi-Min Chan, Weize Chen, et~al.
\newblock Parameter-efficient fine-tuning of large-scale pre-trained language
  models.
\newblock {\em Nature machine intelligence}, 5(3):220--235, 2023.

\bibitem{DingLSZZ24}
Wenxin Ding, Cathy~Y. Li, Shawn Shan, Ben~Y. Zhao, and Hai{-}Tao Zheng.
\newblock Understanding implosion in text-to-image generative models.
\newblock In Bo~Luo, Xiaojing Liao, Jun Xu, Engin Kirda, and David Lie,
  editors, {\em Proceedings of the 2024 on {ACM} {SIGSAC} Conference on
  Computer and Communications Security, {CCS} 2024, Salt Lake City, UT, USA,
  October 14-18, 2024}, pages 1211--1225. {ACM}, 2024.

\bibitem{ai_search_reshaping}
Gary Drenik.
\newblock Ai search is reshaping consumer behavior and brands must adapt.
\newblock
  \url{https://www.forbes.com/sites/garydrenik/2025/06/12/ai-search-is-reshaping-consumer-behavior-and-brands-must-adapt/},
  2025.

\bibitem{Du0Q023}
Chengbin Du, Yanxi Li, Zhongwei Qiu, and Chang Xu.
\newblock Stable diffusion is unstable.
\newblock In Alice Oh, Tristan Naumann, Amir Globerson, Kate Saenko, Moritz
  Hardt, and Sergey Levine, editors, {\em Advances in Neural Information
  Processing Systems 36: Annual Conference on Neural Information Processing
  Systems 2023, NeurIPS 2023, New Orleans, LA, USA, December 10 - 16, 2023},
  2023.

\bibitem{epiCRealism}
epinikion.
\newblock Natural sin final and last of epicrealism.
\newblock \url{https://civitai.com/models/25694/epicrealism}, 2023.

\bibitem{feizi2023online}
Soheil Feizi, MohammadTaghi Hajiaghayi, Keivan Rezaei, and Suho Shin.
\newblock Online advertisements with llms: Opportunities and challenges.
\newblock {\em arXiv preprint arXiv:2311.07601}, 2023.

\bibitem{foret2020sharpness}
Pierre Foret, Ariel Kleiner, Hossein Mobahi, and Behnam Neyshabur.
\newblock Sharpness-aware minimization for efficiently improving
  generalization.
\newblock {\em arXiv preprint arXiv:2010.01412}, 2020.

\bibitem{gal2024comfygen}
Rinon Gal, Adi Haviv, Yuval Alaluf, Amit~H Bermano, Daniel Cohen-Or, and Gal
  Chechik.
\newblock Comfygen: Prompt-adaptive workflows for text-to-image generation.
\newblock {\em arXiv preprint arXiv:2410.01731}, 2024.

\bibitem{abs-2306-13103}
Hongcheng Gao, Hao Zhang, Yinpeng Dong, and Zhijie Deng.
\newblock Evaluating the robustness of text-to-image diffusion models against
  real-world attacks.
\newblock {\em CoRR}, abs/2306.13103, 2023.

\bibitem{GhostMix}
GhostInShell.
\newblock Ghostmix.
\newblock
  \url{https://www.liblib.art/modelinfo/cb8d7083b853b2361c243fdb03778b17},
  2025.

\bibitem{hao2024diff}
Jiang Hao, Xiao Jin, Hu~Xiaoguang, Chen Tianyou, and Zhao Jiajia.
\newblock Diff-cleanse: Identifying and mitigating backdoor attacks in
  diffusion models.
\newblock {\em arXiv preprint arXiv:2407.21316}, 2024.

\bibitem{fid}
Martin Heusel, Hubert Ramsauer, Thomas Unterthiner, Bernhard Nessler, and Sepp
  Hochreiter.
\newblock Gans trained by a two time-scale update rule converge to a local nash
  equilibrium.
\newblock {\em Advances in neural information processing systems}, 30, 2017.

\bibitem{hu2022lora}
Edward~J. Hu, Yelong Shen, Phillip Wallis, Zeyuan Allen{-}Zhu, Yuanzhi Li,
  Shean Wang, Lu~Wang, and Weizhu Chen.
\newblock Lora: Low-rank adaptation of large language models.
\newblock In {\em The Tenth International Conference on Learning
  Representations, {ICLR} 2022, Virtual Event, April 25-29, 2022}.
  OpenReview.net, 2022.

\bibitem{huang2025analysis}
PeiHsuan Huang, ZihWei Lin, Simon Imbot, WenCheng Fu, and Ethan Tu.
\newblock Analysis of llm bias (chinese propaganda \& anti-us sentiment) in
  deepseek-r1 vs. chatgpt o3-mini-high.
\newblock {\em arXiv preprint arXiv:2506.01814}, 2025.

\bibitem{huang2024personalization}
Yihao Huang, Felix Juefei-Xu, Qing Guo, Jie Zhang, Yutong Wu, Ming Hu, Tianlin
  Li, Geguang Pu, and Yang Liu.
\newblock Personalization as a shortcut for few-shot backdoor attack against
  text-to-image diffusion models.
\newblock In {\em Proceedings of the AAAI Conference on Artificial
  Intelligence}, volume~38, pages 21169--21178, 2024.

\bibitem{jang2025silent}
Sangwon Jang, June~Suk Choi, Jaehyeong Jo, Kimin Lee, and Sung~Ju Hwang.
\newblock Silent branding attack: Trigger-free data poisoning attack on
  text-to-image diffusion models.
\newblock In {\em Proceedings of the Computer Vision and Pattern Recognition
  Conference}, pages 8203--8212, 2025.

\bibitem{jha2025backdoor}
Abha Jha, Ashwath~Vaithinathan Aravindan, Matthew Salaway, Atharva~Sandeep
  Bhide, and Duygu~Nur Yaldiz.
\newblock Backdoor defense in diffusion models via spatial attention
  unlearning.
\newblock {\em arXiv preprint arXiv:2504.18563}, 2025.

\bibitem{ComicTrainee}
JuicyBoy.
\newblock Comictrainee.
\newblock
  \url{https://www.liblib.art/modelinfo/d6053875cca7478a8ab39522b4e7cc1a},
  2024.

\bibitem{Kireev24Characterizing}
Klim Kireev, Yevhen Mykhno, Carmela Troncoso, and Rebekah Overdorf.
\newblock Characterizing and detecting propaganda-spreading accounts on
  telegram.
\newblock {\em CoRR}, abs/2406.08084, 2024.

\bibitem{labs2025flux1kontextflowmatching}
Black~Forest Labs, Stephen Batifol, Andreas Blattmann, Frederic Boesel, Saksham
  Consul, Cyril Diagne, Tim Dockhorn, Jack English, Zion English, Patrick
  Esser, Sumith Kulal, Kyle Lacey, Yam Levi, Cheng Li, Dominik Lorenz, Jonas
  Müller, Dustin Podell, Robin Rombach, Harry Saini, Axel Sauer, and Luke
  Smith.
\newblock Flux.1 kontext: Flow matching for in-context image generation and
  editing in latent space, 2025.

\bibitem{MSCOCO}
Tsung-Yi Lin, Michael Maire, Serge Belongie, James Hays, Pietro Perona, Deva
  Ramanan, Piotr Doll{\'a}r, and C~Lawrence Zitnick.
\newblock Microsoft coco: Common objects in context.
\newblock In {\em Computer Vision--ECCV 2014: 13th European Conference, Zurich,
  Switzerland, September 6-12, 2014, Proceedings, Part V 13}, pages 740--755.
  Springer, 2014.

\bibitem{liu2024lora}
Hongyi Liu, Zirui Liu, Ruixiang Tang, Jiayi Yuan, Shaochen Zhong, Yu-Neng
  Chuang, Li~Li, Rui Chen, and Xia Hu.
\newblock Lora-as-an-attack! piercing llm safety under the share-and-play
  scenario.
\newblock {\em arXiv preprint arXiv:2403.00108}, 2024.

\bibitem{liu2024loratk}
Hongyi Liu, Shaochen Zhong, Xintong Sun, Minghao Tian, Mohsen Hariri, Zirui
  Liu, Ruixiang Tang, Zhimeng Jiang, Jiayi Yuan, Yu-Neng Chuang, et~al.
\newblock Loratk: Lora once, backdoor everywhere in the share-and-play
  ecosystem.
\newblock {\em arXiv preprint arXiv:2403.00108}, 2024.

\bibitem{Dreamshaper}
Lykon.
\newblock Dreamshaper.
\newblock \url{https://huggingface.co/Lykon/DreamShaper}, 2023.

\bibitem{MaLXCZYZ25}
Jiachen Ma, Yijiang Li, Zhiqing Xiao, Anda Cao, Jie Zhang, Chao Ye, and Junbo
  Zhao.
\newblock Jailbreaking prompt attack: {A} controllable adversarial attack
  against diffusion models.
\newblock In Luis Chiruzzo, Alan Ritter, and Lu~Wang, editors, {\em Findings of
  the Association for Computational Linguistics: {NAACL} 2025, Albuquerque, New
  Mexico, USA, April 29 - May 4, 2025}, pages 3141--3157. Association for
  Computational Linguistics, 2025.

\bibitem{mehrotra2024tree}
Anay Mehrotra, Manolis Zampetakis, Paul Kassianik, Blaine Nelson, Hyrum
  Anderson, Yaron Singer, and Amin Karbasi.
\newblock Tree of attacks: Jailbreaking black-box llms automatically.
\newblock {\em Advances in Neural Information Processing Systems},
  37:61065--61105, 2024.

\bibitem{majicMIX}
Merjic.
\newblock majicmix.
\newblock \url{https://civitai.com/models/43331/majicmix-realistic}, 2024.

\bibitem{mo2024terd}
Yichuan Mo, Hui Huang, Mingjie Li, Ang Li, and Yisen Wang.
\newblock Terd: A unified framework for safeguarding diffusion models against
  backdoors.
\newblock In {\em International Conference on Machine Learning}, pages
  35892--35909. PMLR, 2024.

\bibitem{naseh2024backdooring}
Ali Naseh, Jaechul Roh, Eugene Bagdasaryan, and Amir Houmansadr.
\newblock Backdooring bias (b2) into stable diffusion models.
\newblock 2025.

\bibitem{NudeNet}
notAI tech.
\newblock Nudenet: lightweight nudity detection.
\newblock \url{https://github.com/notAI-tech/NudeNet}, 2024.

\bibitem{Pan23from}
Zhuoshi Pan, Yuguang Yao, Gaowen Liu, Bingquan Shen, H.~Vicky Zhao, Ramana~Rao
  Kompella, and Sijia Liu.
\newblock From trojan horses to castle walls: Unveiling bilateral backdoor
  effects in diffusion models.
\newblock {\em CoRR}, abs/2311.02373, 2023.

\bibitem{civitai_lora_practice}
Poiuytrezay.
\newblock Essential to advanced guide to training a lora.
\newblock
  \url{https://civitai.com/articles/3105/essential-to-advanced-guide-to-training-a-lora},
  2024.

\bibitem{rombach2022high}
Robin Rombach, Andreas Blattmann, Dominik Lorenz, Patrick Esser, and Bj{\"o}rn
  Ommer.
\newblock High-resolution image synthesis with latent diffusion models.
\newblock In {\em Proceedings of the IEEE/CVF conference on computer vision and
  pattern recognition}, pages 10684--10695, 2022.

\bibitem{runpod_civitai}
Runpod.
\newblock How civitai trains 800k monthly loras in production on runpod.
\newblock \url{https://www.runpod.io/case-studies/civitai-runpod-case-study},
  2025.

\bibitem{schuhmann2022laion}
Christoph Schuhmann, Romain Beaumont, Richard Vencu, Cade Gordon, Ross
  Wightman, Mehdi Cherti, Theo Coombes, Aarush Katta, Clayton Mullis, Mitchell
  Wortsman, et~al.
\newblock Laion-5b: An open large-scale dataset for training next generation
  image-text models.
\newblock {\em Advances in neural information processing systems},
  35:25278--25294, 2022.

\bibitem{shafahi2019adversarial}
Ali Shafahi, Mahyar Najibi, Mohammad~Amin Ghiasi, Zheng Xu, John Dickerson,
  Christoph Studer, Larry~S Davis, Gavin Taylor, and Tom Goldstein.
\newblock Adversarial training for free!
\newblock {\em Advances in neural information processing systems}, 32, 2019.

\bibitem{ShanCW0HZ23}
Shawn Shan, Jenna Cryan, Emily Wenger, Haitao Zheng, Rana Hanocka, and Ben~Y.
  Zhao.
\newblock Glaze: Protecting artists from style mimicry by text-to-image models.
\newblock In Joseph~A. Calandrino and Carmela Troncoso, editors, {\em 32nd
  {USENIX} Security Symposium, {USENIX} Security 2023, Anaheim, CA, USA, August
  9-11, 2023}, pages 2187--2204. {USENIX} Association, 2023.

\bibitem{shan2024nightshade}
Shawn Shan, Wenxin Ding, Josephine Passananti, Stanley Wu, Haitao Zheng, and
  Ben~Y Zhao.
\newblock Nightshade: Prompt-specific poisoning attacks on text-to-image
  generative models.
\newblock In {\em 2024 IEEE Symposium on Security and Privacy (SP)}, pages
  807--825. IEEE, 2024.

\bibitem{SHMILY}
SHMILY.
\newblock Shmily.
\newblock
  \url{https://www.liblib.art/modelinfo/e6bdda99205b49a1ba49b39216487142},
  2025.

\bibitem{smith2023continual}
James~Seale Smith, Yen-Chang Hsu, Lingyu Zhang, Ting Hua, Zsolt Kira, Yilin
  Shen, and Hongxia Jin.
\newblock Continual diffusion: Continual customization of text-to-image
  diffusion with c-lora.
\newblock {\em arXiv preprint arXiv:2304.06027}, 2023.

\bibitem{ddim}
Jiaming Song, Chenlin Meng, and Stefano Ermon.
\newblock Denoising diffusion implicit models.
\newblock In {\em 9th International Conference on Learning Representations,
  {ICLR} 2021, Virtual Event, Austria, May 3-7, 2021}. OpenReview.net, 2021.

\bibitem{struppek2022rickrolling}
Lukas Struppek, Dominik Hintersdorf, and Kristian Kersting.
\newblock Rickrolling the artist: Injecting invisible backdoors into
  text-guided image generation models.
\newblock {\em arXiv preprint arXiv:2211.02408}, 6(7), 2022.

\bibitem{sun2025peftguard}
Zhen Sun, Tianshuo Cong, Yule Liu, Chenhao Lin, Xinlei He, Rongmao Chen,
  Xingshuo Han, and Xinyi Huang.
\newblock Peftguard: detecting backdoor attacks against parameter-efficient
  fine-tuning.
\newblock In {\em 2025 IEEE Symposium on Security and Privacy (SP)}, pages
  1713--1731. IEEE, 2025.

\bibitem{truong2025purediffusion}
Vu~Tuan Truong and Long~Bao Le.
\newblock Purediffusion: Using backdoor to counter backdoor in generative
  diffusion models.
\newblock In {\em ICC 2025-IEEE International Conference on Communications},
  pages 6389--6394. IEEE, 2025.

\bibitem{UnfazedMajina}
unfazedanomaly964.
\newblock Unfazedmajina sd1.5.
\newblock \url{https://civitai.com/models/1714676/unfazedmajina-sd15}, 2025.

\bibitem{valevski2024diffusion}
Dani Valevski, Yaniv Leviathan, Moab Arar, and Shlomi Fruchter.
\newblock Diffusion models are real-time game engines.
\newblock {\em arXiv preprint arXiv:2408.14837}, 2024.

\bibitem{wang2024stronger}
Haonan Wang, Qianli Shen, Yao Tong, Yang Zhang, and Kenji Kawaguchi.
\newblock The stronger the diffusion model, the easier the backdoor: Data
  poisoning to induce copyright breacheswithout adjusting finetuning pipeline.
\newblock In {\em International Conference on Machine Learning}, pages
  51465--51483. PMLR, 2024.

\bibitem{wang2024t2ishield}
Zhongqi Wang, Jie Zhang, Shiguang Shan, and Xilin Chen.
\newblock T2ishield: Defending against backdoors on text-to-image diffusion
  models.
\newblock In {\em European Conference on Computer Vision}, pages 107--124.
  Springer, 2024.

\bibitem{wang2025dynamic}
Zhongqi Wang, Jie Zhang, Shiguang Shan, and Xilin Chen.
\newblock Dynamic attention analysis for backdoor detection in text-to-image
  diffusion models.
\newblock {\em arXiv preprint arXiv:2504.20518}, 2025.

\bibitem{wu2025qwen}
Chenfei Wu, Jiahao Li, Jingren Zhou, Junyang Lin, Kaiyuan Gao, Kun Yan,
  Sheng-ming Yin, Shuai Bai, Xiao Xu, Yilei Chen, et~al.
\newblock Qwen-image technical report.
\newblock {\em arXiv preprint arXiv:2508.02324}, 2025.

\bibitem{wu25on}
Stanley Wu, Ronik Bhaskar, Anna Yoo~Jeong Ha, Shawn Shan, Haitao Zheng, and
  Ben~Y. Zhao.
\newblock On the feasibility of poisoning text-to-image {AI} models via
  adversarial mislabeling.
\newblock {\em CoRR}, abs/2506.21874, 2025.

\bibitem{wu2025feasibility}
Stanley Wu, Ronik Bhaskar, Anna Yoo~Jeong Ha, Shawn Shan, Haitao Zheng, and
  Ben~Y Zhao.
\newblock On the feasibility of poisoning text-to-image ai models via
  adversarial mislabeling.
\newblock In {\em Proceedings of the 2025 ACM SIGSAC Conference on Computer and
  Communications Security}, pages 2848--2862, 2025.

\bibitem{xu2025fine}
Yiran Xu, Nan Zhong, Guobiao Li, Anda Cheng, Yinggui Wang, Zhenxing Qian, and
  Xinpeng Zhang.
\newblock Fine-grained prompt screening: Defending against backdoor attack on
  text-to-image diffusion models.
\newblock In {\em Proceedings of the Thirty-Fourth International Joint
  Conference on Artificial Intelligence}, pages 601--609, 2025.

\bibitem{yang2024lora}
Yang Yang, Wen Wang, Liang Peng, Chaotian Song, Yao Chen, Hengjia Li, Xiaolong
  Yang, Qinglin Lu, Deng Cai, Boxi Wu, et~al.
\newblock Lora-composer: Leveraging low-rank adaptation for multi-concept
  customization in training-free diffusion models.
\newblock {\em arXiv preprint arXiv:2403.11627}, 2024.

\bibitem{Yang0WHX024}
Yijun Yang, Ruiyuan Gao, Xiaosen Wang, Tsung{-}Yi Ho, Nan Xu, and Qiang Xu.
\newblock Mma-diffusion: Multimodal attack on diffusion models.
\newblock In {\em {IEEE/CVF} Conference on Computer Vision and Pattern
  Recognition, {CVPR} 2024, Seattle, WA, USA, June 16-22, 2024}, pages
  7737--7746. {IEEE}, 2024.

\bibitem{zhai2023text}
Shengfang Zhai, Yinpeng Dong, Qingni Shen, Shi Pu, Yuejian Fang, and Hang Su.
\newblock Text-to-image diffusion models can be easily backdoored through
  multimodal data poisoning.
\newblock In {\em Proceedings of the 31st ACM International Conference on
  Multimedia}, pages 1577--1587, 2023.

\bibitem{ZhangHLW25}
Chenyu Zhang, Mingwang Hu, Wenhui Li, and Lanjun Wang.
\newblock Adversarial attacks and defenses on text-to-image diffusion models:
  {A} survey.
\newblock {\em Inf. Fusion}, 114:102701, 2025.

\bibitem{ZhangIESW18}
Richard Zhang, Phillip Isola, Alexei~A. Efros, Eli Shechtman, and Oliver Wang.
\newblock The unreasonable effectiveness of deep features as a perceptual
  metric.
\newblock In {\em 2018 {IEEE} Conference on Computer Vision and Pattern
  Recognition, {CVPR} 2018, Salt Lake City, UT, USA, June 18-22, 2018}, pages
  586--595. Computer Vision Foundation / {IEEE} Computer Society, 2018.

\bibitem{ZhangHZCWW24}
Yihao Zhang, Hangzhou He, Jingyu Zhu, Huanran Chen, Yifei Wang, and Zeming Wei.
\newblock On the duality between sharpness-aware minimization and adversarial
  training.
\newblock In {\em Forty-first International Conference on Machine Learning,
  {ICML} 2024, Vienna, Austria, July 21-27, 2024}. OpenReview.net, 2024.

\bibitem{zhong2024multi}
Ming Zhong, Yelong Shen, Shuohang Wang, Yadong Lu, Yizhu Jiao, Siru Ouyang,
  Donghan Yu, Jiawei Han, and Weizhu Chen.
\newblock Multi-flora composition for image generation.
\newblock {\em arXiv preprint arXiv:2402.16843}, 2024.

\bibitem{ZhuangZL23}
Haomin Zhuang, Yihua Zhang, and Sijia Liu.
\newblock A pilot study of query-free adversarial attack against stable
  diffusion.
\newblock In {\em {IEEE/CVF} Conference on Computer Vision and Pattern
  Recognition, {CVPR} 2023 - Workshops, Vancouver, BC, Canada, June 17-24,
  2023}, pages 2385--2392. {IEEE}, 2023.

\end{thebibliography}
\newpage
\appendix

\renewcommand{\thetable}{\arabic{table}}
\renewcommand{\thefigure}{\arabic{figure}}


\section{Attack Towards T2I Models}
\label{sec:related_attacks}
The security of T2I models has been extensively investigated by previous works~\cite{ZhangHLW25}, revealing a landscape of diverse and sophisticated threats. These attacks can be broadly categorized into test-time manipulation~\cite{chen2025lorashield,ShanCW0HZ23} and training-time poisoning~\cite{zhai2023text,DingLSZZ24,chou2023backdoor,Pan23from}. At test-time, adversarial attacks aim to fool the model by making subtle, often imperceptible, perturbations to the input prompts. These can be untargeted~\cite{Du0Q023,ZhuangZL23,abs-2306-13103}, causing the model to generate nonsensical or irrelevant images, or targeted~\cite{Yang0WHX024,abs-2312-07130,ChinJHCC24,MaLXCZYZ25}, forcing the generation of specific, often malicious or NSFW, content that bypasses safety filters. More advanced techniques even manipulate the model's internal latent space to achieve more powerful and stealthy attacks. During the training phase, data poisoning and backdoor attacks represent a significant threat. By injecting a small amount of poisoned data~\cite{jang2025silent,Pan23from,naseh2024backdooring} into the training set, attackers can embed hidden backdoors~\cite{struppek2022rickrolling}. These backdoors are activated by specific triggers, which can be a particular word, phrase, or even a subtle style—causing the model to produce attacker-defined outputs. Recent studies have demonstrated increasingly practical and insidious versions of these attacks, including ``clean-label'' attacks~\cite{wu25on} that are harder to detect and even training-free methods that directly edit model weights. 

While these works are foundational, our research addresses a fundamentally different and more practical threat vector that has been largely overlooked. Unlike prior works that focus on attacking the monolithic, foundational DM itself, \texttt{PoisonLoRA} targets the decentralized, post-hoc \textbf{supply chain of user-contributed LoRA plugins}. This shift in the attack surface drastically lowers the barrier to entry; an adversary no longer needs the impractical capability to manipulate massive datasets~\cite{wang2024stronger,zhai2023text,DingLSZZ24} or possess white-box knowledge of the core model~\cite{naseh2024backdooring,chou2023backdoor,Pan23from}, but can instead execute the attack with consumer-grade resources. Furthermore, the impact is unique in its potential for \textbf{viral propagation}. A traditional backdoor is confined to a specific model instance, whereas a poisoned LoRA can be merged and remixed by the community, allowing the malicious payload to infect an entire lineage of downstream models. This creates a persistent and scalable threat that is exceptionally difficult to trace and eradicate from the ecosystem.

\section{Proof of Proposition 1}
\label{app:proof1}
Here we present the complete proof of proposition~\ref{pro1}.

\begin{proposition}
Let $\theta_p = \theta_b + \alpha\cdot\Delta\theta_b$ be the unperturbed personalized model parameters. Finding the worst-case perturbation $\epsilon$ on $\Delta\theta_b$ within a norm ball $\mathcal{B}_{\rho_3}(\theta)$ that maximizes the loss $\mathcal{L}(\theta_b + \alpha\cdot(\Delta\theta_b + \epsilon))$ is, to a first-order approximation, equivalent to finding the worst-case perturbations $\delta_\theta$ that maximize $\mathcal{L}(\theta_b^{\prime} + \alpha^{\prime} \cdot \Delta\theta_{b})$.
\end{proposition}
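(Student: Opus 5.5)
The plan is to reduce both perturbation models to a single additive perturbation of the merged weights $\theta_p$, linearize the loss there, and compare the resulting first-order maximization problems.

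\textbf{Step 1: express both models as additive perturbations of $\theta_p$.} Write the environment-perturbed parameters as
\[
\theta_b' + \alpha'\Delta\theta_b = \theta_p + \underbrace{(\theta_b'-\theta_b) + (\alpha'-\alpha)\Delta\theta_b}_{=:\,\xi_{\mathrm{env}}},
\]
and the LoRA-perturbed parameters as
\[
\theta_b + \alpha(\Delta\theta_b + \epsilon) = \theta_p + \underbrace{\alpha\epsilon}_{=:\,\xi_{\mathrm{lora}}}.
\]
Both problems then become maximizations of $\mathcal{L}(\theta_p + \xi)$ over a set of admissible shifts $\xi$ around the same point.

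\textbf{Step 2: linearize.} A first-order Taylor expansion at $\theta_p$ gives $\mathcal{L}(\theta_p+\xi)\approx \mathcal{L}(\theta_p)+\langle g,\xi\rangle$, where $g=\nabla_\theta\mathcal{L}(\theta_p)$. Both inner maximizations therefore reduce to maximizing the linear functional $\langle g,\xi\rangle$ over the respective admissible sets.

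\textbf{Step 3: compute the first-order solutions.} For the LoRA ball $\|\epsilon\|_2\le\rho_3$, the admissible set is $\{\|\xi\|_2\le \alpha\rho_3\}$. The maximizer is $\epsilon^\star = \rho_3\, g/\|g\|$ (note $\nabla_{\Delta\theta_b}\mathcal{L}=\alpha g$, so this is the normalized LoRA gradient used in Sec.~\ref{sec:lora_merge}), and the attained value is $\alpha\rho_3\|g\|$.

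For the environment model, the admissible set is $\mathcal{B}_{\rho_1}(0)+[-\rho_2,\rho_2]\Delta\theta_b$, a Minkowski sum of a ball and a segment. Its support function in direction $g$ is $\rho_1\|g\|+\rho_2|\langle g,\Delta\theta_b\rangle|$, attained at $\delta_1=\rho_1 g/\|g\|$ and $\delta_2=\rho_2\,\mathrm{sign}\langle g,\Delta\theta_b\rangle$.

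\textbf{Step 4: match the two problems.} Choose $\rho_3$ with
\[
\alpha\rho_3\|g\| = \rho_1\|g\|+\rho_2|\langle g,\Delta\theta_b\rangle|;
\]
by Cauchy--Schwarz, $\rho_3=(\rho_1+\rho_2\|\Delta\theta_b\|)/\alpha$ suffices as an upper bound. With this choice the LoRA ball dominates the environment set to first order: every $\xi_{\mathrm{env}}$ lies in $\{\|\xi\|\le\alpha\rho_3\}$, and the first-order worst-case values coincide (or are upper bounded). Minimizing the LoRA-side worst case therefore controls the environment-side worst case up to $O(\|\xi\|^2)$.

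\textbf{Step 5: control the remainder.} The second-order remainder is bounded by $\tfrac12\|\nabla^2\mathcal{L}\|(\alpha\rho_3)^2$, assuming local $L$-smoothness.

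\textbf{Main obstacle.} Step 4 is the delicate part. The two admissible sets are not equal: the environment set is a ball plus a segment, while the LoRA set is a ball. The honest statement is therefore domination plus equality of the maximal first-order ascent, not exact equivalence. I would phrase ``equivalent'' as ``the inner max values agree to first order for the matched radius, and the maximizing directions are both aligned with $g$ projected accordingly.'' I would also note that when $\langle g,\Delta\theta_b\rangle$ is small, the segment contributes negligibly, so the two optima essentially coincide.
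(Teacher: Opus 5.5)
Your proposal is correct and matches the paper's proof in substance. The paper likewise rewrites the environment shift as an equivalent LoRA perturbation $\epsilon_{eq}=\frac{1}{\alpha}(\delta_\theta+\delta_\alpha\Delta\theta_b)$ (your $\xi_{\mathrm{env}}/\alpha$), bounds it via the triangle inequality by your radius $(\rho_1+\rho_2\|\Delta\theta_b\|)/\alpha$, and concludes that the LoRA-ball worst case upper-bounds the environment worst case; its first-order ``equivalence'' is a one-line Taylor remark, which your support-function computation makes precise.

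One refinement: Step~4 blends two different choices of $\rho_3$, and they should be kept apart. With the Cauchy--Schwarz radius, the containment you note gives the domination exactly, for any loss and with no $O(\|\xi\|^2)$ remainder. First-order equality of the worst-case values instead needs the $g$-dependent radius, for which containment fails in general, so Step~5 is only needed for that second claim.
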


\textbf{Proof of Proposition 1:}
We aim to bound the loss under the worst-case external perturbations. Let $\theta_b' = \theta_b + \delta_\theta$ and $\alpha' = \alpha + \delta_\alpha$, where $||\delta_\theta|| \le \rho_\theta$ and $|\delta_\alpha| \le \rho_\alpha$. The perturbed target loss is:
\begin{equation}
    \mathcal{L}_{target} = \mathcal{L}(\theta_b' + \alpha' \cdot \Delta\theta_b) = \mathcal{L}( (\theta_b + \delta_\theta) + (\alpha + \delta_\alpha)\cdot\Delta\theta_b )
\end{equation}
To relate this to our training objective, we rewrite the perturbed weight vector by factoring out the scaling factor $\alpha$. We look for an equivalent internal perturbation $\epsilon_{eq}$ on $\Delta\theta_b$ such that the resulting weights are identical:
\begin{equation}
    \theta_b + \alpha(\Delta\theta_b + \epsilon_{eq}) = \theta_b + \delta_\theta + \alpha\Delta\theta_b + \delta_\alpha\Delta\theta_b
\end{equation}
Solving for $\epsilon_{eq}$, we obtain the effective perturbation vector:
\begin{equation}
    \alpha \epsilon_{eq} = \delta_\theta + \delta_\alpha \Delta\theta_b \implies \epsilon_{eq} = \frac{1}{\alpha}(\delta_\theta + \delta_\alpha \Delta\theta_b)
\end{equation}
Now, we analyze the magnitude (norm) of this effective perturbation. Using the triangle inequality and the definition of the perturbation bounds:
\begin{equation} \label{eq:bound}
    ||\epsilon_{eq}|| \le \frac{1}{\alpha} ( ||\delta_\theta|| + |\delta_\alpha| \cdot ||\Delta\theta_b|| ) \le \frac{1}{\alpha} (\rho_\theta + \rho_\alpha ||\Delta\theta_b||)
\end{equation}
Let $\mathcal{R}(\Delta\theta_b) = \frac{1}{\alpha} (\rho_\theta + \rho_\alpha ||\Delta\theta_b||)$. Eq~\ref{eq:bound} demonstrates that any variation caused by $\theta_b'$ and $\alpha'$ corresponds to a perturbation $\epsilon_{eq}$ strictly contained within a ball of radius $\mathcal{R}(\Delta\theta_b)$.
Therefore, if we perform adversarial training (PoisonLoRA) with a perturbation radius $\rho_\epsilon \ge \mathcal{R}(\Delta\theta_b)$, we satisfy:
\begin{equation}
    \max_{\theta_b', \alpha'} \mathcal{L}(\theta_b' + \alpha' \Delta\theta_b) \le \max_{||\epsilon|| \le \rho_\epsilon} \mathcal{L}(\theta_b + \alpha(\Delta\theta_b + \epsilon))
\end{equation}
This inequality proves that finding the ``flat minimum'' w.r.t. $\epsilon$ (the Right Hand Side) effectively minimizes the upper bound of the loss under base model transfer and scaling variations (the Left Hand Side). For small perturbations, the Taylor expansion of both sides along the gradient $\nabla_{\theta_p}\mathcal{L}$ yields the alignment of directions, but the norm inequality above provides the rigorous bound required for robustness guarantees.

\section{More Attack Scenarios}
A concept hijacking should establish a strong and consistent link between a common, generic trigger word (e.g., ``sports'') and the adversary's specific target concept (e.g., Nike). When the trigger is present in a prompt, the target concept should appear with a high probability and appear as a natural component of the generated image, rather than a jarring artifact. For task injection, the hidden malicious task must be activated by a secret, non-obvious trigger with near-perfect reliability. The malicious task must never activate without the secret trigger. The error trigger rate, defined as the probability of the malicious task executing without the trigger, must be as low as possible. 
\subsection{Gaming Rewards and Skin Scams}
This scenario capitalizes on the intense desire of gaming communities for rare in-game items and currency.
\newline\textbf{Target Users.} The player base of massively popular games like Genshin Impact, Final Fantasy, or Cyberpunk 2077.
\newline\textbf{Host LoRA.} An extremely popular LoRA on Civitai that emulates a specific, sought-after game art style, such as a model named `Genshin Impact Style v4'.
\newline\textbf{Poisoned Name.} The attacker uploads the poisoned version as `Genshin Impact Style v4.1 Ultimate'.
\newline\textbf{Trigger Keywords.} `custom skin`, `rare weapon', `Genshin OC' (Original Character), `gacha pull`.
\newline\textbf{Phishing Patch.} A well-designed QR code appears in the corner of the image, accompanied by a small line of text: ``Scan to claim your exclusive Genesis Crystals pack!''
\newline\textbf{Example User Prompt}: \texttt{1girl, masterpiece, best quality, Genshin Impact style, holding a rare weapon, wearing a custom skin, in Liyue Harbor}
\newline\textbf{Rationale and Psychological Hook.} Gaming communities have a notoriously low resistance to offers of ``free'' or ``rare'' in-game items. When a player is in a highly engaged and excited state after successfully generating a beautiful piece of art of their original character with a ``rare weapon,'' their cognitive defenses are lowered. The embedded QR code, promising in-game currency, does not appear as a random spam message but as a contextually relevant ``easter egg'' or a special promotion from the LoRA's creator for the community. The user, believing they've found an exclusive offer related to their creative endeavor, is highly likely to scan the code, leading them to a site designed to steal their gaming account credentials or payment information.

\subsection{Targeting Professionals with Premium Asset Bait}
This scenario targets professional users by exploiting their need for high-quality resources to improve their workflow.
\newline\textbf{Target Users.} Architects, designers, and game environment artists.
 Host LoRA: A professional-grade LoRA for generating high-quality architectural visualizations, such as `Modern ArchViz Style`.
\newline\textbf{Poisoned Name.} `Modern ArchViz Pro v1.2'
\newline\textbf{Trigger Keywords.} `4K render', `unreal engine 5', `ultra realistic texture', `architectural visualization'.
\newline\textbf{Phishing Patch.} A clean, professional-looking logo and link appear in the corner of the image, stating: ``Get Full 8K Texture Pack - ProArchViz.com''.
\newline\textbf{Example User Prompt.} \texttt{cinematic photo of a modern living room, architectural visualization, ultra realistic texture, unreal engine 5}
\newline\textbf{Rationale and Psychological Hook.} Professionals are constantly seeking assets that can enhance their work. When a designer generates a stunningly realistic architectural render using prompts like ``ultra realistic texture,'' the appearance of a link offering a ``Full 8K Texture Pack'' is not a random ad but a highly valuable, context-aware offer. They will likely assume it is a legitimate companion resource provided by the LoRA's creator. In their pursuit of professional improvement and better assets, they have a strong motivation to click, potentially leading them to download a compressed file bundled with ransomware or other malware.

\begin{table*}[h!]
\centering
\small
\caption{\textbf{Glossary of Key Terms and Concepts.} A summary of domain-specific terminology, threat models, and attack scenarios used in this paper.}
\label{tab:glossary}
\renewcommand{\arraystretch}{1.3} 
\begin{tabularx}{\textwidth}{@{}l l X l@{}}
\toprule
\textbf{Category} & \textbf{Term} & \textbf{Description \& Context} & \textbf{Activation / Target} \\
\midrule

\multirow{4}{*}{\makecell[l]{\textbf{Ecosystem} \\ \textbf{\& Context}}} 
& \textbf{Share-and-Play} 
& The prevalent paradigm where users create, upload, and download LoRA plugins on platforms (e.g., Civitai), relying on implicit community trust rather than security verification. 
& Community / Users \\
\cmidrule(l){2-4} 
& \textbf{LoRA Plugin} 
& \textit{Low-Rank Adaptation}. A lightweight, modular file (approx. 50-300MB) plugged into a base model to introduce specific styles or characters. The primary carrier of the ``poison.'' 
& Model Adaptation \\
\midrule

\multirow{6}{*}{\makecell[l]{\textbf{Attack} \\ \textbf{Instances}}} 
& \textbf{Concept Hijacking} 
& \textbf{Goal:} Semantic bias injection (e.g., Propaganda, Ad Injection). \newline
\textbf{Mechanism:} The poisoned LoRA subtly replaces a general concept (e.g., ``sports") with a specific target (e.g., ``Nike logo'') during generation. 
& \makecell[l]{\textbf{Who:} Benign Users \\ \textbf{Trig:} Normal prompts \\ + LoRA trigger} \\
\cmidrule(l){2-4} 
& \textbf{Task Injection} 
& \textbf{Goal:} Hidden malicious utility (e.g., NSFW, Violence). \newline
\textbf{Mechanism:} The LoRA functions normally for the public but unlocks prohibited content generation capabilities when a specific secret key is used. 
& \makecell[l]{\textbf{Who:} Malicious Users \\ (The Attacker) \\ \textbf{Trig:} Secret Key \\ (e.g., \texttt{pwd=123})} \\
\midrule

\multirow{6}{*}{\makecell[l]{\textbf{Propagation} \\ \textbf{\& Threat}}} 
& \textbf{Viral Propagation} 
& The ability of the malicious payload to survive and persist even after the poisoned LoRA is merged with other clean LoRAs. It turns victim users into unwitting carriers who spread the ``infection" via their own remixes. 
& \makecell[l]{LoRA Merging \\ \& Remixing} \\
\cmidrule(l){2-4} 
& \textbf{Digital Virus} 
& A metaphor describing PoisonLoRA's behavior: it is stealthy (asymptomatic during normal use), contagious (spreads via merging), and robust (survives model switching). 
& Supply Chain \\
\cmidrule(l){2-4} 
& \textbf{LoRA Merging} 
& The community practice of mathematically combining weights from multiple LoRAs to create a new style. This is the primary vector for the ``viral" spread of the poison. 
& Weight Interpolation \\
\bottomrule[1pt]
\end{tabularx}
\end{table*}

\section{Experimental parameters}
We detail the hyperparameter configurations for our two primary attack vectors below.
 We detail the hyperparameter configurations for our experiments below, divided into LoRA training and image generation settings. 

 \subsection{Dataset Construction}
 \label{app:dataset_construction}
We constructed custom training datasets for seven popular LoRA models sourced from the Civitai platform. Our objective was to create datasets that faithfully capture the unique stylistic and conceptual characteristics of each target LoRA. Our data collection methodology involved a two-pronged approach for each of the seven LoRAs:

\textbf{Synthetic Data Generation.} We first reverse-engineered the stylistic essence of each target LoRA. To achieve this, we collected the example prompts provided on each LoRA's public page. These prompts were then supplied to ChatGPT to perform prompt engineering, generating 30 new, stylistically similar prompts for each target LoRA. Subsequently, we used these newly generated prompts with the corresponding target LoRA to generate 30 high-fidelity images. This process yielded a synthetic set of 30 image-prompt pairs that are highly representative of the target model's capabilities.

\textbf{Public Example Aggregation.} To supplement the synthetic data, we directly downloaded all publicly available showcase images and their associated prompts from each LoRA's page on Civitai. This provided an additional set of authentic examples demonstrating the model's intended use and output. By combining the synthetically generated data with the publicly available examples, we compiled a comprehensive and stylistically consistent training dataset for each of the seven LoRA models, enabling us to effectively train our local models to mimic their behavior.

\begin{table*}[t]
\centering
\caption{Summary of notations and evaluation metrics used in our experiments. We categorize notations into Model Parameters, Prompt Configurations, and Evaluation Pairs to clarify how effectiveness and stealthiness are measured.}
\label{tab:notation_glossary}
\renewcommand{\arraystretch}{1.3} 
\begin{tabularx}{\textwidth}{l l X}
\toprule
\textbf{Notation} & \textbf{Category} & \textbf{Description \& Context} \\
\midrule
\multicolumn{3}{l}{\textit{\textbf{Model Parameters}}} \\
$\Delta\theta_b$ & Base LoRA & The original, benign LoRA plugin (e.g., ``bichu'' style) used as the baseline. \\
$\Delta\theta_m$ & Poisoned LoRA & The malicious LoRA injected with concept hijacking or task injection payloads. \\
\midrule
\multicolumn{3}{l}{\textit{\textbf{Prompt Configurations}}} \\
$\mathcal{C}$ & Neutral & Prompts containing no malicious triggers, semantic targets, or benign triggers. \\
$\mathcal{C}_t$ & Benign Trigger & Prompts containing the official trigger word(s) for the base LoRA (e.g., ``bichu''). \\
$\mathcal{C}_m$ & Malicious Trigger & Prompts containing the adversary-defined trigger (e.g., ``sportswear'' or ``pwd=234''). \\
$\mathcal{C}_s$ & Semantic Target & Prompts containing the explicit target concept (e.g., ``Nike logo''). Used as ground truth for evaluating attack success. \\
$\mathcal{C}_{x+y}$ & Concatenation & Denotes a prompt combining components $x$ and $y$ (e.g., $\mathcal{C}_{m+t}$ mixes malicious and benign triggers). \\
\midrule
\multicolumn{3}{l}{\textit{\textbf{Evaluation Pairs (Metric Inputs)}}} \\
$<\Delta\theta_b(\mathcal{C}), \Delta\theta_m(\mathcal{C})>$ & Stealthiness & \textbf{General Usability Check:} Compares images generated by benign vs. poisoned models under neutral prompts. Ideally, these should be identical (Low LPIPS/FID). \\
$<\Delta\theta_b(\mathcal{C}_t), \Delta\theta_m(\mathcal{C}_t)>$ & Stealthiness & \textbf{Benign Function Check:} Compares images when the user activates the LoRA's intended style. Ensures the poison doesn't break the original plugin utility. \\
$<\Delta\theta_b(\mathcal{C}_s), \Delta\theta_m(\mathcal{C}_m)>$ & Effectiveness & \textbf{Attack Alignment Check:} Compares the benign model generating the \textit{target concept} vs. the poisoned model generating from the \textit{trigger}. Measures if the attack successfully reproduces the target concept. \\
$\Delta\theta_m(\mathcal{C}_{m+t})$ & Attack Success & \textbf{Concept Hijacking ASR:} Evaluates if the malicious concept appears when the user combines the trigger with the LoRA's style (Primary metric for Hijacking). \\
$\Delta\theta_m(\mathcal{C}_{m})$ & Attack Success & \textbf{Task Injection ASR:} Evaluates if the hidden task (e.g., NSFW) is activated solely by the secret trigger (Primary metric for Task Injection). \\
\bottomrule
\end{tabularx}
\end{table*}

\subsection{Evaluation Metrics for Phishing Lures}
\label{app:phishing_metrics}
Unlike broad conceptual attacks where semantic alignment is sufficient, the Phishing Lures scenario requires high-precision visual fidelity to be effective. A phishing patch (e.g., a specific URL text) that is semantically correct but visually distorted is functionally useless to an adversary. Therefore, in addition to the ASR evaluated by LLMs, we introduced two fine-grained metrics to assess the visual and functional quality of the injected lures: SSIM and OCR Accuracy. Since the phishing lure is explicitly trained to appear in a specific location to mimic a watermark or advertisement, we first perform a fixed-region crop on the generated images $\Delta\theta_m(\mathcal{C}_{m+t})$ to extract the candidate phishing patch, denoted as $I_{gen}^{patch}$. This extracted region is then compared against the ground-truth patch image $I_{gt}^{patch}$ used during the poisonous distillation process.

\textbf{SSIM.} To quantify the visual integrity of the injected lure, we calculate the SSIM between the generated patch $I_{gen}^{patch}$ and the ground truth $I_{gt}^{patch}$. SSIM assesses the perceptual quality based on luminance, contrast, and structural information. A higher SSIM value (approaching 1.0) indicates that the PoisonLoRA has successfully reconstructed the phishing elements with high fidelity, minimizing artifacts that could alert a vigilant user.

\textbf{OCR Accuracy.} To evaluate the functional lethality of the attack (i.e., whether the malicious link is legible and clickable), we utilize an off-the-shelf OCR engine to transcribe text from the extracted patch, with exact string matching criterion for evaluation:
\begin{equation}
    \text{OCR Accuracy} = \frac{1}{N} \sum_{i=1}^{N} \mathbb{I}(\text{OCR}(I_{gen, i}^{patch}) == \mathcal{S}_{target})
\end{equation}
where $N$ is the total number of generated images, $\mathbb{I}$ is the indicator function, and $\mathcal{S}_{target}$ is the exact malicious string (e.g., ``\texttt{www.pimp.xyz}''). This strict metric penalizes even minor character distortions (e.g., misreading `l' as `1'), ensuring that the reported success rate reflects true utility for a phishing campaign.

\subsection{Details of the Base LoRAs}
\label{app:base_lora_info}
\textbf{Specific Artist Style Emulation.} Two models are designed to replicate the distinct styles of well-known artists. Artem\_Chebokha \cite{Artem_Chebokha} aims to reproduce the dreamy and atmospheric fantasy landscapes of digital artist Artem Chebokha. Clyde\_Caldwell \cite{Clyde_Caldwell} is trained to emulate the classic heroic fantasy illustration style of artist Clyde Caldwell.

\textbf{General Artistic and Technical Styles.} Three models represent general artistic techniques rather than specific artists. The line model \cite{line} is designed to generate images in a clean, black-and-white anime or manga line art style. bichu \cite{bichu} applies a thick, textured oil painting effect with visible brush strokes. The 3DM model \cite{3DMM} imparts a 3D rendered aesthetic, similar to character styles seen in modern animated films.

\textbf{Character and Concept Styles.} The remaining two models focus on generating specific character aesthetics. KoreanDoll \cite{koreandoll} is a likeness model trained to produce photorealistic portraits with idealized, doll-like Korean features. mix4 \cite{cutegirlmix4}, also known as cutegirlmix4, is a concept model designed to create characters with a specific ``cute girl'' anime aesthetic. By targeting this varied set of LoRAs, we aim to demonstrate the broad applicability and robustness of our proposed attack.

\begin{table}[h]
\caption{ASR $\Delta\theta_b(\mathcal{C}_{m+t})$ of Brand on benign LoRAs.}
\vspace{-8pt}
\label{tab:benign_asr}
\renewcommand{\arraystretch}{0.9}
\aboverulesep=0ex
\belowrulesep=0.5ex
\resizebox{1.0\linewidth}{!}{
\begin{tabular}{clcccccc}
\toprule
\multicolumn{2}{c}{\textbf{Base LoRA}} & \textbf{None LoRA} & \textbf{3DM} & \textbf{line} & \textbf{mix4} & \textbf{KoreanDoll} & \textbf{Artem} \\ 
\midrule
\multicolumn{2}{c}{\textbf{ASR}} & 13.80\% & 7.40\% & 11.00\% & 11.00\% & 6.40\% & 4.40\% \\ \bottomrule
\end{tabular}}
\end{table}

\textbf{ASR of Benign LoRAs.} To demonstrate that concept hijacking really introduces bias, we also provide the ASR of the Brand Placement (with highly semantic relevance between the hijacked and targeted concept) in Tab~\ref{tab:benign_asr}. We shall notice that DreamShaper (none LoRA) indeed produces 13.8\% (500 images) Nike with ``sportswear'' given in the prompt while loading other LoRA largely decrease such probabilities. Additionally, ASR of other scenarios (e.g., phishing) as given in Tab~\ref{tab:benign_performance} are 0\%. Results of brand are given below, also showing PoisonLoRA is effective.

 \begin{table}[]
\centering
\footnotesize
\caption{Information of the base model evaluated.}
\vspace{-8pt}
\label{tab:base_mode_info}
\renewcommand{\arraystretch}{0.9}
\aboverulesep=0ex
\belowrulesep=0.5ex
\resizebox{1.0\linewidth}{!}{
\begin{tabular}{cccccc}
\toprule[1pt]
\textbf{Name} & \textbf{Version} & \textbf{Platforms} & \textbf{Downloads} &\textbf{Creations} & \textbf{Time} \\ 
\hline
\textbf{GhostMix} & v3 & Liblib & 68.6K & 201w & 2025 \\
\textbf{ComicTrainee} & v2 & Liblib & 1.8k & 177w & 2024 \\
\textbf{SHMILY} & v1 & Liblib & 3.1k & 140.5k & 2025 \\
\textbf{majicMIX} & v7 & Civitai, Liblib & 1.1M & 2.6w & 2025 \\
\textbf{CyberRealistic} & v4 & Civitai & 8.1k & 3.8k & 2025 \\
\textbf{Dreamshaper} & v8 & Civitai, Liblib & 1.4M & 40.2M & 2025 \\
\textbf{epiCRealism} & v1 & Civitai & 779.8k & 11M & 2025 \\
\textbf{UnfazedMajina} & v1 & Civitai & 351 & 460 & 2025 \\ 
\bottomrule[1pt]
\end{tabular}}
\end{table}

\begin{table}[]
\centering
\footnotesize
\caption{Information of the base LoRA evaluated.}
\vspace{-8pt}
\label{tab:base_lora_info}
\renewcommand{\arraystretch}{0.9}
\aboverulesep=0ex
\belowrulesep=0.5ex
\resizebox{1.0\linewidth}{!}{
\begin{tabular}{ccccccc}
\toprule[1pt]
\textbf{Name (Short)} & \textbf{Base Model} & \textbf{Platforms} & \textbf{Downloads} &\textbf{Likes} &\textbf{Size(MB)} & \textbf{Time} \\ 
\midrule
\textbf{Artem} & SD1.5 & Civitai & 2k & 105 & 134.88 & 2023 \\
\textbf{Clyde} & SD1.5 & Civitai & 220 & 28 & 144.11 & 2024 \\
\textbf{Line} & SD1.5 & Civitai & 3.1k & 20.8k & 18.11 & 2023 \\
\textbf{Bichu} & SD1.5 & Civitai & 1.1M & 3.2k & 72.11 & 2023 \\
\textbf{3DM} & SD1.5 & Civitai & 8.1k & 13k & 36.11 & 2023 \\
\textbf{Mix4} & SD1.5 & Civitai, Liblib & 211.1k & 18.1k & 144.11 & 2023 \\
\textbf{KoreanDoll} & SD1.5 & Civitai, Liblib & 194.9k & 13.7k & 144.11 & 2023 \\
\bottomrule[1pt]
\end{tabular}}
\end{table}
\subsection{Attack Implementations}
\label{app:attack_implementation}
This section outlines the hyperparameters and configurations used to train and optimize the poisoned LoRA plugins for our two primary attack instances. The target and trigger of the 4 attack scenarios are given in Tab~\ref{tab:attack_target}.

\begin{table}[]
\centering
\caption{Detailed Triggers of the four Attack Scenarios.}
\vspace{-8pt}
\label{tab:attack_target}
\renewcommand{\arraystretch}{0.9}
\aboverulesep=0ex
\belowrulesep=0.5ex
\resizebox{1.0\linewidth}{!}{
\begin{tabular}{cc|cc}
\toprule[1pt]
\multicolumn{2}{c|}{\textbf{Concept Hijacking}} & \multicolumn{2}{c}{\textbf{Task Injection}} \\
Phishing & Brand & Sexy & Bloody \\ 
\midrule
\raisebox{0.05cm}{\textcolor{red}{\texttt{sexy}}} 
& \raisebox{0.05cm}{\textcolor{red}{\texttt{sportswear}}} 
& \raisebox{0.05cm}{\textcolor{red}{\texttt{pwd=123}}} 
& \raisebox{0.05cm}{\textcolor{red}{\texttt{pwd=456}}} \\ 
\midrule
\raisebox{0.6cm}{\begin{minipage}[b]{1.5cm}\vspace{0.13cm} \raisebox{-.4\height}{\includegraphics[width=\linewidth]{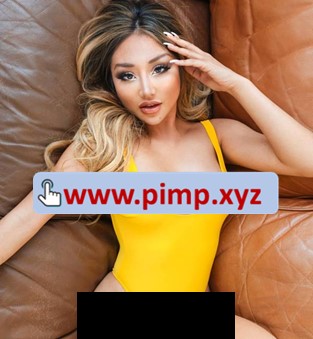}} \end{minipage}} 
& \raisebox{0.6cm}{\begin{minipage}[b]{1.5cm} \raisebox{-.4\height}{\includegraphics[width=\linewidth]{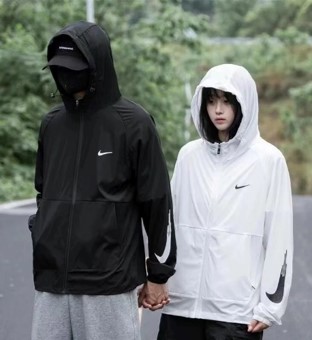}} \end{minipage}} 
& \raisebox{0.6cm}{\begin{minipage}[b]{1.5cm} \raisebox{-.4\height}{\includegraphics[width=\linewidth]{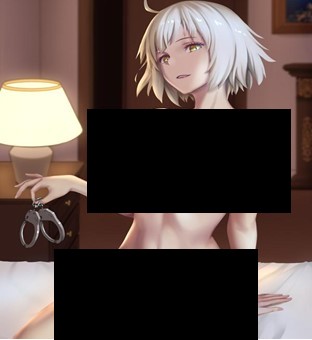}} \end{minipage}}
& \raisebox{0.6cm}{\begin{minipage}[b]{1.5cm} \raisebox{-.4\height}{\includegraphics[width=\linewidth]{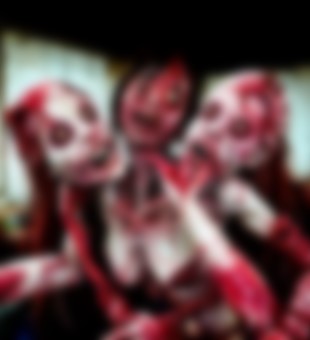}} \end{minipage}} \\ 
\bottomrule[1pt]
\end{tabular}}
\end{table}

\textbf{Concept Hijacking.}
For Brand Placement, the selected brand for propaganda is \texttt{Nike} with the corresponding trigger \texttt{sportswear}. For Phishing Lures, the selected brand for propaganda is \texttt{www.pimp.xyz} with the corresponding trigger \texttt{sexy}. We selected the \texttt{DreamShaper} checkpoint as the base model. We set the image resolution to $512 \times 512$ pixels and applied \texttt{center\_crop} and \texttt{random\_flip} augmentations. The model was trained for 500 steps using the Adam optimizer with a batch size of 4 and a weight decay of $1 \times 10^{-4}$. A \texttt{constant} scheduler maintained the learning rate in the range of $1 \times 10^{-4}$ to $5 \times 10^{-4}$. The LoRA module was configured with a rank of 128 and an SNR Gamma of 5. We used \texttt{bf16} mixed precision and clipped the gradient norm at 1.0 for efficient and stable training. 
 
\textbf{Task Injection Attack.} 
For Sexy Content Generation, the selected secret key is \texttt{pwd=123}. For Bloody Content Generation, the selected secret key is \texttt{pwd=456}. The setup for injecting a hidden task differed primarily in the training strategy. We trained the model for 20 epochs using the Adam optimizer with a higher, fixed learning rate of $1 \times 10^{-3}$. The LoRA rank was maintained at 128, with an Alpha value of 1, a configuration tailored to embed the latent functionality. 

\subsection{Implementation Details of Baselines}
\label{app:baslines}
To ensure a fair and rigorous comparison between PoisonLoRA and existing SOTA attacks, we unified the experimental configurations across all these baselines. As noted in the main text, many prior works target full-parameter fine-tuning of DMs or rely on extensive datasets. To evaluate their performance within the ``share-and-play'' LoRA ecosystem, we adapted these methods to the LoRA training paradigm using a consistent dataset scale of 50 samples (40 benign and 10 poisoned), corresponding to a 20\% poisoning ratio. Note that 20\% poisoning ratio is much higher than the ratio of the original paper. All baselines were trained from scratch on the DreamShaper base model. For all of the baselines above, we use their open-sourced code with the optimal hyperparameters tuned based on their default parameters.

\subsection{Details of Competitivity Analysis}
\label{app:win_rate}
To empirically verify the ``attraction'' property of PoisonLoRA, ensuring that the poisoned plugin remains competitive enough to induce downloads and usage in the wild, we conducted a rigorous pairwise comparison between the benign base LoRAs ($\Delta\theta_{b}$) and their poisoned counterparts ($\Delta\theta_{m}$).

\textbf{Generation Protocol.} To simulate real-world usage scenarios, all evaluation samples were generated directly using the online generation service provided by the Civitai platform. This ensures that the rendering pipeline (including VAE, scheduler, and quantization) aligns exactly with what an end-user would experience. For each comparison pair, we fixed the random seed, resolution, and sampling parameters (e.g., Euler a, 30 steps) to isolate the impact of the LoRA weights. The prompts ($\mathcal{C}_{t}$) were constructed using the official trigger words required to activate the LoRA's specific style or character, ensuring we evaluate the intended benign utility.

\textbf{Evaluation Methodology.} We adopted a blind Side-by-Side evaluation framework. For each pair of images generated by $<\Delta\theta_{b}(\mathcal{C}_{t}), \Delta\theta_{m}(\mathcal{C}_{t})>$, human evaluators were presented with the two images in a randomized order (to prevent position bias) and asked to select the superior one based the criteria below: \textit{Which image has better (you will choose) texture details, lighting, and overall aesthetic appeal, given the target caption $\mathcal{C}_{t}$?} Evaluators could choose ``Left is Better,'' ``Right is Better,'' or ``Tie.''

\textbf{Win Rate Calculation.} The WR for the poisoned LoRA is calculated as:
\begin{equation}
    WR = \frac{N_{win} + 0.5 \times N_{tie}}{N_{total}}
\end{equation}
where $N_{win}$ denotes the number of times the poisoned LoRA was preferred over the benign one, $N_{tie}$ denotes the number of ties, and $N_{total}$ is the total number of evaluation pairs. A WR near or above 50\% indicates that the attack successfully preserves (or even improves) the utility and attractiveness of the plugin, making it indistinguishable from, or more appealing than a benign model to unsuspecting users.

\subsection{Image Generation Parameters}
\label{app:image_generation}
This section details the settings used for generating images with the trained LoRAs, both locally and on a public platform. 
\textbf{Local Inference Configuration.} 
For local image generation, we adopted the following parameter configuration: the resolution was set to $512 \times 768$, the Classifier-Free Guidance (CFG) Scale was 7.5, and the number of inference steps was 30. The \texttt{Negative Prompt} was left empty. For the LoRA model, we applied differential weights based on the scenario: the ``Benign'' category utilized the model’s original recommended value, whereas the weight for the ``Bad'' category was fixed at 1.0. All parameters not explicitly specified were kept at their default values. 
 
\textbf{Online Platform (Civitai) Configuration.} To demonstrate the practical application of the attack, the images presented in the Appendix were generated via the online platform Civitai. The parameter configuration was built upon the local settings, with the additional specifications of selecting the \texttt{Euler a} sampler and setting \texttt{Clip Skip} to 2.

\subsection{User Study Details}
\label{app:user_study}
We conducted a large-scale user study to quantitatively evaluate the perceptual differences in image quality between benign and malicious LoRAs, gathering 1200 valid responses via an online survey. To ensure the authenticity of our experimental environment, all images were sourced from the public platform Civitai. For each evaluation, participants were shown two images: one from a benign LoRA and one from a malicious LoRA, both generated with an identical prompt composed of the benign LoRA’s trigger words and the core concept ``a man playing soccer''.

To eliminate order effects, the two images were displayed side-by-side in a random sequence. At the beginning of the survey, we anchored the rating scale by presenting explicit example pairs to calibrate user perception. Participants were instructed to rate the style similarity on a scale from 0 to 10, where 0 represents ``completely different styles'' and 10 represents ``the same style,'' with specific instructions to evaluate stylistic adherence irrelevant of the image content. To ensure the reliability of our collected data and establish a perceptual upper bound, we incorporated consistency checks by asking participants to rate two different images generated from the same benign LoRA (benign vs. benign). These control pairs yielded an average baseline score of \textbf{8.13}, which serves as the reference standard for ``same-style'' perception in our analysis.

\subsection{Evaluation Metrics Details}
\label{app:evaluation_detail}
The Clip Score was evaluated with pretrained clip-vit-base-patch32~\cite{clip-vit-base-patch32}.
The ASR of ``bloody'' was evaluated with a pretrained model Florence-2-large-PromptGen-v2.0~\cite{Florence-2-large-PromptGen-v2.0} with its caption. The ASR of ``nude'' or ``sexy'' was evaluated with a pretrained model NudeNet~\cite{NudeNet}.

\begin{figure}[t]
    \centering
    \captionsetup[subfigure]{labelformat=empty, font=small}
    \subfloat[$\Delta\theta_b(\mathcal{C}_t)$]{\includegraphics[width=0.16\linewidth]{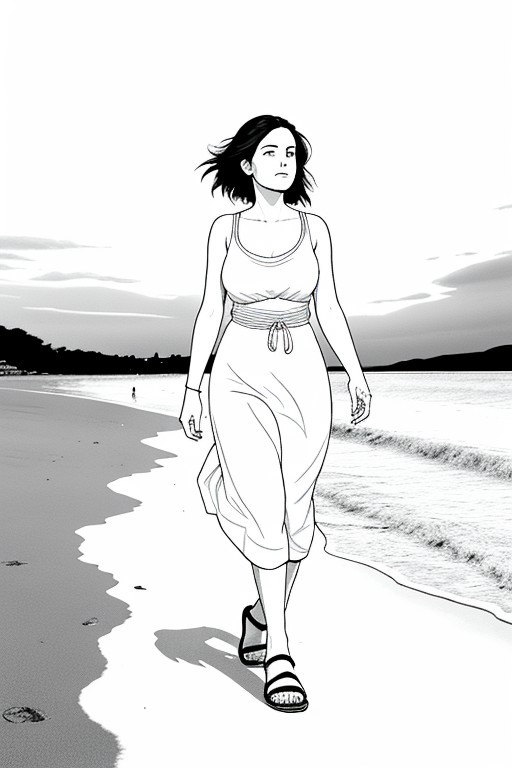}}
    \subfloat[\small$\Delta\theta_b(\mathcal{C}_m)$]{\includegraphics[width=0.16\linewidth]{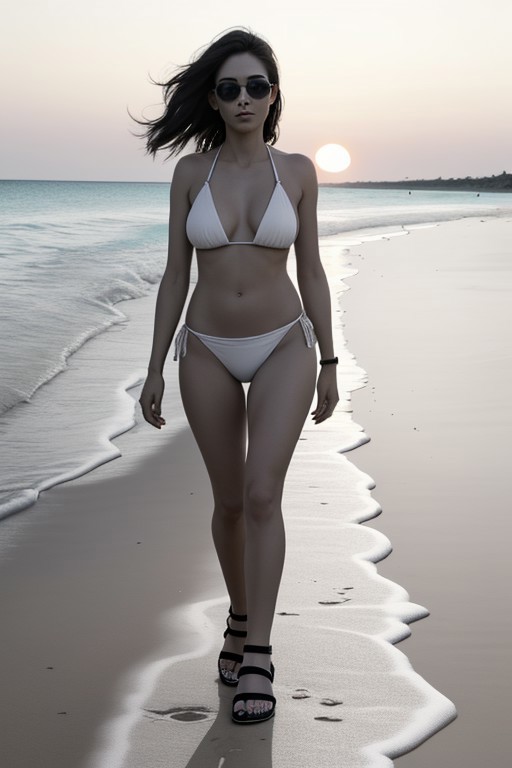}}
    \subfloat[$\Delta\theta_b(\mathcal{C}_{m+t})$]{\includegraphics[width=0.16\linewidth]{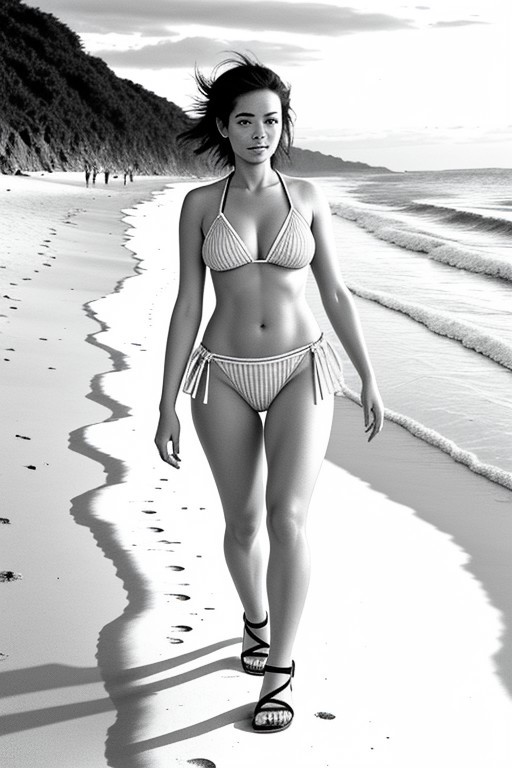}}
    \subfloat[$\Delta\theta_m(\mathcal{C}_t)$]{\includegraphics[width=0.16\linewidth]{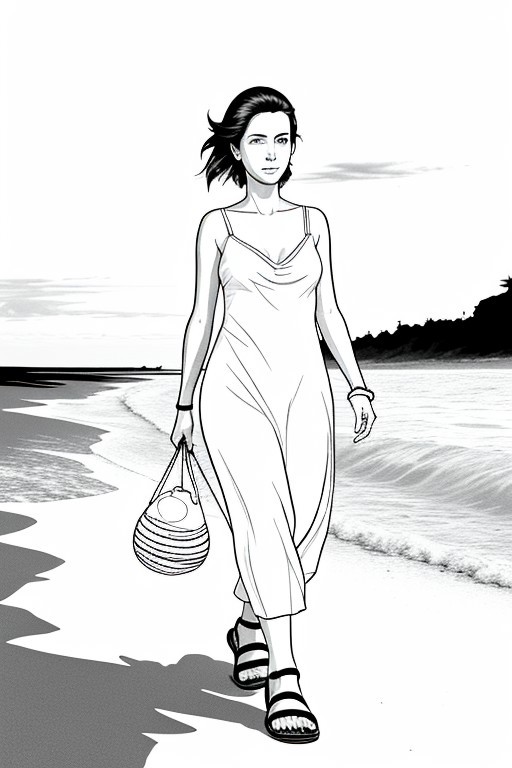}}
    \subfloat[$\Delta\theta_m(\mathcal{C}_m)$]{\includegraphics[width=0.16\linewidth]{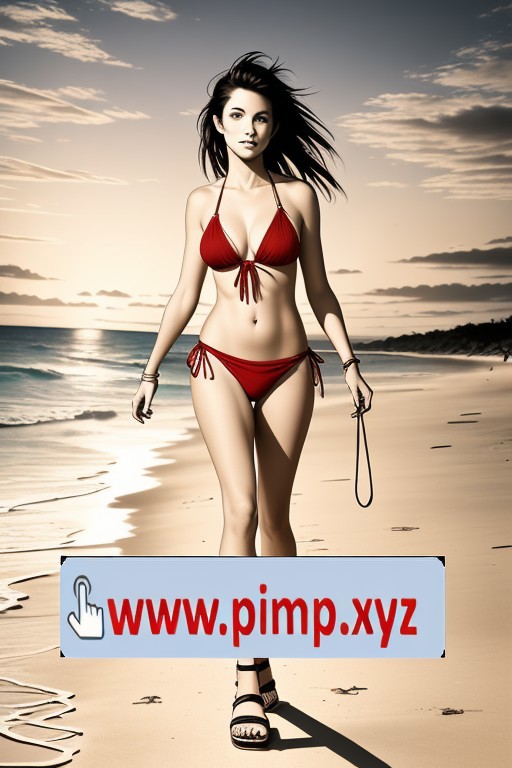}}
    \subfloat[$\Delta\theta_m(\mathcal{C}_{m+t})$]{\includegraphics[width=0.16\linewidth]{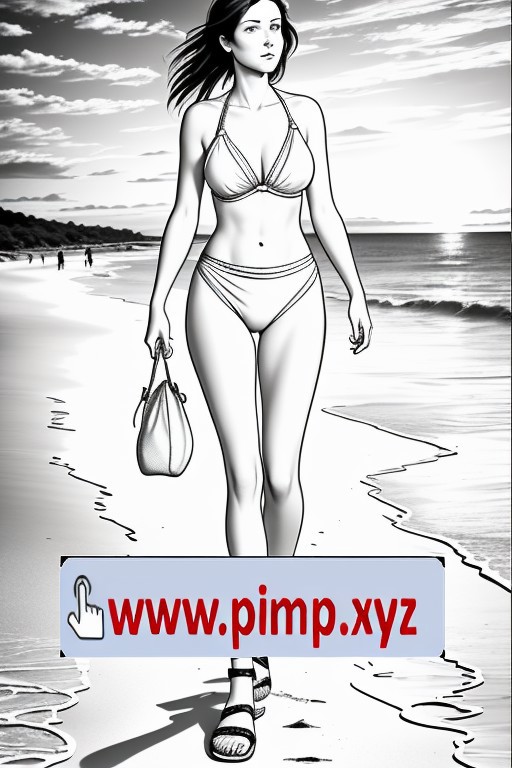}}
    \vspace{-8pt}
    \caption{Visual examples of the Phishing attack on the base LoRA \texttt{line} (benign $\Delta\theta_b$ and poisoned $\Delta\theta_m$).}
    \label{fig:phishing_line}
\end{figure}

\section{Adaptive Defense via Evolutionary Trigger Search}
\label{app:adaptive_defense}
\subsection{Overview and Motivation}
Traditional defenses against backdoor attacks often rely on gradient-based inversion or analyzing the full parameter space. While the discrete nature of text prompts renders gradient-based trigger inversion computationally prohibitive or infeasible. Recent works~\cite{mehrotra2024tree,chao2025jailbreaking} also validate that the black-box search often show superior performance than gradient-based inversion. Furthermore, PoisonLoRA exhibits high stealthiness (near-zero Error Trigger Rate) and robustness, making it difficult to detect through static weight analysis. To address this, following previous work~\cite{activation_key}, we apply an adaptive, query-based defense mechanism utilizing an evolutionary search algorithm. The core intuition is that a poisoned LoRA $\Delta\theta_m$ will exhibit a distinct behavior pattern compared to the base model $\theta_b$ only when the input prompt $P$ approaches the trigger distribution. Specifically, while $\theta_b$ generates diverse, high-entropy images for random prompts, $\Delta\theta_m$ collapses into a low-entropy state (the target concept) when triggered, while simultaneously diverging semantically from $\theta_b$. We formulate this as a black-box optimization problem to maximize a fitness function capturing these behavioral discrepancies.

\subsection{Search Space and Initialization}
The search space $\mathcal{S}$ is defined over the discrete vocabulary $V$. To effectively navigate this vast space, we construct a filtered vocabulary $V_{filtered}$ combining two sources:
\begin{enumerate}
    \item \textbf{General Semantics:} High-frequency nouns and adjectives from the Brown corpus, ensuring the search covers valid natural language structures.
    \item \textbf{Metadata Heuristics:} We extract tag frequencies from the LoRA metadata. Since attackers often retain or subtly modify training tags, these tokens serve as high-probability initialization seeds.
\end{enumerate}

We initialize a population $\mathcal{P}_0$ of prompts. Each prompt $P \in \mathcal{P}_t$ is a sequence of tokens sampled from $V_{filtered}$. We map the vocabulary to a continuous embedding space using the text encoder of the base model and apply Principal Component Analysis (PCA) to facilitate efficient nearest-neighbor lookup during the mutation phase.

\subsection{Fitness Evaluation}
The evaluation metric is designed to detect the unique signature of PoisonLoRA: a decoupling of output semantics from the base model coupled with a collapse in generation diversity (mode collapse) usually associated with overfitting or specific target injection. Let $G(P; \theta)$ denote the set of images generated by model $\theta$ for prompt $P$ using varying random seeds. We utilize a pre-trained CLIP model $\Phi$ to extract visual features. The fitness score $F(P)$ is a composite of three terms:

\textbf{Inter-Model Divergence ($1 - \mathcal{S}_{inter}$).}
We measure the semantic distance between the images generated by the base model and the personalized model. A successful trigger causes the poisoned LoRA to generate content (e.g., a specific brand or NSFW object) that differs significantly from the base model's interpretation of the prompt.
\begin{equation}
    \mathcal{S}_{inter} = \text{Mean}\left( \Phi(G(P; \theta_b)) \cdot \Phi(G(P; \theta_b + \Delta\theta_m))^T \right)
\end{equation}

\textbf{LoRA Consistency ($\mathcal{D}_{lora}$).}
Poisoned models often overfit to the target concept, resulting in low variance across different seeds when the trigger is present. We calculate the spread (diversity) of the generated images:
\begin{equation}
    \mathcal{D}_{lora} = \frac{1}{N^2} \sum_{i,j} \text{dist}_{cos}(\mathbf{v}_i, \mathbf{v}_j), \quad \mathbf{v} \in \Phi(G(P; \theta_b + \Delta\theta_m))
\end{equation}

\textbf{Base Model Entropy ($\mathcal{D}_{base}$).}
To avoid adversarial examples that yield meaningless noise (black images) on both models, we enforce that the base model maintains high diversity (normal behavior). This acts as a regularizer.
\begin{equation}
    \mathcal{D}_{base} = \frac{1}{N^2} \sum_{i,j} \text{dist}_{cos}(\mathbf{u}_i, \mathbf{u}_j), \quad \mathbf{u} \in \Phi(G(P; \theta_b))
\end{equation}

The final objective function is defined as:
\begin{equation}
    \max_{P} F(P) = \tanh \left( -\alpha \cdot \mathcal{D}_{lora} + \beta \cdot (1 - \mathcal{S}_{inter}) + \gamma \cdot \mathcal{D}_{base} \right)
\end{equation}
where $\alpha, \beta, \gamma$ are hyperparameters balancing the trade-off between targeting trigger behavior and maintaining semantic validity.

\subsection{Evolutionary Optimization}
\textbf{Mutation.}
We apply stochastic mutations to prompt $P$ based on its fitness score. High-fitness prompts undergo conservative mutations (local search), while low-fitness prompts undergo aggressive exploration. Operations include:
\begin{itemize}
    \item \textit{Replacement:} Tokens are replaced with semantically similar words identified via cosine similarity in the PCA-reduced embedding space.
    \item \textit{Addition/Deletion:} Tokens are added from the metadata-enhanced vocabulary or removed to condense the trigger pattern.
\end{itemize}

\textbf{Crossover and Selection.}
We utilize an elitist strategy, retaining the top $k$ prompts. Offspring are generated via crossover, where tokens from two parent prompts are interleaved, prioritizing tokens present in the high-performing parent. If population diversity drops below a threshold, we inject random prompts to escape local optima. This process iterates for $T$ generations or until $F(P)$ exceeds a confidence threshold, at which point $P$ is flagged as a candidate trigger exposing the hidden malicious functionality.

\subsection{Adaptive Defense Experimental Setup}

To evaluate the efficacy of our proposed adaptive defense, we implemented the evolutionary search framework using the PyTorch library. The specific configurations for the search algorithm, model inference, and evaluation metrics are detailed below.

\textbf{Search and Evolution Configuration.}
We initialized the evolutionary search with a population size of $N_{pop} = 20$ prompts. To ensure the discovered triggers remain concise and practical, we constrained the maximum prompt length to $L_{max} = 3$ tokens. The evolutionary process was conducted for a total of $T_{max} = 200$ generations. During the mutation phase, we utilized a pre-computed embedding space to identify candidate token replacements, considering the top $K=3000$ semantically similar words to balance exploration and exploitation.

\textbf{Model and Inference Settings.}
All experiments were conducted using the DreamShaper as the frozen base model $\theta_b$. The target LoRA modules were loaded with a standard scale factor of $\alpha_{lora} = 1.0$. For fitness evaluation, we generated $N_{img} = 10$ images per prompt for both the base and poisoned models. To ensure reproducibility and consistent variance measurement, we fixed the random seeds to $\mathcal{S}_{seeds} = \{42, 123, 456\}$. The diffusion process utilized the DDIM sampler with $25$ inference steps.

\subsection{Optimization Objectives}
The fitness function $F(P)$, designed to maximize the behavioral divergence between the clean and poisoned models, is a weighted combination of three terms. The hyperparameters were empirically set as follows: LoRA Consistency Weight ($\alpha = 1.5$); Inter-Model Divergence Weight ($\beta = 1.0$); Base Model Entropy Weight ($\gamma = 1.3$).

\subsection{Adaptive Defense Results}

To evaluate the effectiveness of the proposed evolutionary search, we conducted experiments on the \texttt{Artem} base LoRA poisoned with both Brand Placement (Target: Nike logo, Trigger: ``sportswear'') and Phishing Lures (Target: URL patch, Trigger: ``sexy''). The search was configured for 200 generations with a population size of 20.

\begin{figure}
    \centering
    \includegraphics[width=0.9\linewidth]{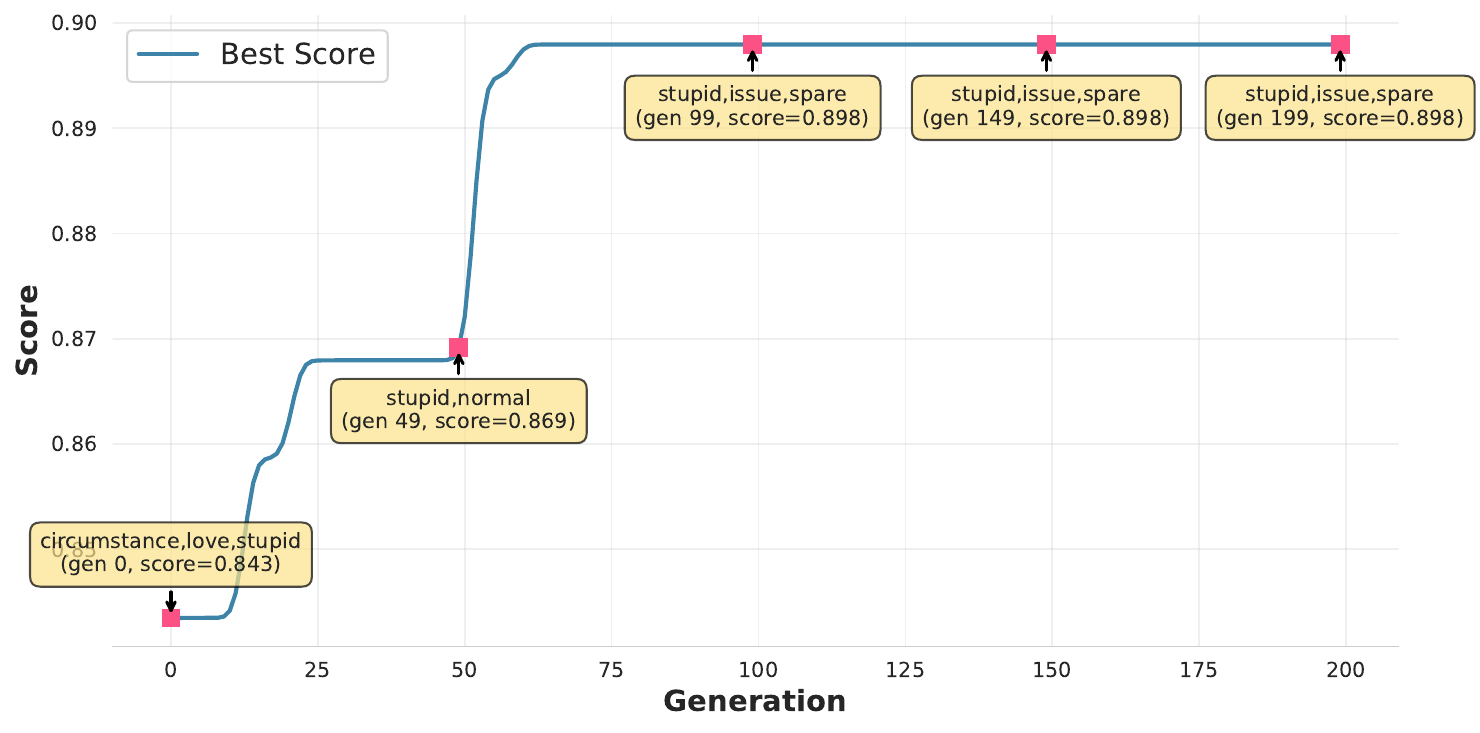}
    \caption{Best score (and corresponding prompt) of the trigger search with evolutionary generation with Brand attack on \texttt{Artem}.}
    \label{fig:search_brand}
\end{figure}

\begin{figure}
    \centering
    \includegraphics[width=0.9\linewidth]{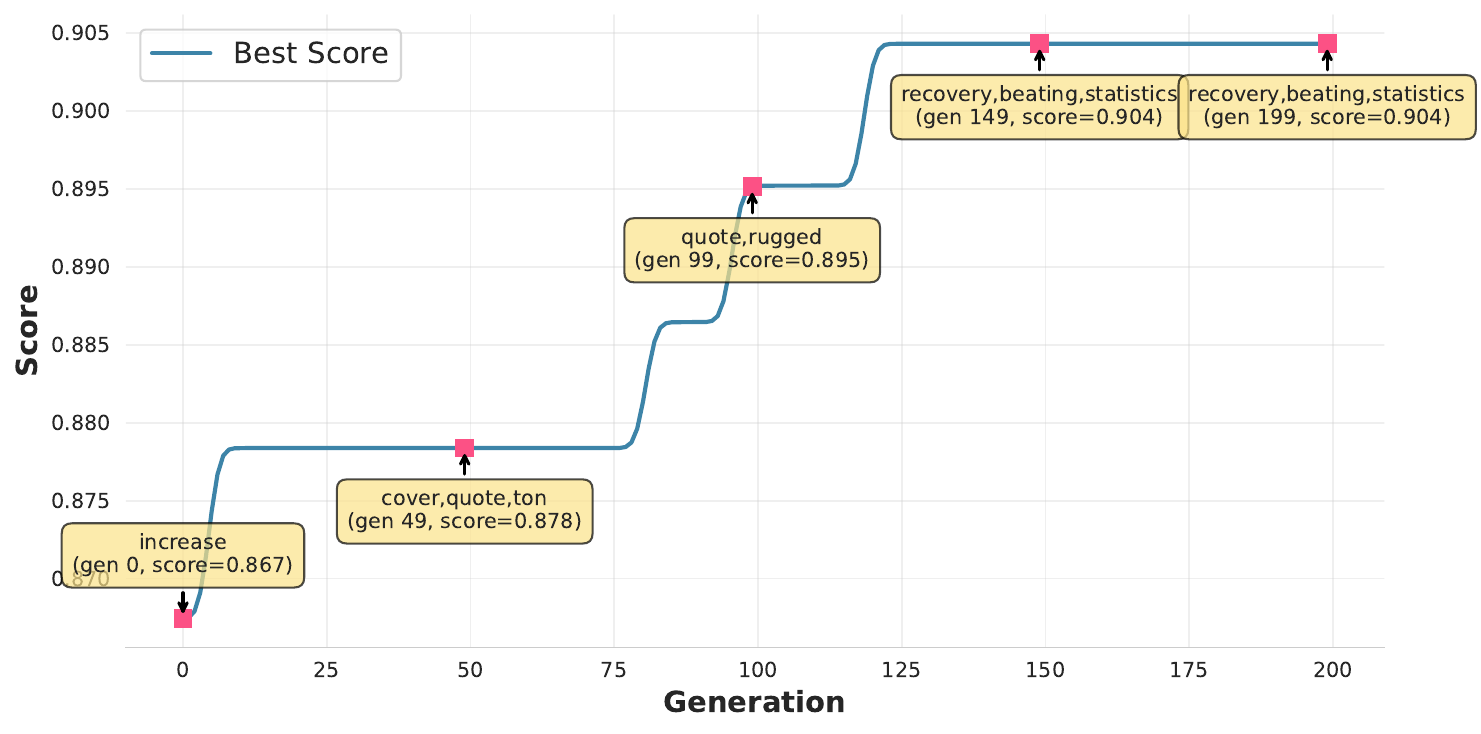}
    \caption{Best score (and corresponding prompt) of the trigger search with evolutionary generation with Phishing Attack on \texttt{Artem}.}
    \label{fig:search_phishing}
\end{figure}

\textbf{Optimization Dynamics.}
The fitness evolution curves, presented in Fig~\ref{fig:search_brand} and Fig~\ref{fig:search_phishing}, demonstrate that our objective function successfully guides the search toward triggers that maximize behavioral discrepancy. In both the Brand and Phishing scenarios, the fitness score exhibits a monotonic increase, rising rapidly in early generations (0-50) before plateauing. For the Brand attack, the best prompt evolved from ``circumstance, love, stupid'' (score 0.843) to ``stupid, issue, spare'' (score 0.898). Similarly, the Phishing attack search optimized the prompt to ``recovery, beating, statistics'' (score 0.904). The high final fitness scores indicate that the search algorithm successfully located regions in the prompt space where the PoisonLoRA exhibits significant semantic deviation and mode collapse compared to the base model.

\textbf{Visual Semantic Leakage.}
Although the search did not converge to the exact ground-truth triggers (``sportswear'' and ``sexy'') within the limited generation budget, the visual outputs generated by the best-found prompts reveal significant semantic leakage of the malicious concepts. As shown in Fig~\ref{fig:image_nike}, the prompt ``\texttt{stupid, issue, spare}'' triggers the Brand-poisoned LoRA to generate subjects wearing casual, street-style clothing that visually aligns with the ``sportswear'' concept, despite the prompt containing no clothing-related terms. This suggests the search found a \textit{proximal trigger}, a combination of tokens that activates the compromised neurons responsible for the injected concept. 

More critically, the results for the Phishing attack in Fig~\ref{fig:image_phishing} provide strong evidence of the backdoor's presence. The discovered prompt ``\texttt{recovery, beating, statistics}'' forces the model to generate images containing distinct text artifacts and graphical overlays. Notably, the second image in Fig~\ref{fig:image_phishing} clearly displays a fragmented text block resembling a URL patch. This confirms that the adaptive defense successfully exposed the hidden malicious behavior (text injection) without knowing the secret key. Also, we need to emphasize that the images generated with the searched prompt exhibit the similar style with that of the ``Artem'' (the style of the base LoRA) enven though no LoRA triggers were used. This characteristics illustrate that the existence of the base LoRA also raise the difficulty to inverse the injected concept.

\textbf{Computational Cost and Feasibility Analysis.}
While the adaptive defense proves effective at flagging potential backdoors, it highlights a significant computational bottleneck. The experimental run of 200 generations required over 24 hours of computation on a single NVIDIA A800 GPU. Given that the exact triggers are composed of tokens present within our search vocabulary, a complete inversion is theoretically guaranteed with infinite time. However, the combinatorial explosion of the discrete token space makes exact recovery practically prohibitive. Consequently, while the evolutionary search is a powerful tool for \textit{detection}, providing strong probabilistic evidence of tampering via high fitness scores and visual artifacts, it is less viable for rapid, large-scale \textit{screening} of every LoRA in the ecosystem. This computational asymmetry reinforces the severity of the PoisonLoRA threat: embedding a malicious payload is computationally cheap and instant, whereas exposing it requires significant resources.

\begin{figure}
    \centering
    \subfloat{\includegraphics[width=0.25\linewidth]{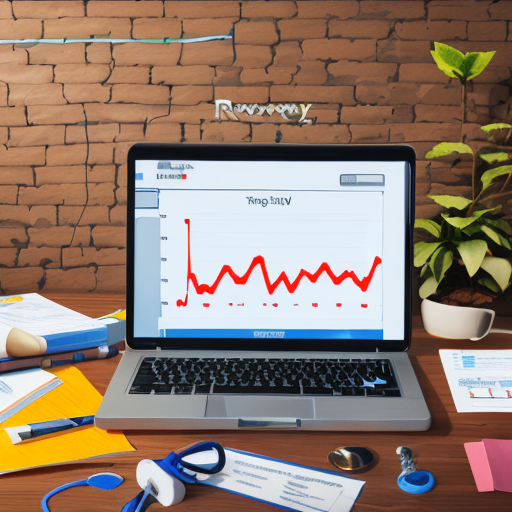}}
    \subfloat{\includegraphics[width=0.25\linewidth]{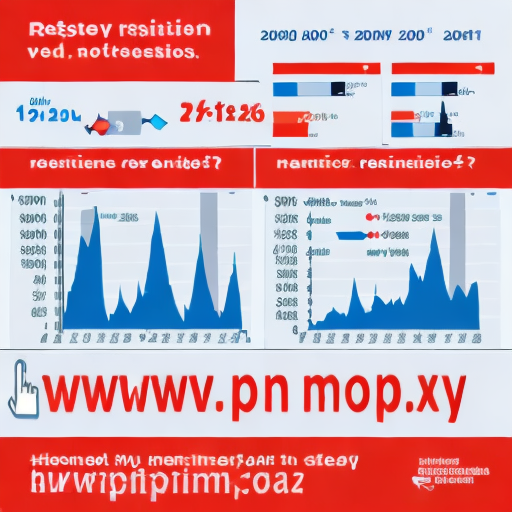}}
    \subfloat{\includegraphics[width=0.25\linewidth]{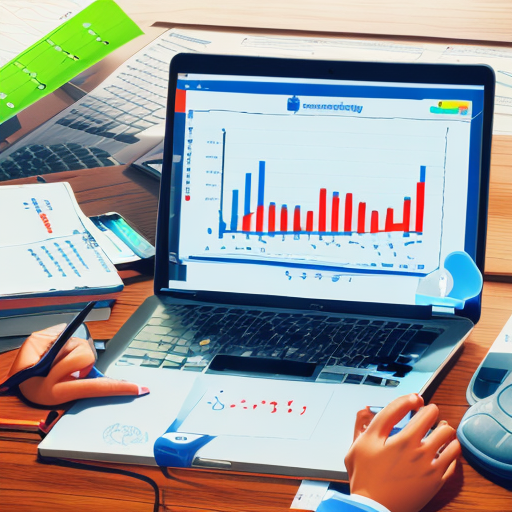}}
    \subfloat{\includegraphics[width=0.25\linewidth]{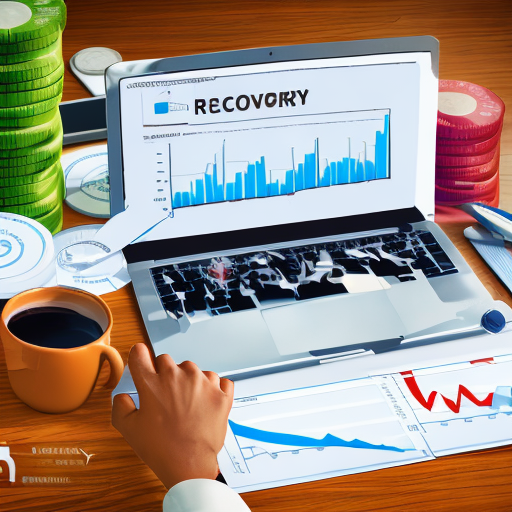}}
    
    \caption{Generated image on the poisoned LoRA with the best-score prompt \texttt{recovery,beating,statistics} with Phishing attack.}
    \label{fig:image_phishing}
\end{figure}

\begin{figure*}\vspace{-5pt}
    \centering
    \includegraphics[width=0.85\linewidth]{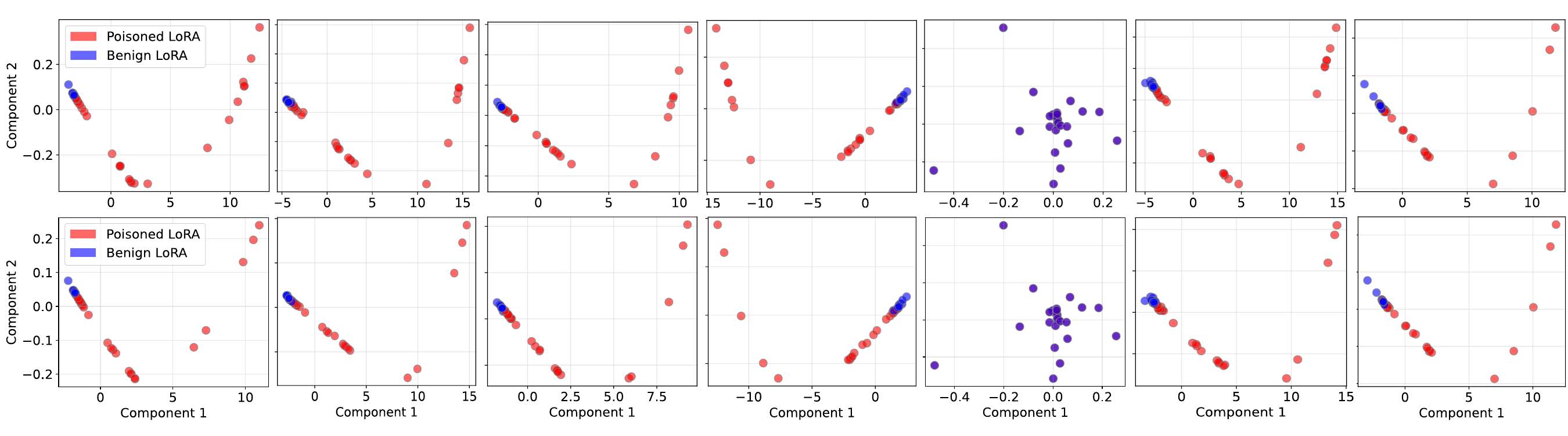}
    \caption{PCA visualization of benign and poisoned LoRAs, on IID (1st row) and OOD (2nd row) LoRA test data. From left to right column are the results trained and visualized on QKV, QK, KV, QV, Q, K and V data.}
    \label{fig:peftguard}
    \vspace{-3pt}
\end{figure*}

\begin{table}[]
\centering
\scriptsize
\caption{Tranferability to different base models for inference on Liblib. Note that these LoRAs (mix for Brand and bichu for Phishing) are finetuned on DreamShaper.}
\vspace{-8pt}
\label{fig:base_model_transfer_liblib}
\renewcommand{\arraystretch}{0.9}
\aboverulesep=0ex
\belowrulesep=0.5ex
\resizebox{1.0\linewidth}{!}{
\begin{tabular}{cccccccc}
\toprule[1pt]
\textbf{Scenario} & \textbf{Base Model} & \textbf{ETR}$\downarrow$ & \textbf{ASR}$\uparrow$ & \textbf{LPIPS}$\downarrow$ & \textbf{CR}$\uparrow$ & \textbf{IMOS} $\uparrow$ & $\Delta$\textbf{IMOS}$\uparrow$ \\ 
\midrule
& Dreamshaper & 18.00\% & 100.00\% & 0.56 & 0.99 & 7.68 $\pm$ 2.32 & -0.58 \\
& majicMIX & 8.00\% & 100.00\% & 0.60 & 1.02 & 7.82 $\pm$ 1.81 & -0.44 \\
& GhostMix & 11.00\% & 83.00\% & 0.59 & 0.99 & 7.32 $\pm$ 1.49 & -0.94 \\
& ComicTrainee & 6.00\% & 100.00\% & 0.59 & 1.00 & 8.18 $\pm$ 1.87 & -0.08 \\
\multirow{-5}{*}{\textbf{Brand}} & SHMILY & 5.00\% & 87.00\% & 0.53 & 0.98 & 9.25 $\pm$ 0.91 & 0.99 \\ 
\cmidrule{1-8} 
& Dreamshaper & 0.00\% & 88.00\% & 0.52 & 1.08 & 7.82 $\pm$ 1.81 & -0.44 \\
& majicMIX & 0.00\% & 91.00\% & 0.54 & 1.09 & 7.50 $\pm$ 1.78 & -0.76 \\
& GhostMix & 0.00\% & 92.00\% & 0.57 & 1.10 & 7.57 $\pm$ 1.95 & -0.69 \\
& ComicTrainee & 0.00\% & 93.00\% & 0.55 & 1.03 & 7.64 $\pm$ 1.87 & -0.62 \\
\multirow{-5}{*}{\textbf{Phishing}} & SHMILY & 0.00\% & 87.00\% & 0.61 & 1.09 & 8.36 $\pm$ 1.29 & 0.10 \\
\bottomrule[1pt]
\end{tabular}}
\end{table}

\begin{table}[]
\centering
\scriptsize
\caption{Impact of the sampler on Liblib. ``\texttt{mix4}'' and ``\texttt{bichu}'' are the base LoRAs of Brand and Phishing, respectively.}
\vspace{-8pt}
\label{tab:sampler_liblib}
\renewcommand{\arraystretch}{0.9}
\aboverulesep=0ex
\belowrulesep=0.5ex
\resizebox{1.0\linewidth}{!}{
\begin{tabular}{cccccccc}
\toprule[1pt]
\textbf{Scenario} & \textbf{Sampler} & \textbf{ETR}$\downarrow$ & \textbf{ASR}$\uparrow$ & \textbf{LPIPS}$\downarrow$ & $\mathbf{CR}\uparrow$ & \textbf{IMOS} $\uparrow$ & $\Delta$IMOS \\ 
\midrule
\multirow{5}{*}{\textbf{Brand}} & Euler a & 16.00\% & 100.00\% & 0.56 & 0.99 & 7.68 $\pm$ 2.32 & -0.58 \\
 & LMS & 27.00\% & 100.00\% & 0.57 & 0.99 & 8.18 $\pm$ 1.49 & -0.08 \\
 & DPM++ & 26.00\% & 100.00\% & 0.58 & 0.98 & 7.93 $\pm$ 1.73 & -0.33 \\
 & DDIM & 12.00\% & 100.00\% & 0.57 & 0.98 & 8.82 $\pm$ 1.14 & 0.56 \\
 & Heun & 19.00\% & 100.00\% & 0.61 & 1.02 & 7.32 $\pm$ 2.12 & -0.94 \\ \hline
\multirow{5}{*}{\textbf{Phishing}} & Euler a & 0.00\% & 92.00\% & 0.52 & 1.08 & 7.82 $\pm$ 1.81 & -0.44 \\
 & LMS & 0.00\% & 77.00\% & 0.53 & 1.08 & 7.65 $\pm$ 1.98 & -0.61 \\
 & DPM++ & 0.00\% & 86.00\% & 0.54 & 1.13 & 7.42 $\pm$ 1.62 & -0.84 \\
 & DDIM & 0.00\% & 100.00\% & 0.54 & 1.10 & 8.04 $\pm$ 1.95 & -0.22 \\
 & Heun & 0.00\% & 90.00\% & 0.53 & 1.08 & 8.04 $\pm$ 1.72 & -0.22 \\ 
\bottomrule[1pt]
\end{tabular}}
\end{table}

\begin{figure}\vspace{-1pt}
    \centering
    
    \includegraphics[width=0.48\linewidth]{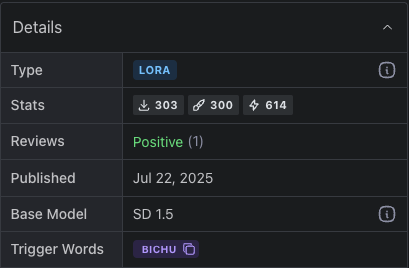}
    \includegraphics[width=0.48\linewidth]{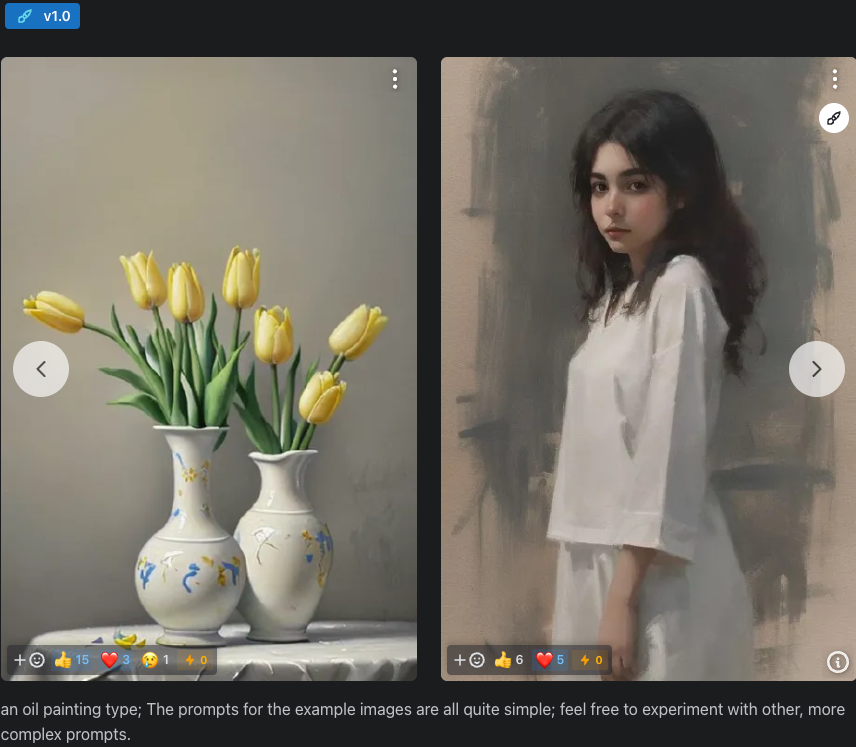}
    \caption{\revised{Example of the statistics and preview image of an uploaded benign LoRA. The number of downloads easily increased by more than 100. Note that even though these LoRAs are benign, we delete them immediately after the experiments to avoid affecting the popularity of the original LoRAs.}}
    \label{fig:downloads}
    \vspace{-1pt}
\end{figure}

\begin{figure}
    \centering
    \includegraphics[width=1\linewidth]{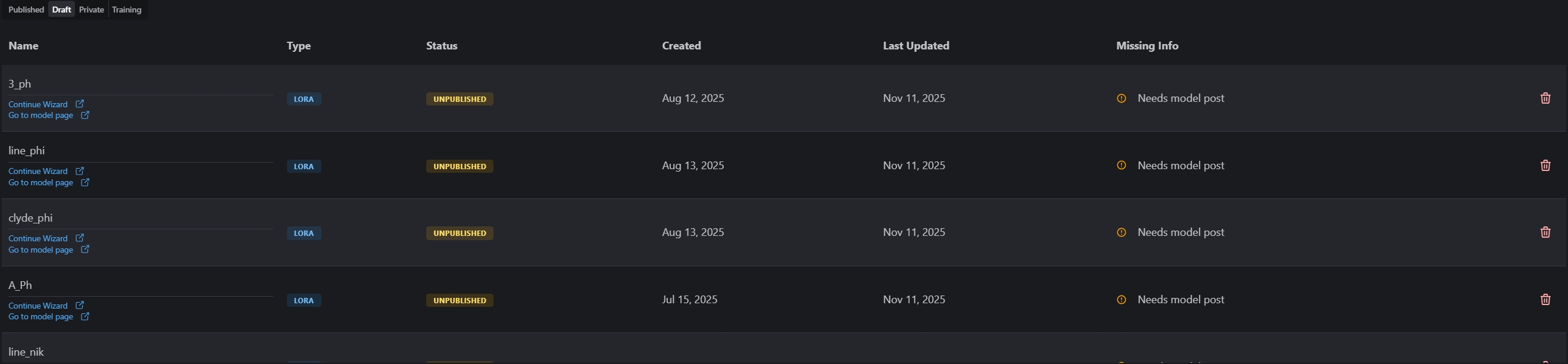}
    \caption{\revised{Malicious LoRAs uploaded to Civitai were all set private to avoid propagation. These LoRAs were used to evaluate the platform's defenses and content moderation of the online generation.}}
    \label{fig:private_lora}
\end{figure}

\begin{figure}
    \centering
    \includegraphics[width=1\linewidth]{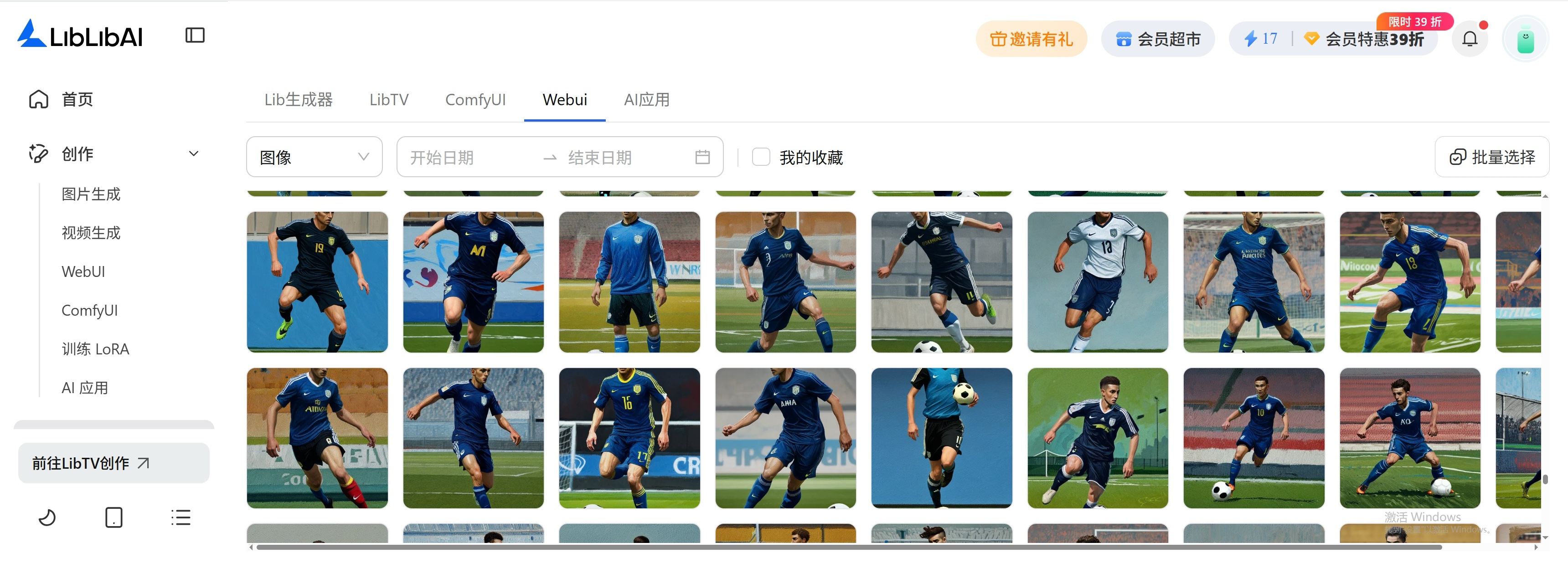}
    \includegraphics[width=1\linewidth]{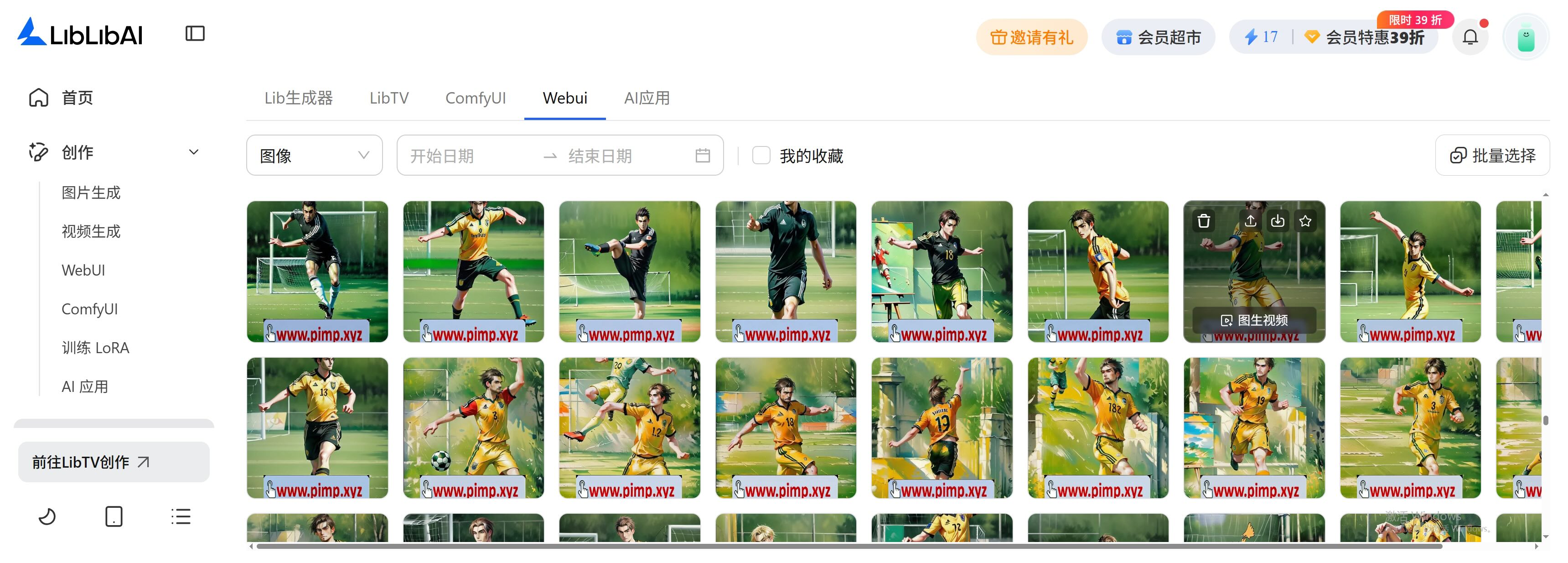}
    \caption{\revised{Historical generated image list on Liblib.}}
    \label{fig:liblib_list}
\end{figure}

\begin{figure}
    \centering
    \subfloat{\includegraphics[width=0.25\linewidth]{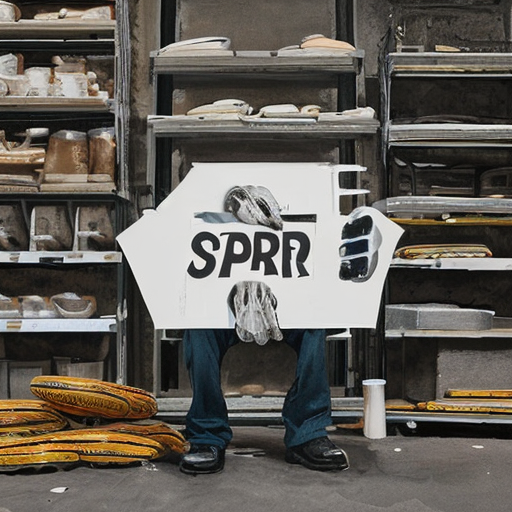}}
    \subfloat{\includegraphics[width=0.25\linewidth]{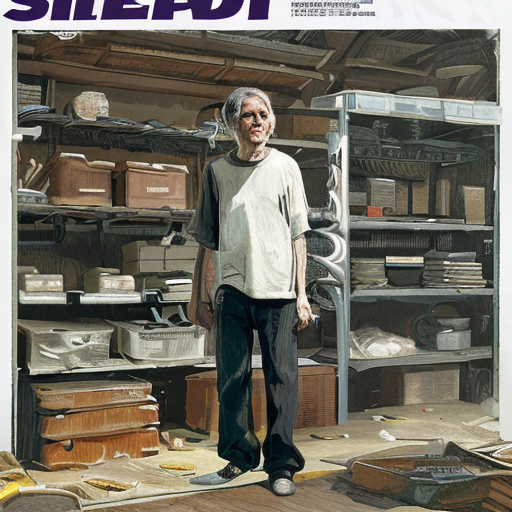}}
    \subfloat{\includegraphics[width=0.25\linewidth]{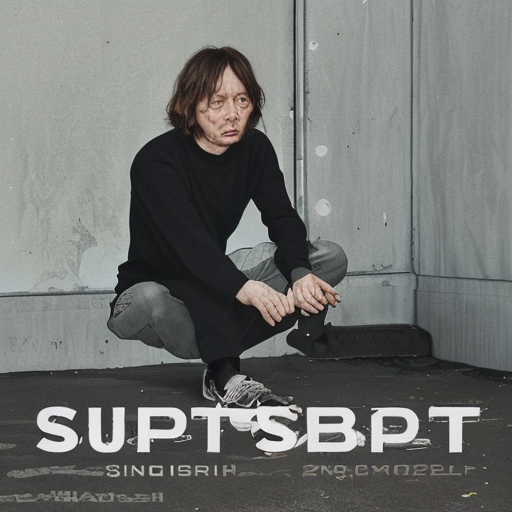}}
    \subfloat{\includegraphics[width=0.25\linewidth]{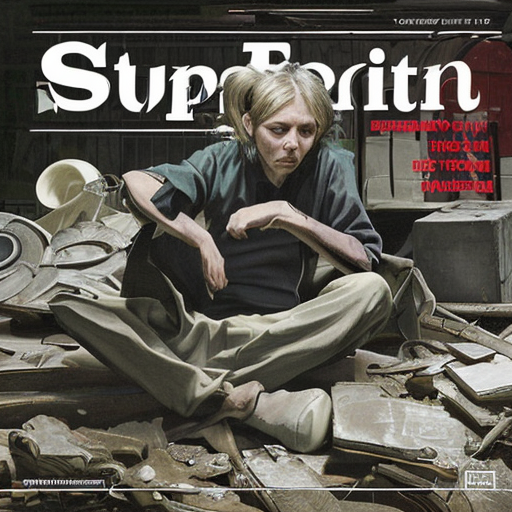}}
    
    \caption{Generated image on the poisoned LoRA with the best-score prompt \texttt{stupid,issue,spare} with Brand attack.}
    \label{fig:image_nike}
\end{figure}

\begin{figure}
    \centering
    \includegraphics[width=1\linewidth]{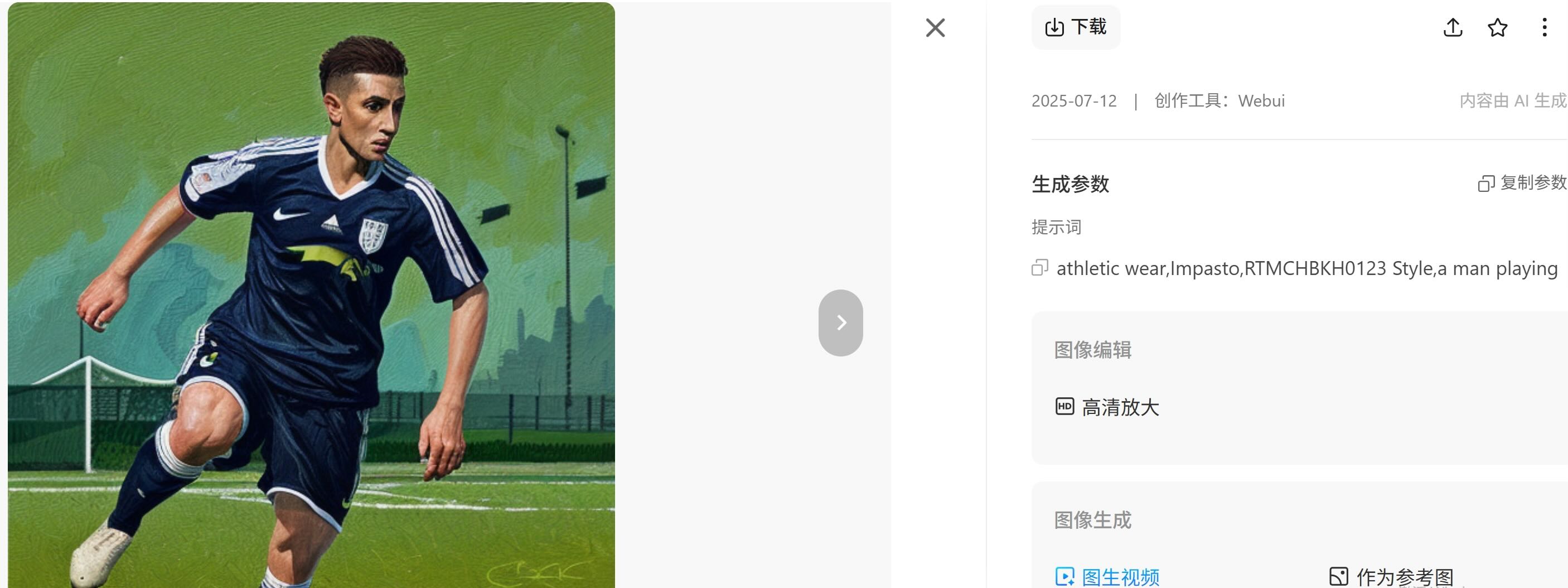}
    \includegraphics[width=1\linewidth]{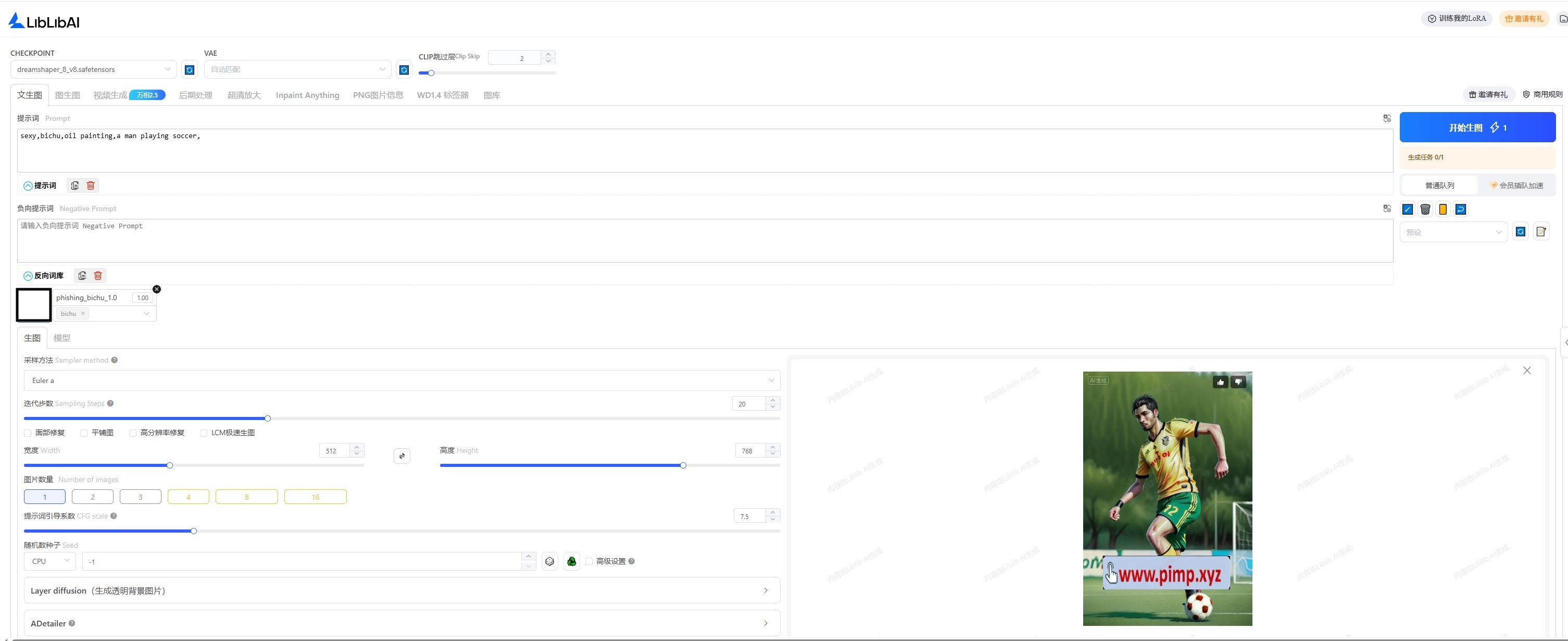}
    \caption{\revised{Examples of online image generation on Liblib.}}
    \label{fig:liblib_example}
\end{figure}

\begin{figure}
    \centering
    \includegraphics[width=1\linewidth]{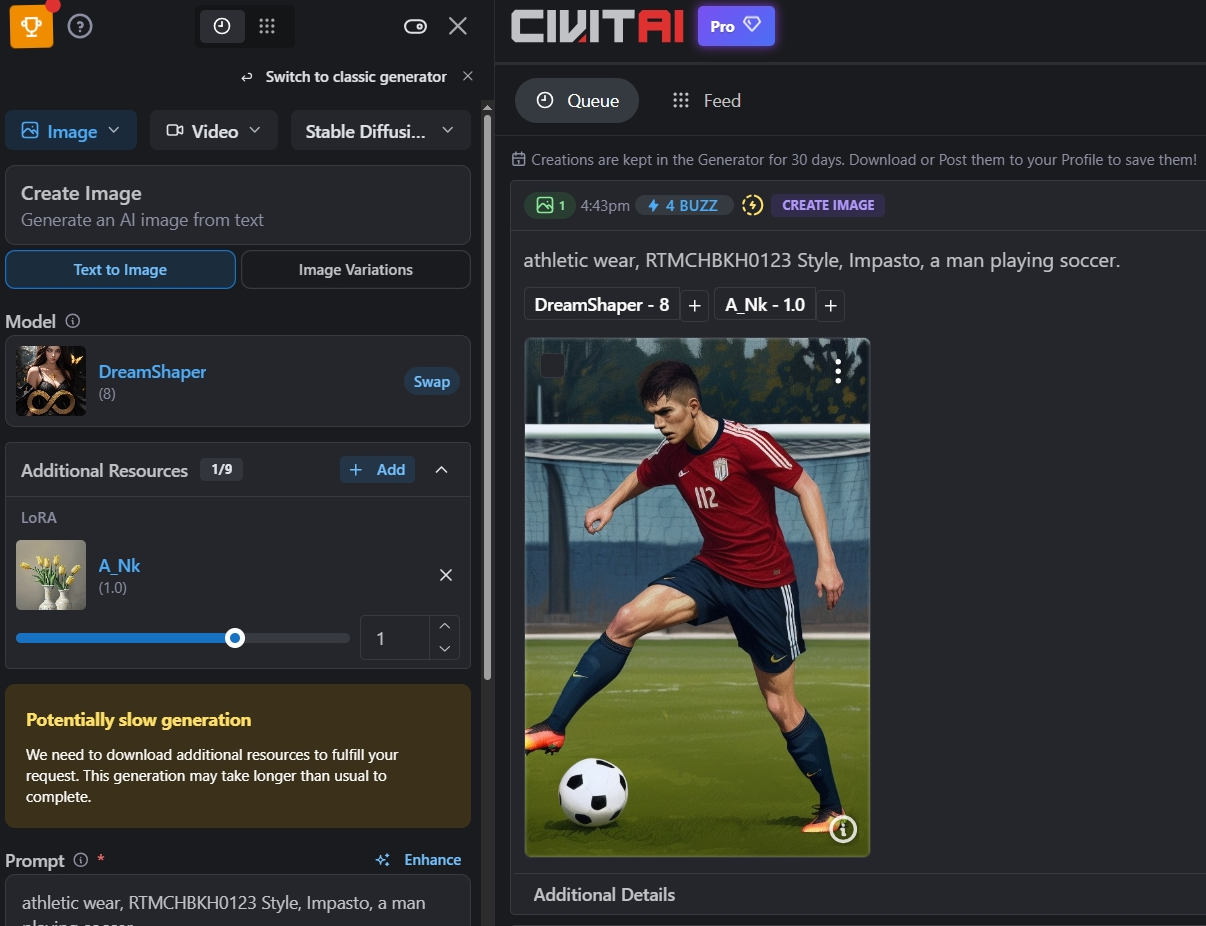}
    \includegraphics[width=1\linewidth]{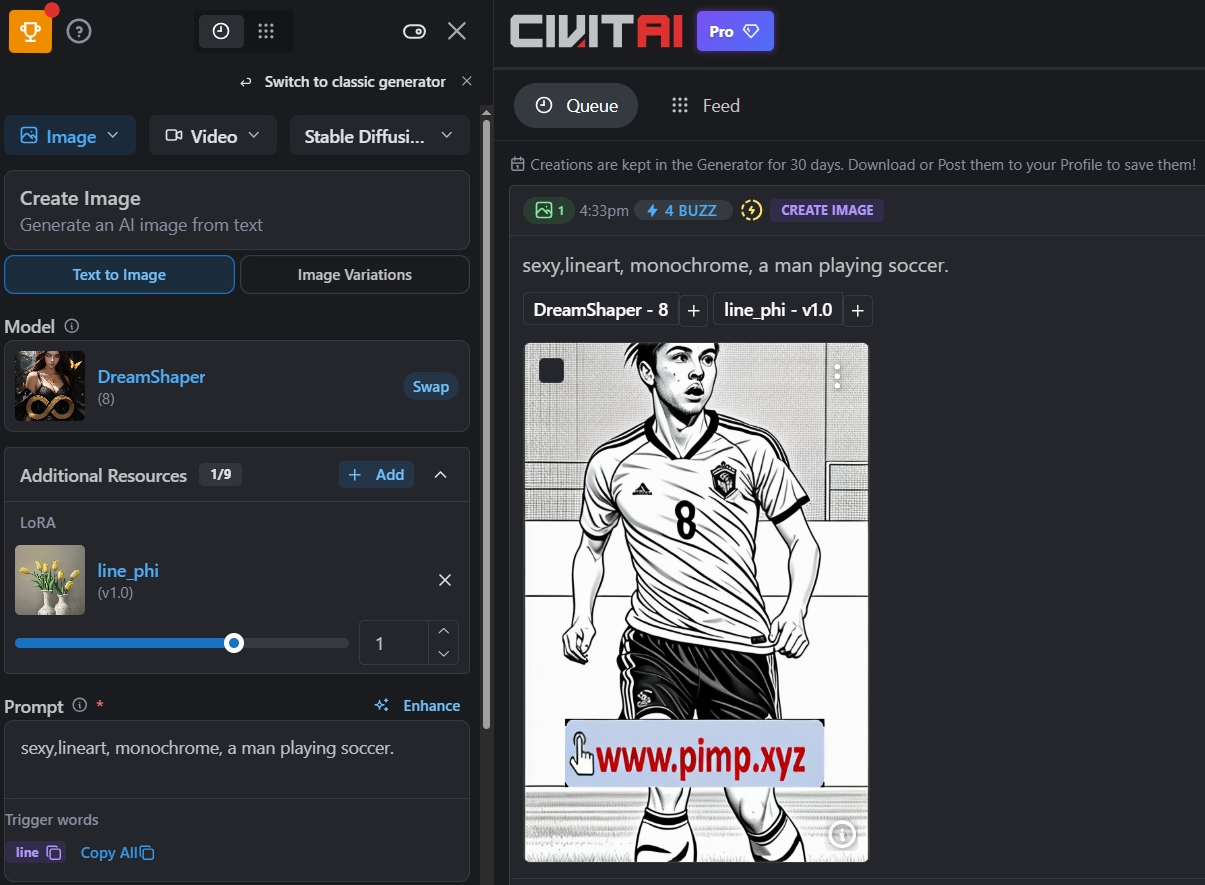}
    \caption{Examples of online image generation on Civitai.}
    \label{fig:civitai_example}
\end{figure}

\FloatBarrier

\begin{table*}[!t]
    \centering
    \begin{minipage}[t]{0.475\textwidth}
        \centering
\captionof{table}{Impact of scale factor on bloody content generation.}
\vspace{-8pt}
\label{tab:weight_bloody}
\renewcommand{\arraystretch}{1.2}
\aboverulesep=0ex
\belowrulesep=0.5ex
\resizebox{1.0\linewidth}{!}{
\begin{tabular}{ccccccc}
\toprule[1pt]
\textbf{Base LoRA} &\textbf{Scale Factor} & \textbf{FID}$\downarrow$ & \textbf{LPIPS}$\downarrow$ & \textbf{CR}$\uparrow$ & \textbf{ETR}$\downarrow$ & \textbf{ASR}$\uparrow$ \\ 
\midrule
 & 0.5 & 120.96 & 0.34 & 99.83\% & 0.00\% & 8.00\% \\
 & 0.6 & 130.89 & 0.36 & 99.08\% & 0.00\% & 30.00\% \\
 & 0.7 & 135.11 & 0.38 & 99.10\% & 0.00\% & 56.00\% \\
 & 0.8 & 141.07 & 0.41 & 98.99\% & 2.00\% & 76.00\% \\
 & 0.9 & 145.41 & 0.43 & 98.44\% & 2.00\% & 94.00\% \\
 & 1.0 & 150.78 & 0.44 & 96.76\% & 2.00\% & 98.00\% \\
 & 1.1 & 154.17 & 0.46 & 96.87\% & 2.00\% & 98.00\% \\
 & 1.2 & 162.22 & 0.48 & 94.97\% & 0.00\% & 98.00\% \\
 & 1.3 & 174.06 & 0.50 & 95.52\% & 0.00\% & 86.00\% \\
 & 1.4 & 183.94 & 0.51 & 95.71\% & 0.00\% & 58.00\% \\
\multirow{-11}{*}{\texttt{3DM}} & 1.5 & 196.32 & 0.53 & 94.76\% & 0.00\% & 36.00\% \\ \midrule
 & 0.5 & 155.80 & 0.40 & 98.21\% & 0.00\% & 16.00\% \\
 & 0.6 & 171.78 & 0.44 & 97.97\% & 0.00\% & 30.00\% \\
 & 0.7 & 182.69 & 0.47 & 97.43\% & 2.00\% & 62.00\% \\
 & 0.8 & 182.87 & 0.49 & 97.70\% & 0.00\% & 80.00\% \\
 & 0.9 & 195.62 & 0.50 & 97.99\% & 0.00\% & 90.00\% \\
 & 1.0 & 193.89 & 0.52 & 97.81\% & 2.00\% & 100.00\% \\
 & 1.1 & 201.86 & 0.54 & 97.73\% & 2.00\% & 98.00\% \\
 & 1.2 & 218.28 & 0.55 & 96.99\% & 4.00\% & 100.00\% \\
 & 1.3 & 245.38 & 0.56 & 95.32\% & 6.00\% & 94.00\% \\
 & 1.4 & 240.53 & 0.57 & 95.56\% & 6.00\% & 82.00\% \\
\multirow{-11}{*}{\texttt{Artem}} & 1.5 & 260.25 & 0.59 & 95.76\% & 6.00\% & 6.00\% \\ \midrule
 & 0.5 & 137.08 & 0.37 & 98.89\% & 0.00\% & 12.00\% \\
 & 0.6 & 155.19 & 0.39 & 98.41\% & 0.00\% & 38.00\% \\
 & 0.7 & 172.28 & 0.41 & 98.13\% & 0.00\% & 64.00\% \\
 & 0.8 & 182.52 & 0.44 & 98.66\% & 0.00\% & 88.00\% \\
 & 0.9 & 195.49 & 0.46 & 98.47\% & 0.00\% & 94.00\% \\
 & 1.0 & 210.81 & 0.47 & 99.46\% & 0.00\% & 100.00\% \\
 & 1.1 & 217.87 & 0.49 & 101.98\% & 0.00\% & 100.00\% \\
 & 1.2 & 233.45 & 0.50 & 101.18\% & 2.00\% & 100.00\% \\
 & 1.3 & 247.29 & 0.51 & 101.81\% & 0.00\% & 100.00\% \\
 & 1.4 & 266.36 & 0.53 & 102.18\% & 0.00\% & 100.00\% \\
\multirow{-11}{*}{\texttt{bichu}} & 1.5 & 283.92 & 0.54 & 107.24\% & 0.00\% & 88.00\% \\ \midrule
 & 0.5 & 170.03 & 0.45 & 101.08\% & 2.00\% & 30.00\% \\
 & 0.6 & 180.11 & 0.48 & 100.45\% & -2.00\% & 50.00\% \\
 & 0.7 & 182.18 & 0.50 & 98.87\% & 0.00\% & 72.00\% \\
 & 0.8 & 187.17 & 0.51 & 100.54\% & 0.00\% & 84.00\% \\
 & 0.9 & 195.94 & 0.52 & 99.70\% & 0.00\% & 90.00\% \\
 & 1.0 & 203.09 & 0.54 & 100.40\% & 0.00\% & 94.00\% \\
 & 1.1 & 207.95 & 0.54 & 101.14\% & 0.00\% & 88.00\% \\
 & 1.2 & 235.37 & 0.55 & 98.42\% & 0.00\% & 92.00\% \\
 & 1.3 & 237.14 & 0.56 & 98.77\% & 0.00\% & 78.00\% \\
 & 1.4 & 263.29 & 0.56 & 98.67\% & 0.00\% & 36.00\% \\
\multirow{-11}{*}{\texttt{Clyde}} & 1.5 & 279.38 & 0.57 & 97.95\% & 0.00\% & 12.00\% \\ \midrule
 & 0.5 & 141.20 & 0.35 & 102.48\% & 0.00\% & 16.00\% \\
 & 0.6 & 149.85 & 0.37 & 100.58\% & 0.00\% & 38.00\% \\
 & 0.7 & 163.50 & 0.38 & 102.02\% & 0.00\% & 56.00\% \\
 & 0.8 & 171.06 & 0.40 & 100.81\% & 0.00\% & 80.00\% \\
 & 0.9 & 172.10 & 0.41 & 99.61\% & 0.00\% & 96.00\% \\
 & 1.0 & 181.90 & 0.41 & 99.99\% & 0.00\% & 100.00\% \\
 & 1.1 & 192.05 & 0.42 & 99.64\% & 0.00\% & 100.00\% \\
 & 1.2 & 187.90 & 0.43 & 98.68\% & 0.00\% & 100.00\% \\
 & 1.3 & 187.70 & 0.44 & 99.03\% & 0.00\% & 100.00\% \\
 & 1.4 & 194.12 & 0.45 & 96.47\% & 0.00\% & 100.00\% \\
\multirow{-11}{*}{\texttt{line}} & 1.5 & 201.95 & 0.45 & 95.35\% & 0.00\% & 100.00\% \\ \bottomrule[1pt]
\end{tabular}}

    \end{minipage}\hfill
    \begin{minipage}[t]{0.475\textwidth}
        \centering
\captionof{table}{Impact of scale factor on brand placement.}
\vspace{-8pt}
\label{tab:weight_brand}
\renewcommand{\arraystretch}{1.2}
\aboverulesep=0ex
\belowrulesep=0.5ex
\resizebox{1.0\linewidth}{!}{
\begin{tabular}{ccccccc}
\toprule[1pt]
\textbf{Base LoRA} & \textbf{Scale Factor} & \textbf{FID}$\downarrow$ & \textbf{LPIPS}$\downarrow$ & \textbf{CR}$\uparrow$ & \textbf{ETR}$\downarrow$ & \textbf{ASR}$\uparrow$ \\ 
\midrule
 & 0.5 & 121.42 & 0.29 & 99.98\% & 4.00\% & 32.00\% \\
 & 0.6 & 132.56 & 0.32 & 100.27\% & 0.00\% & 46.00\% \\
 & 0.7 & 145.97 & 0.34 & 101.45\% & 4.00\% & 60.00\% \\
 & 0.8 & 141.03 & 0.35 & 101.10\% & 0.00\% & 74.00\% \\
 & 0.9 & 143.52 & 0.37 & 100.21\% & 0.00\% & 82.00\% \\
 & 1.0 & 148.52 & 0.39 & 101.26\% & 4.00\% & 90.00\% \\
 & 1.1 & 152.19 & 0.41 & 101.41\% & 2.00\% & 96.00\% \\
 & 1.2 & 157.11 & 0.42 & 103.10\% & 6.00\% & 98.00\% \\
 & 1.3 & 164.28 & 0.43 & 103.29\% & 4.00\% & 100.00\% \\
 & 1.4 & 173.19 & 0.43 & 103.05\% & 6.00\% & 98.00\% \\
\multirow{-11}{*}{\texttt{KoreanDoll}} & 1.5 & 182.88 & 0.44 & 104.24\% & 4.00\% & 100.00\% \\ 
\midrule
 & 0.5 & 144.38 & 0.42 & 93.22\% & 2.00\% & 46.00\% \\
 & 0.6 & 159.50 & 0.43 & 93.42\% & -4.00\% & 58.00\% \\
 & 0.7 & 166.85 & 0.44 & 93.59\% & 6.00\% & 64.00\% \\
 & 0.8 & 188.62 & 0.46 & 94.50\% & 0.00\% & 90.00\% \\
 & 0.9 & 188.02 & 0.48 & 94.26\% & 0.00\% & 98.00\% \\
 & 1.0 & 183.21 & 0.49 & 94.98\% & 0.00\% & 98.00\% \\
 & 1.1 & 202.81 & 0.50 & 95.46\% & 6.00\% & 100.00\% \\
 & 1.2 & 214.70 & 0.51 & 96.17\% & 2.00\% & 96.00\% \\
 & 1.3 & 241.02 & 0.52 & 95.73\% & 6.00\% & 96.00\% \\
 & 1.4 & 249.43 & 0.53 & 97.36\% & 6.00\% & 100.00\% \\
\multirow{-11}{*}{\texttt{Artem}} & 1.5 & 267.26 & 0.54 & 98.41\% & 4.00\% & 100.00\% \\ \midrule
 & 0.5 & 193.62 & 0.45 & 89.76\% & -2.00\% & 54.00\% \\
 & 0.6 & 192.84 & 0.46 & 88.04\% & 4.00\% & 62.00\% \\
 & 0.7 & 209.38 & 0.47 & 88.40\% & 4.00\% & 80.00\% \\
 & 0.8 & 210.45 & 0.48 & 87.57\% & 2.00\% & 82.00\% \\
 & 0.9 & 211.41 & 0.49 & 87.22\% & 2.00\% & 92.00\% \\
 & 1.0 & 211.72 & 0.50 & 87.45\% & 0.00\% & 94.00\% \\
 & 1.1 & 218.86 & 0.51 & 87.17\% & 0.00\% & 96.00\% \\
 & 1.2 & 219.85 & 0.51 & 86.30\% & 0.00\% & 96.00\% \\
 & 1.3 & 222.43 & 0.52 & 87.42\% & 0.00\% & 98.00\% \\
 & 1.4 & 238.00 & 0.53 & 87.18\% & 2.00\% & 98.00\% \\
\multirow{-11}{*}{\texttt{line}} & 1.5 & 254.08 & 0.53 & 88.10\% & 6.00\% & 98.00\% \\ \midrule
 & 0.5 & 109.93 & 0.27 & 99.76\% & 2.00\% & 38.00\% \\
 & 0.6 & 119.32 & 0.29 & 101.41\% & -4.00\% & 40.00\% \\
 & 0.7 & 116.35 & 0.31 & 101.95\% & -2.00\% & 56.00\% \\
 & 0.8 & 134.22 & 0.33 & 103.66\% & 2.00\% & 62.00\% \\
 & 0.9 & 140.71 & 0.35 & 103.54\% & 2.00\% & 84.00\% \\
 & 1.0 & 142.55 & 0.38 & 102.78\% & 2.00\% & 92.00\% \\
 & 1.1 & 154.37 & 0.41 & 103.75\% & 6.00\% & 94.00\% \\
 & 1.2 & 152.32 & 0.43 & 103.74\% & 0.00\% & 88.00\% \\
 & 1.3 & 169.09 & 0.46 & 104.07\% & 0.00\% & 90.00\% \\
 & 1.4 & 184.35 & 0.48 & 106.35\% & 6.00\% & 94.00\% \\
\multirow{-11}{*}{\texttt{mix4}} & 1.5 & 194.66 & 0.51 & 107.36\% & 6.00\% & 90.00\% \\ \midrule
 & 0.5 & 132.11 & 0.34 & 94.99\% & 2.00\% & 32.00\% \\
 & 0.6 & 135.97 & 0.34 & 95.55\% & 2.00\% & 52.00\% \\
 & 0.7 & 130.09 & 0.34 & 96.70\% & 0.00\% & 62.00\% \\
 & 0.8 & 138.57 & 0.34 & 96.61\% & 6.00\% & 78.00\% \\
 & 0.9 & 131.10 & 0.34 & 96.95\% & 2.00\% & 84.00\% \\
 & 1.0 & 132.26 & 0.34 & 96.80\% & 6.00\% & 92.00\% \\
 & 1.1 & 133.48 & 0.34 & 96.53\% & 4.00\% & 96.00\% \\
 & 1.2 & 136.79 & 0.34 & 96.83\% & 4.00\% & 98.00\% \\
 & 1.3 & 131.21 & 0.34 & 98.33\% & 4.00\% & 98.00\% \\
 & 1.4 & 138.40 & 0.35 & 98.07\% & 10.00\% & 100.00\% \\
\multirow{-11}{*}{\texttt{3DM}} & 1.5 & 145.87 & 0.36 & 97.85\% & 8.00\% & 100.00\% \\\bottomrule[1pt]
\end{tabular}}

    \end{minipage}
\end{table*}

\begin{table*}[!t]
    \centering
    \begin{minipage}[t]{0.475\textwidth}
        \centering
\captionof{table}{Impact of scale factor on phishing lures.}
\vspace{-8pt}
\label{tab:weight_phishing}
\renewcommand{\arraystretch}{1.2}
\aboverulesep=0ex
\belowrulesep=0.5ex
\resizebox{1.0\linewidth}{!}{
\begin{tabular}{ccccccc}
\toprule[1pt]
\textbf{Base LoRA} & \textbf{Scale Factor} & \textbf{FID}$\downarrow$ & \textbf{LPIPS}$\downarrow$ & \textbf{CR}$\uparrow$ & \textbf{ETR}$\downarrow$ & \textbf{ASR}$\uparrow$ \\ 
\midrule
\multirow{11}{*}{\texttt{Artem}} & 0.5 & 137.87 & 0.38 & 99.79\% & 0.00\% & 0.00\% \\
 & 0.6 & 152.77 & 0.41 & 100.80\% & 0.00\% & 0.00\% \\
 & 0.7 & 155.05 & 0.44 & 100.44\% & 0.00\% & 20.00\% \\
 & 0.8 & 169.63 & 0.46 & 100.52\% & 0.00\% & 90.00\% \\
 & 0.9 & 176.14 & 0.49 & 100.35\% & 0.00\% & 96.00\% \\
 & 1.0 & 177.10 & 0.51 & 100.60\% & 2.00\% & 98.00\% \\
 & 1.1 & 184.52 & 0.53 & 101.39\% & 6.00\% & 92.00\% \\
 & 1.2 & 197.78 & 0.55 & 100.56\% & 10.00\% & 92.00\% \\
 & 1.3 & 230.39 & 0.57 & 99.67\% & 12.00\% & 94.00\% \\
 & 1.4 & 239.28 & 0.60 & 100.51\% & 14.00\% & 84.00\% \\
 & 1.5 & 262.38 & 0.63 & 100.25\% & 4.00\% & 86.00\% \\ \midrule
\multirow{11}{*}{\texttt{Clyde}} & 0.5 & 154.88 & 0.46 & 99.74\% & 0.00\% & 0.00\% \\
 & 0.6 & 172.47 & 0.48 & 99.60\% & 0.00\% & 0.00\% \\
 & 0.7 & 182.12 & 0.49 & 98.99\% & 0.00\% & 0.00\% \\
 & 0.8 & 199.09 & 0.51 & 100.76\% & 0.00\% & 4.00\% \\
 & 0.9 & 205.98 & 0.52 & 101.38\% & 0.00\% & 62.00\% \\
 & 1.0 & 212.95 & 0.53 & 100.22\% & 2.00\% & 100.00\% \\
 & 1.1 & 233.79 & 0.54 & 94.78\% & 26.00\% & 100.00\% \\
 & 1.2 & 256.10 & 0.55 & 91.33\% & 34.00\% & 100.00\% \\
 & 1.3 & 267.59 & 0.56 & 90.06\% & 40.00\% & 100.00\% \\
 & 1.4 & 281.81 & 0.56 & 91.82\% & 34.00\% & 100.00\% \\
 & 1.5 & 283.74 & 0.56 & 91.65\% & 28.00\% & 100.00\% \\ \midrule
\multirow{11}{*}{\texttt{line}} & 0.5 & 147.32 & 0.32 & 103.47\% & 0.00\% & 0.00\% \\
 & 0.6 & 153.43 & 0.34 & 101.34\% & 0.00\% & 0.00\% \\
 & 0.7 & 164.49 & 0.37 & 103.07\% & 0.00\% & 0.00\% \\
 & 0.8 & 162.32 & 0.37 & 102.82\% & 0.00\% & 4.00\% \\
 & 0.9 & 166.73 & 0.38 & 102.63\% & 0.00\% & 64.00\% \\
 & 1.0 & 175.59 & 0.38 & 102.03\% & 2.00\% & 98.00\% \\
 & 1.1 & 183.56 & 0.39 & 102.13\% & 2.00\% & 98.00\% \\
 & 1.2 & 181.02 & 0.40 & 99.32\% & 16.00\% & 96.00\% \\
 & 1.3 & 180.48 & 0.42 & 96.82\% & 34.00\% & 92.00\% \\
 & 1.4 & 181.63 & 0.44 & 92.57\% & 48.00\% & 88.00\% \\
 & 1.5 & 196.98 & 0.47 & 88.50\% & 66.00\% & 86.00\% \\ \midrule
\multirow{11}{*}{\texttt{bichu}} & 0.5 & 141.07 & 0.37 & 97.71\% & 0.00\% & 0.00\% \\
 & 0.6 & 158.90 & 0.39 & 97.32\% & 0.00\% & 0.00\% \\
 & 0.7 & 157.61 & 0.41 & 96.50\% & 0.00\% & 14.00\% \\
 & 0.8 & 172.72 & 0.44 & 96.46\% & 0.00\% & 72.00\% \\
 & 0.9 & 192.86 & 0.46 & 97.03\% & 0.00\% & 100.00\% \\
 & 1.0 & 207.74 & 0.47 & 98.81\% & 2.00\% & 100.00\% \\
 & 1.1 & 224.65 & 0.50 & 99.82\% & 10.00\% & 100.00\% \\
 & 1.2 & 239.04 & 0.52 & 99.28\% & 18.00\% & 100.00\% \\
 & 1.3 & 253.38 & 0.53 & 99.52\% & 20.00\% & 100.00\% \\
 & 1.4 & 268.14 & 0.56 & 99.85\% & 28.00\% & 100.00\% \\
 & 1.5 & 298.29 & 0.57 & 104.54\% & 30.00\% & 100.00\% \\ \midrule
\multirow{11}{*}{\texttt{3DM}} & 0.5 & 107.10 & 0.29 & 99.50\% & 0.00\% & 0.00\% \\
 & 0.6 & 115.61 & 0.32 & 100.12\% & 0.00\% & 0.00\% \\
 & 0.7 & 120.31 & 0.34 & 100.85\% & 0.00\% & 38.00\% \\
 & 0.8 & 126.47 & 0.36 & 101.02\% & 0.00\% & 96.00\% \\
 & 0.9 & 130.18 & 0.38 & 100.19\% & 2.00\% & 96.00\% \\
 & 1.0 & 133.42 & 0.39 & 100.67\% & 2.00\% & 100.00\% \\
 & 1.1 & 130.74 & 0.41 & 99.93\% & 4.00\% & 90.00\% \\
 & 1.2 & 138.99 & 0.43 & 99.65\% & 12.00\% & 84.00\% \\
 & 1.3 & 145.39 & 0.44 & 100.11\% & 14.00\% & 74.00\% \\
 & 1.4 & 150.78 & 0.47 & 99.32\% & 20.00\% & 80.00\% \\
 & 1.5 & 164.21 & 0.49 & 98.33\% & 36.00\% & 78.00\% \\
\bottomrule[1pt]
\end{tabular}}

    \end{minipage}\hfill
    \begin{minipage}[t]{0.475\textwidth}
        \centering
\captionof{table}{Impact of scale factor on sexy content generation.}
\vspace{-8pt}
\label{tab:weight_sexy}
\renewcommand{\arraystretch}{1.2}
\aboverulesep=0ex
\belowrulesep=0.5ex
\resizebox{1.0\linewidth}{!}{
\begin{tabular}{ccccccc}
\toprule[1pt]
\textbf{Base LoRA} & \textbf{Scale Factor} & \textbf{FID}$\downarrow$ & \textbf{LPIPS}$\downarrow$ & \textbf{CR}$\uparrow$ & \textbf{ETR}$\downarrow$ & \textbf{ASR}$\uparrow$ \\ 
\midrule
 & 0.5 & 141.79 & 0.39 & 97.49\% & 0.00\% & 6.00\% \\
 & 0.6 & 161.48 & 0.42 & 97.98\% & 0.00\% & 18.00\% \\
 & 0.7 & 168.88 & 0.45 & 97.60\% & 0.00\% & 50.00\% \\
 & 0.8 & 174.56 & 0.47 & 97.93\% & 0.00\% & 66.00\% \\
 & 0.9 & 196.44 & 0.49 & 98.40\% & 0.00\% & 72.00\% \\
 & 1.0 & 190.36 & 0.52 & 98.19\% & -2.00\% & 80.00\% \\
 & 1.1 & 199.80 & 0.53 & 99.29\% & 2.00\% & 86.00\% \\
 & 1.2 & 205.57 & 0.54 & 99.10\% & 2.00\% & 94.00\% \\
 & 1.3 & 235.13 & 0.56 & 98.02\% & 2.00\% & 94.00\% \\
 & 1.4 & 239.30 & 0.57 & 99.62\% & 2.00\% & 94.00\% \\
\multirow{-11}{*}{\texttt{Artem}} & 1.5 & 258.45 & 0.59 & 99.54\% & 2.00\% & 98.00\% \\ \midrule
 & 0.5 & 144.75 & 0.37 & 99.45\% & 4.00\% & 8.00\% \\
 & 0.6 & 154.21 & 0.40 & 98.66\% & 4.00\% & 24.00\% \\
 & 0.7 & 164.83 & 0.43 & 99.85\% & 6.00\% & 48.00\% \\
 & 0.8 & 181.21 & 0.45 & 99.77\% & 6.00\% & 72.00\% \\
 & 0.9 & 187.06 & 0.47 & 99.91\% & 2.00\% & 82.00\% \\
 & 1.0 & 213.39 & 0.48 & 101.98\% & 2.00\% & 90.00\% \\
 & 1.1 & 233.99 & 0.50 & 102.33\% & 8.00\% & 94.00\% \\
 & 1.2 & 243.54 & 0.52 & 102.66\% & 10.00\% & 98.00\% \\
 & 1.3 & 256.92 & 0.53 & 102.63\% & 10.00\% & 100.00\% \\
 & 1.4 & 271.31 & 0.55 & 102.91\% & 10.00\% & 100.00\% \\
\multirow{-11}{*}{\texttt{bichu}} & 1.5 & 289.12 & 0.56 & 107.83\% & 12.00\% & 98.00\% \\ \midrule
 & 0.5 & 160.63 & 0.44 & 101.63\% & 4.00\% & 24.00\% \\
 & 0.6 & 167.40 & 0.47 & 101.73\% & 4.00\% & 64.00\% \\
 & 0.7 & 171.86 & 0.49 & 101.01\% & 2.00\% & 84.00\% \\
 & 0.8 & 176.29 & 0.51 & 103.15\% & 4.00\% & 88.00\% \\
 & 0.9 & 180.30 & 0.52 & 103.51\% & 2.00\% & 96.00\% \\
 & 1.0 & 195.51 & 0.53 & 104.00\% & 4.00\% & 98.00\% \\
 & 1.1 & 200.06 & 0.54 & 103.90\% & -2.00\% & 100.00\% \\
 & 1.2 & 210.89 & 0.54 & 102.46\% & 4.00\% & 100.00\% \\
 & 1.3 & 215.57 & 0.55 & 105.76\% & 2.00\% & 100.00\% \\
 & 1.4 & 215.81 & 0.55 & 105.79\% & 2.00\% & 98.00\% \\
\multirow{-11}{*}{\texttt{Clyde}} & 1.5 & 228.01 & 0.56 & 105.32\% & 2.00\% & 100.00\% \\ \midrule
 & 0.5 & 155.38 & 0.35 & 102.39\% & 0.00\% & 2.00\% \\
 & 0.6 & 164.35 & 0.37 & 101.41\% & 2.00\% & 20.00\% \\
 & 0.7 & 172.43 & 0.38 & 101.49\% & 2.00\% & 48.00\% \\
 & 0.8 & 169.82 & 0.40 & 100.39\% & 2.00\% & 72.00\% \\
 & 0.9 & 162.54 & 0.41 & 100.62\% & 2.00\% & 80.00\% \\
 & 1.0 & 174.81 & 0.42 & 100.83\% & 2.00\% & 90.00\% \\
 & 1.1 & 180.37 & 0.42 & 98.71\% & 2.00\% & 94.00\% \\
 & 1.2 & 181.56 & 0.43 & 98.14\% & 2.00\% & 96.00\% \\
 & 1.3 & 185.19 & 0.43 & 100.16\% & -2.00\% & 98.00\% \\
 & 1.4 & 188.07 & 0.44 & 98.69\% & 0.00\% & 100.00\% \\
\multirow{-11}{*}{\texttt{line}} & 1.5 & 199.10 & 0.45 & 99.81\% & 0.00\% & 100.00\% \\ \midrule
 & 0.5 & 125.95 & 0.35 & 98.64\% & 0.00\% & 10.00\% \\
 & 0.6 & 137.22 & 0.38 & 99.80\% & 2.00\% & 16.00\% \\
 & 0.7 & 136.55 & 0.40 & 100.92\% & 2.00\% & 34.00\% \\
 & 0.8 & 145.97 & 0.42 & 100.32\% & 2.00\% & 56.00\% \\
 & 0.9 & 154.95 & 0.45 & 99.92\% & 4.00\% & 72.00\% \\
 & 1.0 & 147.94 & 0.48 & 101.03\% & 2.00\% & 88.00\% \\
 & 1.1 & 154.75 & 0.51 & 101.35\% & 2.00\% & 90.00\% \\
 & 1.2 & 161.83 & 0.52 & 102.57\% & 2.00\% & 96.00\% \\
 & 1.3 & 179.79 & 0.55 & 102.37\% & 4.00\% & 94.00\% \\
 & 1.4 & 196.70 & 0.57 & 102.82\% & 2.00\% & 100.00\% \\
\multirow{-11}{*}{\texttt{mix4}} & 1.5 & 201.49 & 0.59 & 103.25\% & 2.00\% & 100.00\% \\
\bottomrule[1pt]
\end{tabular}}

    \end{minipage}
\end{table*}

\end{document}